\newcommand*{\addFileDependency}[1]{% argument=file name and extension
  \typeout{(#1)}
  \@addtofilelist{#1}
  \IfFileExists{#1}{}{\typeout{No file #1.}}
}
\newcommand{\SP}{\mathsf{SP}^k X}
\newcommand{\LB}{\mathcal{LB}}
\newcommand{\x}{\mathbf{x}}
\newcommand{\y}{\mathbf{y}}
\newcommand{\supp}{\mathsf{supp}}
\newcommand{\s}{\mathbf{s}}
\newcommand{\Fre}{Fr\'{e}chet\xspace}
\newtheorem{Theorem}{Theorem}
\newtheorem{Proposition}[Theorem]{Proposition}
\newtheorem{Remark}[Theorem]{Remark}
\newtheorem{Observation}[Theorem]{Observation}
\newtheorem{Corollary}[Theorem]{Corollary}
\newtheorem{Lemma}[Theorem]{Lemma}
\newtheorem{Example}[Theorem]{Example}
\theoremstyle{definition}
\newtheorem{Definition}[Theorem]{Definition}
\setlist[enumerate]{leftmargin=.5in}
\setlist[itemize]{leftmargin=.5in}
\crefname{hypothesis}{Hypothesis}{Hypotheses}
\title{Geometric averages of partitioned datasets\thanks{\funding{The second author was supported by NSF grant OIA-1937095.}}}
\author{Tom Needham\thanks{Florida State University 
  (\email{tneedham@fsu.edu}).}
\and Thomas Weighill\thanks{The University of North Carolina at Greensboro 
  (\email{t\_weighill@uncg.edu}).}}
\begin{document}

\maketitle

% REQUIRED
\begin{abstract}
  We introduce a method for jointly registering ensembles of partitioned datasets in a way which is both geometrically coherent and partition-aware. Once such a registration has been defined, one can group partition blocks across datasets in order to extract summary statistics, generalizing the commonly used order statistics for scalar-valued data. By modeling a partitioned dataset as an unordered $k$-tuple of points in a Wasserstein space, we are able to draw from techniques in optimal transport. More generally, our method is developed using the formalism of local Fr\'{e}chet means in symmetric products of metric spaces. We establish basic theory in this general setting, including Alexandrov curvature bounds and a verifiable characterization of local means. Our method is demonstrated on ensembles of political redistricting plans to extract and visualize basic properties of the space of plans for a particular state, using North Carolina as our main example. 
\end{abstract}

% REQUIRED
\begin{keywords}
  Wasserstein space, barycenter, symmetric product, clustering, redistricting
\end{keywords}

% REQUIRED
\begin{AMS}
  62R20, 51F99
\end{AMS}

\section{Introduction}

Clustering of data is a fundamental task in unsupervised machine learning which searches for a partitioning of a given dataset which is  optimal with respect to a given objective. This paper introduces statistical methods for the study of ensembles of datasets, such that each dataset comes with a predefined partition into $k$ subsets, say, via some clustering algorithm. In particular, we consider the problem of computing the mean (or \emph{barycenter}) of such an ensemble---this yields a mean of the underlying datasets overlaid with a mean partitioning. This framework has general applications to comparison of clusterings, for instance allowing one to quantify the stability of a clustering algorithm with respect to perturbations of an underlying dataset or to fuse distributed data which has been pre-clustered; see~\cite{wagner2007comparing} for more examples. Our primary motivation is an application to \emph{political redistricting}, where each dataset in the ensemble consists of a \emph{districting plan}, or a partitioning of a given geographical region into districts. Once the mean of an ensemble of partitioned datasets has been computed, the partition blocks of each dataset can be assigned labels $\{1,2,\ldots,k\}$ by registering to the mean, allowing for partition-aware statistical analysis of the ensemble. This generalizes the classical idea of order statistics of an ensemble of scalar-valued datasets (see Example \ref{orderstats} for details). In our redistricting application, this registration allows for direct comparison of demographic and political statistics of districts across different plans.

A $k$-partitioned dataset can be modeled as an unordered $k$-tuple of distributions on the data space; in other words, an unordered $k$-tuple of points in the associated Wasserstein space. With a view toward a general theory, we develop our approach in the context of unordered $k$-tuples of points in an arbitrary metric space $X$. This space of $k$-tuples is referred to as the symmetric product $\SP$ (also called the \emph{sample space} in \cite{harms2020geometry}), and is simply the quotient of the space of ordered $k$-tuples $X^k$ by the order-permuting action of the symmetric group $S_k$. The main examples we are interested in are when $X$ is $\mathbb{R}^n$, Wasserstein space $W_2(\mathbb{R}^n)$, a manifold $M$, or $\mathsf{SP}^m Y$ where $Y$ is one of these spaces (that is, we consider $\mathsf{SP}^k \mathsf{SP}^m Y)$. Our goal is then to study theoretical and computational aspects of the computation of means or barycenters of subsets $S \subset \SP$. We show that under mild assumptions, $\SP$ has curvature unbounded from above (Theorem \ref{thm:curvature}), so that general existence and uniqueness results for barycenters do not directly apply. We can nonetheless characterize local barycenters of subsets $S \subseteq \SP$ (Theorem \ref{stationary}) and we prove that, for many spaces of interest, the labeling of the points in $S$ given by a best matching to a local barycenter is unique (Corollary \ref{onepointcases}). This allows us to implement an algorithm (Algorithm \ref{algorithm}) for computing or approximating local barycenters in $\SP$. 

As was mentioned above, our target application in this paper is \emph{political redistricting}: the process of dividing up a territory into pieces for the purpose of electing representatives. For example, in the United States every state is divided up into a number of Congressional districts roughly proportional to its population, with one member of the U.S.~House of Representatives being elected from each of these districts. Applying our theory and Algorithm \ref{algorithm} produces a new method for visualizing and analyzing large ensembles of computer-generated redistricting plans. The analysis of large ensembles of redistricting plans has recently become prominent in research and litigation surrounding redistricting and gerrymandering. Using our method, we are able to label the districts in thousands of computer-generated redistricting plans in a coherent way and then examine the political and geographic features of the districts assigned to a given label. We demonstrate the value in this approach by comparing Congressional district-level election outcomes for two elections in North Carolina. We also analyze enacted and proposed plans within our framework in a way that complements recent work on quantifying gerrymandering in North Carolina and which answers a clear need for ``local analysis'' \cite{mattingly} of proposed maps.

The outline of the paper is as follows. In Section \ref{sec:basics} we define the symmetric product space and establish some of its geometric properties. In Section \ref{sec:barycenters} we develop the theory of $p$-barycenters in symmetric product spaces. This theory is used to formulate an algorithm for computing local $p$-barycenters. Section \ref{sec:examples} gives some simple examples of this algorithm in practice. Section \ref{sec:redistrict} contains the application to redistricting ensembles. We conclude this introductory section with a survey of related work.

\paragraph*{Optimal transport}
Optimal transport (OT) problems consist of finding the best way to transport a source distribution to a target distribution within a metric space. This problem was first posed by Monge in the eighteenth century \cite{monge1781memoire} and was reformulated in the 1940s by Kantorovich \cite{kantorovich1942translocation} as a linear program, leading to significant progress and interest. Today, OT-based methods are applied in fields including statistics, machine learning, computer graphics and economics---see general references \cite{peyre2019computational,villani2003topics} for details of theoretical and computational aspects of OT.
Optimal transport connects with this paper in two ways. Firstly, Wasserstein space (the space of distributions endowed with an optimal transport distance between them) provides a key example of a space $X$ for which we want to study $\SP$. Secondly, $\SP$ itself can be identified with a subset of Wasserstein space over $X$ (Proposition \ref{prop:isometric_embedding}), so our work can be considered as finding barycenters in a subset of Wasserstein space over a complicated underlying space. Barycenters in Wasserstein space have already been applied in areas such as texture analysis \cite{rabin2011wasserstein}, shape interpolation \cite{solomon2015convolutional} and color transfer \cite{ferradans2014regularized}. Wasserstein barycenters were introduced and studied for Euclidean spaces in \cite{agueh2011barycenters}, sparking a surge of interest in their theory, as well as methods for computing or approximating them \cite{chewi2020gradient,claici2018stochastic,cuturi2014fast,puccetti2020computation, yang2021fast}. Most relevant for the present paper is the theory developed for the manifold setting~\cite{kim2017wasserstein} and the Euclidean discrete case~\cite{anderes2016discrete}, as well as the exact and regularized algorithms in   \cite{cuturi2014fast}. As mentioned above, the work in this paper can be formulated as finding barycenters in some subset of Wasserstein space, but we require more complicated underlying spaces (such as another Wasserstein space) and the use of local barycenters, which forces us to develop new theory specific to these spaces that is not currently found in the literature. A related thread is the theory and computation of barycenters of sets of persistence diagrams~\cite{chowdhury2019geodesics,turner2014frechet}, which treats barycenters in a metric space with curvature unbounded from above.

\paragraph*{Redistricting ensembles} A key question in redistricting research is to determine whether a proposed or enacted redistricting plan is a gerrymander or not---that is, was some agenda other than the basic redistricting requirements of the state driving the line-drawing? A prominent approach in both research and litigation is to compare the plan to a large ensemble of alternatives generated by an algorithm that takes into account some or all of the redistricting criteria for the particular state and level of government \cite{chikina2017assessing,herschlag2017evaluating,Mattingly2018, deford2019recom, deford2019redistricting, bangia2017redistricting, chen2013unintentional, chen2015cutting}. These ensembles are designed to represent the intractably large set of possible alternative plans, and are necessarily very geographically diverse. Typically, these ensembles are analyzed (and compared against the plan being evaluated) at the level of summary statistics -- for example, the number of Republican seats won under historical vote data. Two recent papers have also analyzed spatial characteristics of redistricting ensembles: via graph optimal transport \cite{abrishami2020geometry} and topological data analysis \cite{duchinneedham21}. Our method allows us to combine a geometric perspective in line with these two papers with a classical summary statistics approach.

\paragraph*{Geometry of symmetric product spaces}
The main theoretical object of study in this paper is the $k$-fold symmetric product $\SP$. The recent paper \cite{harms2020geometry} studies the geometry of this space with the $W_p$ metrics defined in the next section. In particular, they show that $\SP$ is a stratified space and that a Fr\'{e}chet mean (with respect to the metric on $X$) of a subset $S \subseteq X$ of size $k$ is a projection of the corresponding point of $\SP$ onto its lowest dimensional stratum. Our work, by contrast, studies barycenters of subsets of $\SP$ relative to the $W_p$ metric on $\SP$. We should also mention the notion of unordered configuration space (see, e.g.,~\cite{ghrist2010configuration}), which is the proper subspace of $\SP$ consisting of points with $k$ distinct entries, and which sometimes appears in applications to robotics.

\section{Symmetric products and the $W_p$ distance}\label{sec:basics}

In this section, we formally define the symmetric product metric and establish some of its basic properties.

\subsection{The $W_p$ metric}\label{sec:symmprod}

We use the notation $\langle k \rangle = \{1,\ldots,k\}$ and denote the $i^{th}$ coordinate of an (ordered) tuple $\mathbf{v}$ by $\mathbf{v}_i$. For context, we recall the definition of Wasserstein distance. 

\begin{Definition}
Let $X$ be a metric space on which every finite Borel measure is a Radon measure. Given two Borel probability measures $\alpha$ and $\beta$, the \emph{$p$-Wasserstein distance} is defined by
\[
W_p(\alpha, \beta) = \left( \inf_{\gamma \in \mathcal{U}(\alpha,\beta)} \int d(x,y)^p d\gamma(x,y) \right)^{1/p},
\]
where $\mathcal{U}(\alpha,\beta)$ is the set of Borel measures on $X^2$ with marginals $\alpha$ and $\beta$. 
\end{Definition}

If $\alpha = \sum_{i=1}^m \mathbf{a}_i \delta_{x_i}$ and $\beta = \sum_{i=1}^n \mathbf{b}_i \delta_{y_i}$ are discrete, then we can equivalently write
\[
W_p(\alpha, \beta) = \left( \min_{P \in U(\mathbf{a}, \mathbf{b})} \sum_{i,j} d(x_i, y_j)^p P_{ij} \right)^{1/p},
\]
where $P$ ranges over the set $U(\mathbf{a},\mathbf{b})$ of $n\times m$ matrices such that $P \mathds{1} = \mathbf{a}$ and $P^T \mathds{1} = \mathbf{b}$, with $\mathds{1}$ denoting the column vector of the appropriate size with all entries equal to one. Throughout this paper, we will denote by $W_p(X)$ the \emph{$p$-Wasserstein space over $X$} -- that is, the set of Borel probability measures on $X$ with finite $p^{th}$ moment endowed with the $p$-Wasserstein metric. 

We now introduce the main theoretical context for studying unordered data. 

\begin{Definition}
Let $X$ be a set. The symmetric group $S_k$ of bijections $\phi:\langle k \rangle \to \langle k \rangle$ acts on the product set $X^k$ by permuting entries of ordered $k$-tuples. We denote the action of a bijection $\pi \in S_k$ on $\x \in X^k$ by $\pi \x$; this action is given explicitly by the formula $(\pi \x)_i = \x_{\pi(i)}$. The \emph{$k$-fold symmetric product of $X$} is the quotient of $X^k$ by the action of the symmetric group $S_k$, denoted
\[
\SP := X^k/S_k.
\]
We denote equivalence under this $S^k$-action by $\x \sim \x'$ and we denote the equivalence class of $\x \in X^k$ by $[\x]$.
\end{Definition}

If $(X,d)$ is a metric space then  $\SP$ comes with a natural family of metrics.

\begin{Definition}
Let $(X,d)$ be a metric space and let $p \geq 1$. Define the \emph{$p$-Wasserstein distance} on $\SP$ as follows. Let $\mathbf{x} = (x_1,\ldots,x_k)$ and $\mathbf{y} = (y_1,\ldots,y_k)$. Then

\begin{equation}\label{Wp_metric}
    W_p([\x], [\y])^p  := \min_{\pi} \frac{1}{k}\sum_{i=1}^k d(x_i, y_{\pi(i)})^p 
\end{equation}
where the minimum ranges over all bijections $\pi: \langle k \rangle \to \langle k \rangle$. We call a bijection realizing this minimum an \emph{optimal matching from $\x$ to $\y$}.
\end{Definition}

Let $d_{\ell^p}$ denote the \emph{$\ell^p$-metric} on $X^k$, given by
\[
d_{\ell^p}(\x, \y)^p = \sum_{i=1}^{k} d(\x_i,\y_i)^p.
\]
Using the fact that $S_k$ acts by isometries on $d_{\ell^p}$, we have 
\[
W_p([\x], [\y]) = \min_{\x' \sim \x,\ \y'\sim \y} \frac{1}{k} d_{\ell^p}(\x', \y') = \min_{\y'\sim \y} \frac{1}{k} d_{\ell^p}(\x, \y').
\]
This relation makes it easy to check that $W_p$ is indeed a metric, and that it induces the quotient topology on $\SP$ when $X^k$ is endowed with the $\ell^p$ metric.

We next give a precise relationship between the Wasserstein $p$-metric on $\SP$ and the classical Wasserstein metric on $W_p(X)$. The result follows easily from the Birkhoff-von Neumann theorem; see \cite[Lemma 4.7]{harms2020geometry} for details.

\begin{Proposition}\label{prop:isometric_embedding}
The map
\begin{equation}\label{eqn:isometric_embedding}
\begin{aligned}
    \iota: \SP &\to W_p(X), \quad [\x] &\mapsto \frac{1}{k}\sum_{i = 1}^k \delta_{\x_i}
\end{aligned}
\end{equation}
is an isometric embedding of $(\SP,W_p)$ into Wasserstein space $W_p(X)$.
\end{Proposition}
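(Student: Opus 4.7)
The plan is to reduce the claim to the statement that, in the discrete Wasserstein problem between two uniform empirical measures on $k$ atoms, it suffices to optimize over permutations rather than general couplings. First I would verify that $\iota$ is well-defined on equivalence classes: if $\x' = \pi \x$ for some $\pi \in S_k$, then $\frac{1}{k}\sum_i \delta_{\x'_i} = \frac{1}{k}\sum_i \delta_{\x_{\pi(i)}} = \frac{1}{k}\sum_i \delta_{\x_i}$, since relabeling a finite sum does not change it. Injectivity then follows because two uniform empirical measures on $k$ atoms coincide if and only if their underlying multisets coincide, i.e.\ the tuples are related by a permutation.

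For the isometric property, fix representatives $\x, \y \in X^k$. Using the discrete formulation of the $p$-Wasserstein distance recalled in the previous subsection, I would write
\[
W_p(\iota[\x], \iota[\y])^p = \min_{P \in U(\mathbf{a},\mathbf{a})} \sum_{i,j} d(\x_i, \y_j)^p P_{ij},
\]
where $\mathbf{a} = (1/k, \ldots, 1/k)$. Rescaling by setting $Q = kP$, the feasible set becomes the Birkhoff polytope $B_k$ of $k \times k$ doubly stochastic matrices, and the objective becomes $\frac{1}{k}\sum_{i,j} d(\x_i,\y_j)^p Q_{ij}$.

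The key step is to invoke the Birkhoff--von Neumann theorem: $B_k$ is the convex hull of the $k!$ permutation matrices. Since the objective is linear in $Q$, its minimum over the polytope $B_k$ is attained at an extreme point, and hence at some permutation matrix $Q_\pi$. Plugging in $Q_\pi$ gives $\frac{1}{k}\sum_i d(\x_i, \y_{\pi(i)})^p$, so taking the minimum over $\pi \in S_k$ recovers precisely the defining expression \eqref{Wp_metric} for $W_p([\x],[\y])^p$.

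I do not anticipate a serious obstacle here; the only mild subtlety is the case when the $\x_i$ (or $\y_j$) are not all distinct, so that $\iota[\x]$ is supported on fewer than $k$ points with weights that are integer multiples of $1/k$. I would handle this by still treating the transport problem as an optimization over $k \times k$ matrices indexed by the tuple entries rather than by the distinct support points; the Birkhoff--von Neumann reduction then applies verbatim, and collapsing repeated atoms at the end does not affect the value of either side. This is essentially the bookkeeping carried out in \cite[Lemma 4.7]{harms2020geometry}.
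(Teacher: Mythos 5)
Your proof is correct and follows exactly the route the paper indicates: the paper states that the result ``follows easily from the Birkhoff--von Neumann theorem'' and defers the details to \cite[Lemma 4.7]{harms2020geometry}, and your argument---reducing the discrete transport problem between uniform empirical measures to a linear program over the Birkhoff polytope, extracting a permutation matrix at an extreme point, and handling repeated atoms by indexing couplings by tuple entries---is precisely that argument written out in full. No gaps; the multiplicity bookkeeping you flag is the only subtlety and you address it correctly.
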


\subsection{Geodesics and curvature in $\SP$}\label{sec:geodesics_and_curvature}

In this section we derive some results about the basic geometry of the metric space $(\SP,W_p)$. Along the way, we recall basic notions of metric geometry, following~\cite{bridson2013metric,burago2001course}.

Let $(X,d)$ be a metric space. A \emph{constant speed geodesic} in $X$ is a map $[a,b] \to X$ of some interval which is an isometric embedding up to a  multiplicative constant. The space $X$ is a \emph{geodesic space} if any two points $x,y \in X$ can be joined by a constant-speed geodesic path $\gamma:[0,1] \to X$ of length $d(x,y)$---that is, a path satisfying
\[
d(\gamma(s),\gamma(t)) = d(x,y) \cdot |s-t|, \qquad \gamma(0) = x, \; \gamma(1) = y.
\]
Such a path is called a \emph{minimal geodesic joining $x$ to $y$}. A geodesic $\gamma:[a,b] \to X$ \emph{branches at $t_0 \in (a,b)$} if there exists another geodesic $\bar{\gamma}:[a,b] \to X$ such that $\gamma|_{[a,t_0]} = \bar{\gamma}|_{[a,t_0]}$ but $\gamma \neq \bar{\gamma}$ on some interval $(t_0,t_0 + \varepsilon)$. If $X$ has no branching geodesics, we say that $X$ is \emph{non-branching}.

\begin{Proposition}\label{prop:geometric_facts}
Let $(X,d)$ be a geodesic metric space and endow $\SP$ with the $W_p$ metric.
\begin{enumerate}
    \item\label{item:isometric_embedding} The map 
                \begin{equation}\label{eqn:isometric_embedding_X}
                    \rho: X \to \SP, \qquad 
                    x \mapsto (x,\ldots,x)
                \end{equation}
                is an isometric embedding of $(X,d)$ into $(\SP,W_p)$.
    \item\label{item:geodesics} Let $\pi$ be an optimal matching of $\x,\y \in \SP$ and let
        $\gamma_i:[0,1] \to X$ be a minimal geodesic between $x_i$ and $y_{\pi(i)}$ in $X$. Define
        \[
        \gamma:[0,1] \to \SP, \qquad t \mapsto [\gamma_1(t),\ldots,\gamma_k(t)].
        \]
        Then $\gamma$ is a minimal geodesic in $(\SP,W_p)$ joining $[\x]$ to $[\y]$ for any $p \in [1,\infty]$. In particular, $\SP$ is a geodesic space. 
\end{enumerate}
\end{Proposition}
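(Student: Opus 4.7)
The plan is just to unpack the definition: for any $x,y \in X$ and any bijection $\pi \in S_k$ we have $d(x,y)^p$ as every summand in the expression for $W_p([\rho(x)],[\rho(y)])^p$, so the minimum over $\pi$ is $\frac{1}{k}\cdot k\cdot d(x,y)^p = d(x,y)^p$. Taking $p$-th roots gives $W_p([\rho(x)],[\rho(y)]) = d(x,y)$, and injectivity of $\rho$ is automatic from this. This part is a one-line calculation, so I would expect to dispatch it immediately.

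\textbf{Part 2 (geodesics).} For the concatenated path $\gamma$, the natural approach is a two-sided squeeze. For the upper bound, I plug the identity permutation into the definition of $W_p(\gamma(s),\gamma(t))$ as a candidate matching (\emph{not} necessarily optimal), using the fact that each coordinate $\gamma_i$ is a constant-speed geodesic in $X$ with speed $d(x_i,y_{\pi(i)})$. This gives
\begin{equation*}
W_p(\gamma(s),\gamma(t))^p \leq \frac{1}{k}\sum_{i=1}^k d(\gamma_i(s),\gamma_i(t))^p = |s-t|^p\cdot\frac{1}{k}\sum_{i=1}^k d(x_i,y_{\pi(i)})^p = |s-t|^p\,W_p([\x],[\y])^p,
\end{equation*}
where the last equality uses the assumption that $\pi$ is an \emph{optimal} matching from $\x$ to $\y$.

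For the matching lower bound, I apply the just-established upper bound on the pairs $(0,s)$ and $(t,1)$ (assuming $s\le t$), yielding $W_p([\x],\gamma(s))\le s\,W_p([\x],[\y])$ and $W_p(\gamma(t),[\y])\le(1-t)\,W_p([\x],[\y])$. The triangle inequality in $(\SP,W_p)$ then forces $W_p(\gamma(s),\gamma(t)) \geq (t-s)\,W_p([\x],[\y])$, so together both bounds give equality. Setting $s=0$, $t=1$ recovers $\gamma(0)=[\x]$, $\gamma(1)=[\y]$ and total length $W_p([\x],[\y])$, so $\gamma$ is a minimizing constant-speed geodesic. The fact that $\SP$ is a geodesic space follows since $X$ was assumed geodesic, so such $\gamma_i$ always exist.

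\textbf{Main obstacle.} There is no real obstacle for finite $p$; the only subtle point is the clause ``for any $p\in[1,\infty]$''. The definition of $W_p$ in the paper is given for $p\ge 1$ finite, but the same argument works verbatim for $p=\infty$ upon replacing $\bigl(\frac{1}{k}\sum\bigr)^{1/p}$ by $\max_i$: the identity-matching upper bound becomes $\max_i d(\gamma_i(s),\gamma_i(t)) = |s-t|\max_i d(x_i,y_{\pi(i)})$, and the triangle-inequality lower bound is insensitive to $p$. The only care needed throughout is to keep straight that ``optimality of $\pi$'' is used exactly once---in identifying the upper bound with $W_p([\x],[\y])$---so that the chosen matching on $(\gamma(s),\gamma(t))$ (again the identity) need not itself be optimal; the triangle inequality compensates.
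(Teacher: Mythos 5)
Your proposal is correct and follows essentially the same route as the paper: Part 1 is the same one-line computation, and for Part 2 the paper establishes exactly your one-sided bound $W_p(\gamma(s),\gamma(t))\le (t-s)\,W_p([\x],[\y])$ via the identity matching and then invokes a cited ``standard argument'' (the triangle-inequality squeeze) that you simply spell out explicitly. Your remarks on where optimality of $\pi$ is used and on the $p=\infty$ case match the paper's treatment.
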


\begin{Remark}
The map $X \to W_2(X)$ taking $x$ to the Dirac measure $\delta_x$
is well known to be an isometric embedding \cite[Proposition 2.10]{sturm2006geometry}. This map factors as the composition $\iota \circ \rho$, where $\rho:X \to \SP$ is the isometric embedding defined above in \eqref{eqn:isometric_embedding_X} and $\iota:\SP \to W_2(X)$ is the isometric embedding defined in \eqref{eqn:isometric_embedding}. The image of $\rho$ is the lowest dimensional stratum (referred to as the 1-skeleton) in the stratified space structure of $\SP$ described in \cite{harms2020geometry}.
\end{Remark}

\begin{proof}
Point \ref{item:isometric_embedding} follows by a simple computation: for $x,x' \in X$, we have
\[
W_p(\rho(x),\rho(x'))^p = \min_{\pi} \frac{1}{k} \sum_{i=1}^k d\left(\rho(x)_i,\rho(x')_{\pi(i)}\right)^p = \frac{1}{k} \sum_{i=1}^k d(x,x')^p = d(x,x')^p.
\]

To prove point \ref{item:geodesics}, let $D_i := d(x_i,y_{\pi(i)})$ and $D := W_p([\x],[\y])$. By standard arguments, it suffices to show that for all $0 \leq s \leq t \leq 1$,
\[
W_p(\gamma(s),\gamma(t)) \leq (t-s) D
\]
(see, e.g., \cite[Lemma 1.3]{chowdhury2018explicit}). This is straightforward: for $p \in [1,\infty)$ we have
\begin{align*}
    W_p(\gamma(s),\gamma(t))^p &= \min_{\widetilde{\pi}} \sum_{i=1}^k d(\gamma_i(s),\gamma_{\widetilde{\pi}(i)}(t))^p  \\
    &\leq \sum_{i=1}^k d(\gamma_i(s),\gamma_i(t))^p = \sum_{i=1}^k (t-s)^p D_i^p   = (t-s)^p D^p,
\end{align*}
and the $p = \infty$ case is similar.
\end{proof}

The rest of this section deals with curvature of metric spaces, in the sense of Alexandrov. For $\kappa \in \mathbb{R}$, let $M_\kappa$ denote the $2$-dimensional space form of constant curvature $\kappa$, with metric denoted $d_\kappa$ and metric diameter denoted $D_\kappa$. For three points $x,y,z$ in a metric space $(X,d)$, one can always find points $\bar{x},\bar{y},\bar{z} \in M_\kappa$ with $d(x,y) = d_\kappa(\bar{x},\bar{y})$, $d(y,z) = d_\kappa(\bar{y},\bar{z})$ and $d(z,x) = d_\kappa(\bar{z},\bar{x})$. We say that $\bar{x},\bar{y},\bar{z}$ give a \emph{comparison triangle} for $x,y,z$. A geodesic metric space $(X,d)$ is said to have \emph{curvature bounded below} (respectively, \emph{above}) by $\kappa$ if for any three points $x,y,z$ such that $d(x,y) + d(y,z) + d(z,x) \leq 2D_\kappa$ and for any minimal geodesic $\gamma:[0,1] \to X$ joining $x$ to $y$, it holds that $d(z,\gamma(t)) \geq d_\kappa(\bar{z},\bar{\gamma}(t))$ (respectively, $d(z,\gamma(t)) \leq d_\kappa(\bar{z},\bar{\gamma}(t))$)
for all $t \in [0,1]$, where $\bar{x},\bar{y},\bar{z}$ give a comparison triangle for $x,y,z$ in $M_\kappa$ and $\bar{\gamma}:[0,1] \to M_\kappa$ is a minimizing geodesic joining $\bar{x}$ to $\bar{y}$. 

A geodesic metric space with curvature bounded below by zero is called an \emph{Alexandrov space with nonnegative curvature}. In this case, the relevant inequality can be expressed as
   \begin{equation}\label{eqn:alexandrov_condition}
    d(z,\gamma(t))^2 \geq (1-t)d(z,x)^2 + t d(z,y)^2 - t(1-t)d(x,y)^2
    \end{equation}
for all $x,y,z$, where $\gamma$ is a minimal geodesic joining $x$ to $y$ \cite[Section 2.1]{ohta2012barycenters}. A geodesic metric space with curvature bounded above by $\kappa$ is called a \emph{$\mathrm{CAT}(\kappa)$ space}. We deal below with Alexandrov spaces with nonnegative curvature. This category includes most spaces of interest from our data analysis perspective, including Euclidean spaces, complete Riemannian manifolds whose sectional curvature is not everywhere negative and Wasserstein spaces $W_2(X)$, where $X$ is itself an Alexandrov space of nonnegative curvature \cite[Section 2.1]{ohta2012barycenters}.

We now state the main result of this section, which describes Alexandrov curvature bounds for symmetric products. We restrict our attention to the $W_2$ metric in this setting---this is sensible, since even the standard $\ell^p$ space $(\mathbb{R}^n,d_{\ell^p})$ is an Alexandrov space with nonnegative curvature if and only if $p=2$ (in an $\ell^p$-space, one can show that the Alexandrov inequality \eqref{eqn:alexandrov_condition} implies the Parallelogram Law). By similar reasoning, $(\mathbb{R}^k,d_{\ell^p})$ is $\mathrm{CAT}(0)$ if $p=2$ and is otherwise not $\mathrm{CAT}(\kappa)$ for any $\kappa$---see also \cite[Proposition II.1.14]{bridson2013metric}.

\begin{Theorem}\label{thm:curvature}
Let $(X,d)$ be a geodesic metric space and endow $\SP$ with the $W_2$ metric.
\begin{enumerate}
    \item\label{item:alexandrov} $X$ is an Alexandrov space with nonnegative curvature if and only if $\SP$ is.
    \item\label{item:CAT} If $X$ is not a one point space, a 1-manifold or a 1-manifold with boundary then $\SP$ is not $\mathrm{CAT}(\kappa)$ for any $\kappa$.
\end{enumerate}
\end{Theorem}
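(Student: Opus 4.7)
The plan for part (\ref{item:alexandrov}) is to use Proposition \ref{prop:geometric_facts} in both directions and prove each implication via a short transfer argument. For the forward direction ($X$ nonnegatively curved implies $\SP$ is), I would take three classes $[\x], [\y], [\z] \in \SP$, choose representatives so the identity matching is optimal between $\x$ and $\y$, and work with the explicit geodesic $\gamma(t) = [\gamma_1(t),\ldots,\gamma_k(t)]$ from part (\ref{item:geodesics}) of Proposition \ref{prop:geometric_facts}. At each fixed $t$, choose an optimal matching $\sigma$ from $\z$ to $\gamma(t)$, so that $k \cdot W_2([\z],\gamma(t))^2 = \sum_i d(z_i, \gamma_{\sigma(i)}(t))^2$. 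Then apply the pointwise Alexandrov inequality \eqref{eqn:alexandrov_condition} in $X$ to each triple $(z_i, x_{\sigma(i)}, y_{\sigma(i)})$ along the geodesic $\gamma_{\sigma(i)}$, and sum. The three sums reassemble correctly: $\sum_i d(z_i,x_{\sigma(i)})^2 \geq k W_2([\z],[\x])^2$ and similarly for $\y$ by sub-optimality of $\sigma$, while $\sum_i d(x_{\sigma(i)}, y_{\sigma(i)})^2 = \sum_j d(x_j, y_j)^2 = k W_2([\x],[\y])^2$ because $\sigma$ is a bijection and the identity is a known optimal matching between $\x$ and $\y$. Dividing by $k$ yields \eqref{eqn:alexandrov_condition} in $\SP$.

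For the backward direction I would use the isometric embedding $\rho: X \to \SP$ from part (\ref{item:isometric_embedding}) of Proposition \ref{prop:geometric_facts}. Given $x, y, z \in X$ and a minimal geodesic $\alpha:[0,1] \to X$ from $x$ to $y$, the composition $\rho \circ \alpha$ is a minimal geodesic in $\SP$ from $\rho(x)$ to $\rho(y)$ (by part (\ref{item:geodesics}) applied to a trivial matching between diagonal tuples). The Alexandrov inequality in $\SP$ applied to $\rho(x), \rho(y), \rho(z)$ and this geodesic, combined with the fact that $\rho$ preserves distances, pulls back term-by-term to the Alexandrov inequality for $x, y, z$ in $X$.

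For part (\ref{item:CAT}), the strategy is to exhibit, in every neighborhood of a chosen point of $\SP$, two distinct geodesics between a pair of points. Any $\mathrm{CAT}(\kappa)$ space has locally unique geodesics (globally unique for $\kappa \leq 0$, and unique over segments of length less than $\pi/\sqrt{\kappa}$ for $\kappa > 0$), so non-uniqueness at all scales rules out $\mathrm{CAT}(\kappa)$ for every $\kappa$. The hypothesis on $X$ gives a point $p$ admitting two distinct geodesic germs $\mu, \nu:[0,\varepsilon] \to X$ that are not ``opposite'' (do not combine into a single geodesic line through $p$). The prototype to imitate is $X = \mathbb{R}^2$, where $\mathsf{SP}^2 X$ is isometric to $\mathbb{R}^2 \times \mathrm{Cone}(S^1, \pi)$ and the cone factor, having total angle $\pi < 2\pi$, admits non-unique geodesics at every scale (points at angular separation $\pi/2$ are joined by two equal-length geodesics, one via each ``side''). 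In $\mathsf{SP}^2 X$ this non-uniqueness manifests itself via four points $a_1, a_2, b_1, b_2$ chosen symmetrically on $\mu$ and $\nu$ at small scale $s$, for which both matchings $\{(a_1,b_1),(a_2,b_2)\}$ and $\{(a_1,b_2),(a_2,b_1)\}$ have equal cost and induce two geodesically distinct paths in $\mathsf{SP}^2 X$ between $[(a_1,a_2)]$ and $[(b_1,b_2)]$.

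The main obstacle is executing this construction rigorously for a general geodesic space $X$ rather than just for $\mathbb{R}^2$, since the ``equal cost of both matchings'' condition in the Euclidean case reduces to perpendicularity of $a_1 - a_2$ and $b_1 - b_2$, and one needs an analogue of this first-order/angular condition at the point $p$. I would handle this either by working in the tangent cone at $p$ (which is scale-invariant, so non-uniqueness at one scale transfers to all scales) or by a direct local construction using the two rays $\mu,\nu$ and a symmetry argument to equalize the two matching costs. Finally, to pass from $\mathsf{SP}^2 X$ to $\SP$ for general $k \geq 2$, I would pad the constructed 2-tuples with $k-2$ copies of a fixed auxiliary point chosen so that this padding does not create additional matching choices or destroy the non-uniqueness between the two varying coordinates.
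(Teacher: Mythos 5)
Your overall architecture is sound, but the two halves fare differently against the paper. For part (1), your forward direction is a genuinely different route: the paper does not verify the comparison inequality by hand, but instead observes that $(X^k,d_{\ell^2})$ is nonnegatively curved as a product of nonnegatively curved spaces and then quotients by the finite isometric $S_k$-action, citing the stability of lower curvature bounds under such quotients \cite[Proposition 4.1 and Corollary 4.6]{burago1992ad}. Your summation argument does correctly establish inequality \eqref{eqn:alexandrov_condition} for geodesics of the form $[\gamma_1,\ldots,\gamma_k]$ supplied by Proposition \ref{prop:geometric_facts}, but the definition in force quantifies over \emph{all} minimal geodesics joining $[\x]$ to $[\y]$, so you still owe an argument that every minimal geodesic in $\SP$ arises from an optimal matching and geodesics in $X$ (equivalently, lifts to $X^k$), or else a switch to a characterization of lower curvature bounds that does not reference a choice of geodesic. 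The backward direction via $\rho$ agrees with the paper and is fine, since the diagonal is a geodesic subspace and lower curvature bounds restrict to such subspaces.

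For part (2) the gap is more serious. You correctly identify the goal (arbitrarily close points joined by distinct geodesics, which rules out $\mathrm{CAT}(\kappa)$ for every $\kappa$) and the mechanism (a tie between the two matchings of a pair of $2$-point configurations), but you never actually produce the tied configuration in a general geodesic space, and the two devices you propose do not close the gap: tangent cones need not exist for an arbitrary geodesic metric space, and there is no ambient ``symmetry'' to invoke outside the Euclidean model. The paper's solution is an intermediate value argument: after reducing to $k=2$ and splitting into branching and non-branching cases, in the non-branching case it places $x_1$ on one geodesic and $y_1,y_2$ on the two geodesic germs so that $d(x_1,y_1)=d(x_1,y_2)=\alpha$ by choice of parameters, and then slides $x_2$ along the concatenated path from $y_2$ through $x$ to $y_1$ until $d(x_2,y_1)=d(x_2,y_2)=\beta$ by continuity; both matchings then cost $\alpha^2+\beta^2$ with no angular condition needed. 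Your sketch also omits the branching case (where the tied configuration is built directly on the branching geodesic, and where your ``two non-opposite germs'' picture degenerates) and the final verification that the two resulting paths in $\mathsf{SP}^2 X$ are genuinely distinct, which the paper extracts from the non-branching and $\mathrm{CAT}(\kappa)$ hypotheses on $X$. These are the missing ideas you would need to supply.
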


Theoretical results and computational tools regarding barycenters in $\mathrm{CAT}(\kappa)$ spaces exist in the literature~\cite{yokota2016convex}; for example, barycenters in $\mathrm{CAT}(0)$ spaces are unique~\cite{sturm2003probability}. This theorem indicates that these methods cannot be generally applied to the symmetric product spaces of interest, motivating the new theory developed in Section \ref{sec:barycenters}. Results which are similar to point \ref{item:alexandrov} are established for Wasserstein spaces $W_2(X)$ in \cite[Proposition 2.10]{sturm2006geometry} and \cite[Theorem A.8]{lott2009ricci}. It is well known that Wasserstein spaces do not inherit upper curvature bounds---see, e.g. \cite[Remark 2.10]{bertrand2012geometric}, which shows that if $X$ is $\mathrm{CAT}(0)$ then $W_2(X)$ is not, unless $X$ is a isometric to an interval. A result in a similar spirit to point \ref{item:CAT} is proved for the space of persistence diagrams (in the context of topological data analysis) in \cite[Proposition 2.4]{turner2014frechet}.

\begin{proof}[Proof of Theorem \ref{thm:curvature}]

If $(\SP,W_2)$ has nonnegative curvature then so does $(X,d)$, since $X$ embeds isometrically in $\SP$ via the map $\rho$ defined in \eqref{eqn:isometric_embedding_X}. Now suppose that $X$ has nonnegative curvature. Then so does $(X,d_{\ell^2})$ \cite[Proposition 4.1]{burago1992ad}. 
The quotient $\SP = X^k/S_k$ of the nonnegatively curved space $(X^k,d_{\ell^2})$ by the isometric action of the finite group $S_k$ is therefore nonnegatively curved by \cite[Corollary 4.6]{burago1992ad}. This completes the proof of  \ref{item:alexandrov}.

To prove point \ref{item:CAT}, it suffices to show that $\SP$ contains arbitrarily close points joined by distinct geodesics \cite[Proposition II.1.4]{bridson2013metric}. Moreover, it suffices to prove the claim for $k=2$.

First suppose that $X$ contains a branching geodesic. Consider the point configurations $\x = (x_1,x_2)$ and $\y = (y_1,y_2)$ lying on the branching geodesic near the branch point, as shown in the lefthand side of Figure \ref{fig:proof_schematics}, where $\varepsilon > 0$ is arbitrarily small. Then both of the possible matchings of $\x$ and $\y$ are optimal with respect to $W_2$. Let $\gamma_i$ (respectively, $\gamma_i'$) be the geodesic from $x_i$ to $y_i$ (respectively, $x_i$ to $y_{\pi(i)}$, where $\pi$ is the non-identity element of $S_2$) whose image is contained in the branching geodesic. Then $\gamma := [\gamma_1,\gamma_2]$ and $\gamma' := [\gamma_1',\gamma_2']$ are both minimizing geodesics in $\mathsf{SP}^2 X$, by Proposition \ref{prop:geometric_facts}, point \ref{item:geodesics}. These geodesics are distinct---for example, $\gamma(2\varepsilon/3) \neq \gamma'(2\varepsilon/3)$. Since $\varepsilon > 0$ was arbitrary, $\mathsf{SP}^2 X$ is not $\mathrm{CAT}(\kappa)$ for any $\kappa$. 

\begin{figure}
    \centering
    \includegraphics[width = 0.10\textwidth]{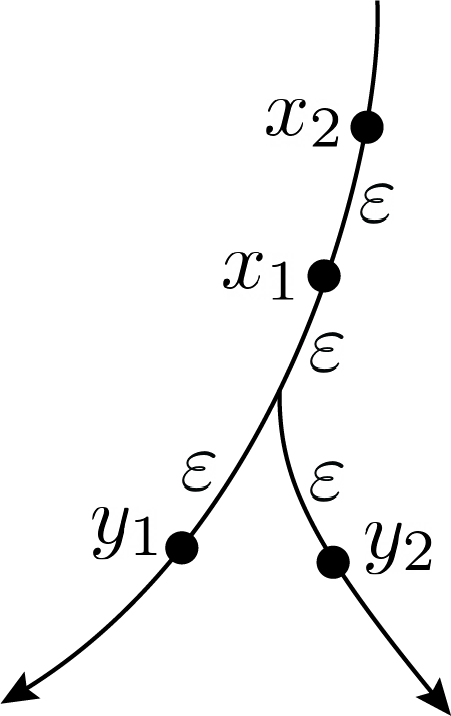}
    \qquad \qquad \includegraphics[width = 0.28\textwidth]{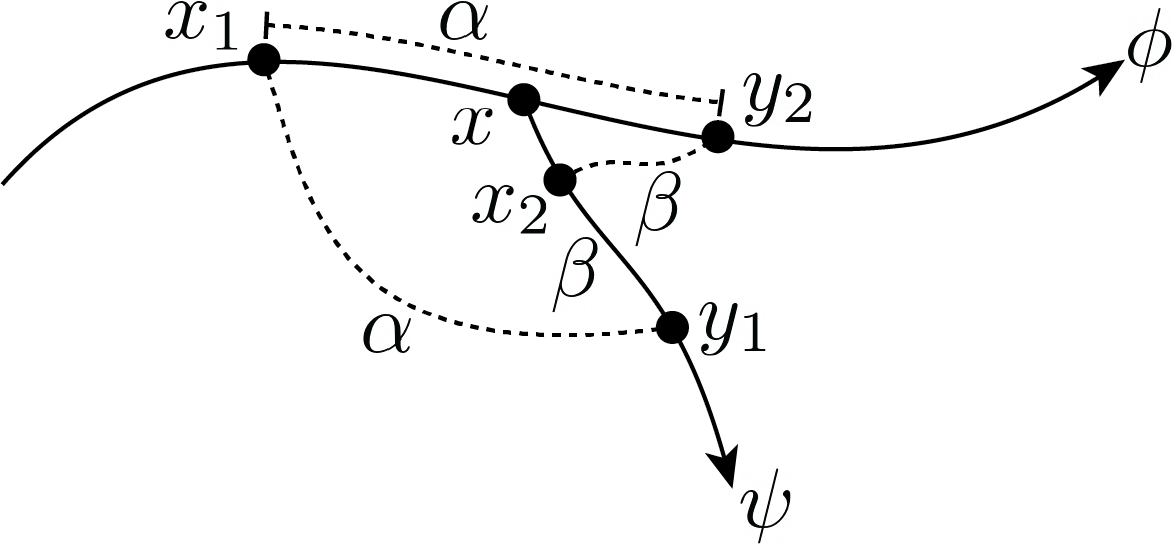}
    \caption{Schematic figures for the proof of Theorem \ref{thm:curvature}.}
    \label{fig:proof_schematics}
\end{figure}
                   
Next, suppose that $(X,d)$ is non-branching. We may also assume that $X$ is $\mathrm{CAT}(\kappa)$ for some $\kappa$---otherwise the claim follows immediately, since $X$ embeds isometrically into $\SP$ by Proposition \ref{prop:geometric_facts}. We construct a configuration of points in $X$ which yields nonunique geodesics between arbitrarily close points in $\mathsf{SP}^2 X$ as follows. We claim that there exists a point $x \in X$ which lies in the relative interior of a geodesic such that the image of the geodesic is not surjective onto any neighborhood of $x$. Indeed, 
let $a \in X$ be a point with no 1-manifold or 1-manifold with boundary chart. Choose a minimizing geodesic $\alpha$ from $a$ to a point $b$ in a small neighborhood of $a$. This geodesic is not onto any neighborhood of $a$, so we may choose a geodesic $\beta$ from the midpoint $m$ of $\alpha$ to another point $c$ not in the image of $\alpha$. If there exists a neighborhood $U$ of $m$ such that $\mathrm{Im}(\alpha) \cap \mathrm{Im}(\beta) \cap U = \{m\}$, then we set $x = m$. Otherwise, the nonbranching and $\mathrm{CAT}(\kappa)$ conditions imply that $a$ lies in the relative interior of $\beta$ and we set $x = a$.  Now we fix an $x \in X$ in the relative interior of a geodesic path $\gamma$ which is not surjective onto any neighborhood of $x$. Without loss of generality, we parameterize $\gamma:[-1,1] \to X$ with $\gamma(0) = x$. Choose another geodesic $\psi:[0,1] \to X$ with $\psi(0) = x$ and $\gamma(1)$ not in the image of $\gamma$. We have that $\mathrm{Im}(\gamma) \cap \mathrm{Im}(\psi) \cap U = \{x\}$ for some small neighborhood $U$ of $x$, by the assumption that $X$ is non-branching and $\mathrm{CAT}(\kappa)$. Set $x_1 := \gamma(t_1)$ for some arbitrarily small $t_1 < 0$, $y_1 := \psi(s_1)$ for some arbitrarily small $s_1 > 0$ and $\alpha := d(x_1,y_1) > 0$. Next, let $s_2 := \alpha - c \cdot |t_1|$, where $c$ is the constant speed of $\gamma$, and $y_2 := \gamma(s_2)$, so that $d(x_1,y_2) = \alpha$.
Consider the path from $y_2$ to $y_1$, through $x$, following the images of the geodesic paths $\gamma$ and $\psi$. By continuity, there is a point $x_2$ on this path such that $d(x_2,y_1) = d(x_2,y_2) =: \beta$. Note that $x_2 \neq x_1$. A schematic of this configuration is shown in the righthand side of Figure \ref{fig:proof_schematics}.

The configuration constructed above has the property that either matching of $\x := (x_1,x_2)$ and $\y := (y_1,y_2)$ is optimal with respect to $W_2$. Let $\gamma_i$ (respectively, $\gamma_i'$) be the geodesic from $x_i$ to $y_i$ (respectively, to $y_{\pi(i)}$), where $\pi$ is the non-identity element of $S_2$), so that $[\gamma_1,\gamma_2]$ and $[\gamma_1',\gamma_2']$ are both minimizing geodesics in $\mathsf{SP}^2 X$ from $[\x]$ to $[\y]$. We claim that these minimizing geodesics are distinct. For sufficiently small $\eta > 0$, we have $\gamma_1(t) \neq \gamma_1'(t)$ for all $t \in (0,\eta)$. This is because $X$ is non-branching, so $\gamma_1$ and $\gamma_2$ can't coincide on any interval (as their endpoints are distinct), while the $\mathrm{CAT}(\kappa)$ condition on $X$ implies that the distinct geodesics $\gamma_1$ and $\gamma_2$ can't intersect arbitrarily close to $x_1$. On the other hand $\gamma_1(0) = x_1 \neq x_2 = \gamma_2'(0)$, so $\gamma_1(t) \neq \gamma_2'(t)$ for all $t$ in some small interval $[0,\eta)$, by continuity. In particular, there exists $t$ such that $\gamma_1(t) \not \in \{\gamma_1'(t),\gamma_2'(t)\}$, which implies $[\gamma_1,\gamma_2] \neq [\gamma_1',\gamma_2']$. Since this construction can be done in an arbitrarily small neighborhood of $x$, this completes the proof.
\end{proof}

\begin{Example}
The symmetric product space $\mathsf{SP}^k \mathbb{R} = \mathbb{R}^k/S_k$, endowed with the $W_2$ metric, is isometric to the space $Y := \{\x \in \mathbb{R}^k \mid \x_1 \leq \x_2 \leq \cdots \leq \x_k\}$, endowed with the $\ell^2$ metric, via the map taking $[\x]$ to its sorted representation---this follows by standard results on one-dimensional optimal transport \cite[Section 3.1]{rachev1998mass}. Thus $\mathsf{SP}^k\mathbb{R}$ is $\mathrm{CAT}(0)$, and a similar argument works for $\SP$ when $X$ is isometric to an interval.
\end{Example}

\section{Barycenters in Symmetric Products}\label{sec:barycenters}

This section introduces our main object of study: barycenters of subsets of $\SP$ with respect to the $W_p$ metric. 

\subsection{Local $p$-barycenters in $\SP$}\label{sec:local}
In this section we characterize local $p$-barycenters in $\SP$ in terms of local barycenters in $X$ and optimal matchings. 

\begin{Definition}\label{barycenter}
Let $X$ be a metric space and $S \subseteq X$ be a finite subset. Then a \emph{$p$-barycenter} of $S$ is a minimizer of the \emph{$p$-\Fre functional} associated to $S$ given by:
\begin{equation}\label{frechet}
    f(x) = f_{S,p}(x) := \sum_{s \in S} d(s,x)^p 
\end{equation}
A \emph{local $p$-barycenter} is a local minimum of the functional $f$.
\end{Definition}

We will use the following technical lemma.

\begin{Lemma} \label{localmatchings}
Let $\x, \y \in X^k$. Then there is an $\varepsilon > 0$ such that for any $\x'$ with $d_{\ell^p}(\x', \x) < \varepsilon$, every optimal matching of $\y$ to $\x'$ is an optimal matching of $\y$ to $\x$.
\end{Lemma}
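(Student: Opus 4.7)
The plan is to exploit that there are only finitely many bijections $\pi : \langle k \rangle \to \langle k \rangle$ and that each associated cost functional is continuous in the first argument, so the optimal matchings are stable under small perturbations by a uniform gap argument.

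More precisely, I would first fix notation: for each bijection $\pi \in S_k$ define the cost function
\[
F_\pi : X^k \to [0,\infty), \qquad F_\pi(\mathbf{z}) := \sum_{i=1}^k d(\mathbf{z}_i, \mathbf{y}_{\pi(i)})^p.
\]
Continuity of $F_\pi$ with respect to the $\ell^p$ metric on $X^k$ follows from the triangle inequality applied coordinatewise together with the elementary inequality $|a^p - b^p| \leq p \max(a,b)^{p-1}|a-b|$, so that $F_\pi(\mathbf{x}') \to F_\pi(\mathbf{x})$ as $d_{\ell^p}(\mathbf{x}',\mathbf{x}) \to 0$. Let $\Pi^* \subseteq S_k$ denote the (nonempty) set of optimal matchings of $\mathbf{y}$ to $\mathbf{x}$, and write $c^* := F_{\pi^*}(\mathbf{x})$ for any $\pi^* \in \Pi^*$.

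Next I would extract the uniform gap. By definition of $\Pi^*$, for every $\pi \notin \Pi^*$ we have $F_\pi(\mathbf{x}) > c^*$. Since $S_k$ is finite, the quantity
\[
\delta := \min_{\pi \notin \Pi^*} \bigl( F_\pi(\mathbf{x}) - c^* \bigr)
\]
is strictly positive (with the convention $\delta = \infty$ if $\Pi^* = S_k$, in which case the lemma is trivial). Using continuity together with finiteness of $S_k$, I can choose $\varepsilon > 0$ so small that $|F_\pi(\mathbf{x}') - F_\pi(\mathbf{x})| < \delta/3$ for every $\pi \in S_k$ and every $\mathbf{x}'$ with $d_{\ell^p}(\mathbf{x}', \mathbf{x}) < \varepsilon$.

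Finally I would close the argument by direct comparison. For any such $\mathbf{x}'$, any $\pi^* \in \Pi^*$, and any $\pi \notin \Pi^*$,
\[
F_{\pi^*}(\mathbf{x}') < c^* + \tfrac{\delta}{3} \leq \bigl(F_\pi(\mathbf{x}) - \delta\bigr) + \tfrac{\delta}{3} < F_\pi(\mathbf{x}') - \tfrac{\delta}{3} < F_\pi(\mathbf{x}'),
\]
so no permutation outside $\Pi^*$ can be optimal for $\mathbf{x}'$; hence every optimal matching of $\mathbf{y}$ to $\mathbf{x}'$ lies in $\Pi^*$, as desired. The only mild subtlety is the continuity estimate for $F_\pi$ (one must control $d^p$ rather than $d$), but this is routine and the bulk of the proof is a standard finite-gap perturbation argument.
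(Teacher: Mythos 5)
Your proof is correct and follows essentially the same strategy as the paper's: both isolate a uniform positive gap $\delta$ between the optimal cost and the cheapest non-optimal matching (available because $S_k$ is finite) and then show that this gap survives any sufficiently small perturbation of $\mathbf{x}$, so no new matching can become optimal. The only difference is cosmetic --- the paper perturbs the un-powered quantity $d_{\ell^p}(\mathbf{y},\cdot)$, which is $1$-Lipschitz by the triangle inequality and hence gives the explicit threshold $\varepsilon < \delta/2$, whereas you perturb the $p$-th-powered cost and therefore need the routine (but non-quantitative) continuity estimate for $t \mapsto t^p$ that you flag at the end.
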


\begin{proof}
If all matchings are optimal from $\y$ to $\x$, then there is nothing to prove. Otherwise, let $\psi$ be a matching from $\y$ to $\x$ which minimizes $\sum d(y_i, x_{\psi(i)})^p$ among all non-optimal matchings from $\y$ to $\x$, and set 
\[
\delta :=  \left( \sum d(y_i, x_{\psi(i)})^p \right)^{1/p} - k^{1/p} \cdot  W_p([\y],[\x]).
\]
By definition, $\delta > 0$. Suppose that for some $\x'$, there is an optimal matching $\phi'$ from $\y$ to $\x'$ which is not optimal from $\y$ to $\x$. Pick any optimal matching $\phi$ from $\y$ to $\x$. We have that
\[
R := d_{\ell^p}(\y, \phi \x)  \geq d_{\ell^p}(\y, \phi\x') - d_{\ell^p}(\phi\x', \phi\x) = d_{\ell^p}(\y, \phi\x') - d_{\ell^p}(\x', \x).
\]
Since $\phi'$ is an optimal matching from $\y$ to $\x'$ and $\phi$ is not, we have
\begin{align*}
d_{\ell^p}(\y, \phi\x') - d_{\ell^p}(\x', \x) & > d_{\ell^p}(\y, \phi'\x') - d_{\ell^p}(\x', \x) \\ 
&\geq d_{\ell^p}(\y, {\phi'}\x) - d_{\ell^p}({\phi'}\x, {\phi'}\x') - d_{\ell^p}(\x', \x) \\
& = d_{\ell^p}(\y, {\phi'}\x) - 2d_{\ell^p}(\x', \x)  \geq R + \delta - 2d_{\ell^p}(\x', \x).
\end{align*}
Putting everything together gives $R > R + \delta - 2 d_{\ell^p}(\x', \x)$, which implies that $d_{\ell^p}(\x',\x) > \frac{\delta}{2}$. Therefore it is only possible to find an optimal matching of $\y$ to $\x'$ which is not an optimal matching of $\y$ to $\x$ if $d_{\ell^p}(\x',\x) > \delta/2$, so the claim follows for any $\varepsilon < \delta/2$.
\end{proof}

\begin{Definition} \label{stationary}
Let $\x \in X^k$ and let $S \subseteq X^k$ be a finite subset. We say that $\x$ is \emph{stationary} with respect to $S$ if the following holds: for every choice of $\Phi = (\phi_{\s})_{\s \in S}$, where $\phi_{\s}$ is an optimal matching from $\x$ to $\s$, and for every $1 \leq i \leq k$, $\x_i$ is a local $p$-barycenter of the set 
$$
S^{\Phi}_{i} := \{ \s_{\phi_{\s}(i)} \mid \s \in S \}.
$$
\end{Definition}

\begin{Theorem}\label{localbarycenters}
Let $\x \in X^k$ and let $S \subseteq X^k$ be a finite subset. Then $[\x]$ is a local $p$-barycenter in $\SP$ of $[S] = \{ [\s] \mid \s \in S\}$ if and only if $\x$ is \emph{stationary} with respect to $S$. 
\end{Theorem}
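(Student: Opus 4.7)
The plan is to control the Fréchet functional $F([\x]) := \sum_{[\s]\in[S]} W_p([\x],[\s])^p$ on $\SP$ by comparing it with a family of ``linearized'' functionals on $X^k$ indexed by tuples of matchings. For any tuple $\Phi = (\phi_\s)_{\s\in S}$ of bijections $\phi_\s:\langle k\rangle\to\langle k\rangle$, set
\[
F_\Phi(\x) := \frac{1}{k} \sum_{\s\in S}\sum_{i=1}^k d(x_i, s_{\phi_\s(i)})^p = \frac{1}{k}\sum_{i=1}^k f_{S^\Phi_i,p}(x_i),
\]
which decouples into a sum of classical $p$-Fréchet functionals on $X$. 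By the definition of $W_p$, we always have $F([\x]) \leq F_\Phi(\x)$, with equality precisely when each $\phi_\s$ is an optimal matching from $\x$ to $\s$. All of the work of the proof amounts to toggling between $F$ and an appropriate $F_\Phi$.

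For the forward direction, I argue by contrapositive. If $\x$ is not stationary, then some tuple $\Phi$ of optimal matchings from $\x$ and some index $i$ witness the failure of $x_i$ to be a local $p$-barycenter of $S^\Phi_i$. Thus in any neighborhood of $x_i$ there exists $\tilde{x}_i$ with $f_{S^\Phi_i,p}(\tilde{x}_i) < f_{S^\Phi_i,p}(x_i)$. Let $\tilde{\x}$ be obtained from $\x$ by replacing only the $i$-th entry; combining $F([\x]) = F_\Phi(\x)$ with $F([\tilde{\x}]) \leq F_\Phi(\tilde{\x})$ yields $F([\tilde{\x}]) < F([\x])$, while $W_p([\tilde{\x}],[\x]) \leq k^{-1/p}\,d_{\ell^p}(\tilde{\x},\x) = k^{-1/p}\,d(\tilde{x}_i,x_i)$ can be made arbitrarily small. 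So $[\x]$ is not a local minimum of $F$.

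For the reverse direction, assume $\x$ is stationary and let $[\y]$ be close to $[\x]$. Using the identity $W_p([\x],[\y])^p = \min_{\y'\sim\y}\frac{1}{k}d_{\ell^p}(\x,\y')^p$, I replace $\y$ by a representative realizing this minimum, so that $d_{\ell^p}(\x,\y)$ is small. Applying Lemma \ref{localmatchings} once per $\s \in S$ (with $\x$ in the role of $\x$, $\s$ in the role of $\y$, and $\y$ in the role of $\x'$) and taking the minimum of the finitely many resulting thresholds produces $\varepsilon > 0$ such that for $d_{\ell^p}(\y,\x) < \varepsilon$, every optimal matching from $\s$ to $\y$ is also optimal from $\s$ to $\x$. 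Choosing such a common $\phi_\s$ for each $\s$ gives $\Phi$ with $F([\y]) = F_\Phi(\y)$ and $F([\x]) = F_\Phi(\x)$. Stationarity applied to $\Phi$ then yields $f_{S^\Phi_i,p}(y_i) \geq f_{S^\Phi_i,p}(x_i)$ for every $i$, so $F([\y]) \geq F([\x])$.

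The main obstacle is the dependence of $\Phi$ on $\y$ in the reverse direction: stationarity only guarantees a local minimum condition for each $\Phi$ on some $\Phi$-dependent neighborhood of $x_i$. I resolve this by noting that the set of tuples $\Phi$ consisting of optimal matchings from $\x$ is finite (bounded by $k!^{|S|}$), so I can intersect finitely many local-minimum neighborhoods to produce a single neighborhood of $\x$ on which the required inequality $f_{S^\Phi_i,p}(y_i) \geq f_{S^\Phi_i,p}(x_i)$ holds for \emph{every} admissible $\Phi$ simultaneously. Once this uniformity is in place, the proof above goes through.
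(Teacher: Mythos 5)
Your proposal is correct and follows essentially the same route as the paper: the forward direction is the same contrapositive argument via the sandwich $F \leq F_\Phi$ with equality at optimal matchings, and the reverse direction uses Lemma \ref{localmatchings} to transport optimal matchings from the nearby point back to $\x$, together with finiteness of the set of admissible $\Phi$ to obtain a uniform neighborhood. The uniformity issue you flag at the end is exactly the point the paper handles by fixing a single $\varepsilon$ working for all finitely many $\Phi$, so nothing is missing.
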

\begin{proof}
First suppose that $[\x]$ is a local $p$-barycenter of $[S]$ but $\x$ is not stationary. That means there exist optimal matchings $\Phi = (\phi_{\s})_{\s \in S}$ and an $i$ such that $\x_i$ is not a local $p$-barycenter of $S^{\Phi}_{i}$. Thus for every $\varepsilon > 0$, there is an $\x_i'$ within $\varepsilon$ of $\x_i$ such that $f(x_i') < f(x_i)$ where $f$ is the $p$-\Fre functional associated to $S^{\phi}_{i}$. Let $\x'$ be $\x$ with the $i^{th}$ coordinate replaced by $\x_i'$. We have
\begin{align*}
f(\x') = \sum_{\s \in S} d([\x'], [\s])^p  = \sum_i \sum_{\s \in S} d(\x'_i, \s_{\phi'_{\s}(i)})^p,
\end{align*}
where $\phi'_{\s}$ is an optimal matching from $\x'$ to $\s$. Since these matchings are optimal,
\begin{align*}
f(\x') = \sum_{\s \in S} d([\x'], [\s])^p  & \leq \sum_i \sum_{\s \in S} d(\x'_i, \s_{\phi_{\s}(i)})^p  \\
&< \sum_i \sum_{\s \in S} d(\x_i, \s_{\phi_{\s}(i)})^p = \sum_{\s \in S} d([\x], [\s])^p  = f(\x).
\end{align*}
But since $\x'_i$ is within $\varepsilon$ of $\x_i$, we have that $d([\x], [\x']) \leq \varepsilon$. Thus $[\x]$ cannot be a local $p$-barycenter of $[S]$.

To prove the converse, suppose that $\x$ is stationary. For any choice $\Phi = (\phi_s)_{s\in S}$ of optimal matchings as in Definition \ref{stationary}, $\x$ is a local minimum of the function
\[
f_\Phi(\x) = \sum_i \sum_{\s \in S} d(\x_i, \s_{\phi_{\s}(i)})^p 
\]
because each $\x_i$ is a local $p$-barycenter of the set $S^{\Phi}_{i} = \{ \s_{\phi_{\s}(i)} \mid \s \in S \}$. Since there are only finitely many choices for $\Phi$, we can fix $\varepsilon > 0$ such that $\x$ is a mininum of $f_{\Phi}$ on $B(\x, \varepsilon)$ for any choice of $\Phi$. By Lemma \ref{localmatchings} and the fact that there are finitely many elements in $S$, there is a $\varepsilon > \varepsilon' > 0$ such that if $d(\x, \x') < \varepsilon'$ then $\Phi$ can be chosen such that it is also a set of optimal matchings from $\x'$ to each element of $S$. 

It follows that if $d([\x], [\x']) < \varepsilon'$, then there is a $\x'' \sim \x'$ with $d_{\ell^p}(\x, \x'') < \varepsilon'$ such that $\Phi$ can be chosen to be a set of optimal matchings from $\x''$ to elements of $S$. We have
\begin{align*}
\sum_{s\in S} W_p([\x'], [\s])^p = \sum_{s\in S} W_p([\x''], [\s])^p & = \frac{1}{k}\sum_i \sum_{\s \in S} d(\x''_i, \s_{\phi_{\s}(i)})^p \\
& \geq \frac{1}{k} \sum_i \sum_{\s \in S} d(\x_i, \s_{\phi_{\s}(i)})^p  = \sum_{s\in S} W_p([\x], [\s])^p
\end{align*}
We conclude that $[\x]$ is a local $p$-barycenter for $[S]$.
\end{proof}

\begin{Example}\label{orderstats}
Consider a subset $S \subseteq \mathsf{SP}^k \mathbb{R}$. Then a $2$-barycenter of $S$ consists of points $(a_1,\ldots,a_k)$ where for each $i$, $a_i$ is the average of all the $i^{th}$ largest entries of the points in $S$. Indeed, an optimal $W_2$ matching between two vectors in $\mathbb{R}^k$ is given by pairing up the $i^{th}$ largest values for each $i$, and Theorem \ref{localbarycenters} dictates that each point in the barycenter is a $2$-barycenter (in this case, mean) of the points matched to it. We can therefore view our barycenter method for labeling data as a generalization of order statistics to more complicated spaces.
\end{Example}

\begin{figure}
\begin{algorithm}[H]
\caption{Local $p$-barycenter in $\SP$}\label{algorithm}
\begin{algorithmic}
\STATE $(\mbox{stationary},\bar{\x},n) \gets (\mbox{false},\x_0,|S|)$\; 
% \STATE $\bar{\x} \gets \x_0$\;
% \STATE $n \gets |S|$\;
\WHILE{stationary = false} 
{
\FOR{$j=1$ \TO $n$}
{
\STATE $R_j \gets \{ \s' \sim \s^j \mid  W_p([\s^j], [\bar{\x}])^p =  \frac{1}{k} d_{\ell^p}(\s', \bar{\x})^p \}$ \;
}
\ENDFOR 
\STATE $\mbox{stationary} \gets \mbox{true}$\;
\FOR {$(\mathbf{t}^1,\ldots,\mathbf{t}^n) \in \prod_j R_j$}
{
\FOR{$i = 1$ \TO $k$}{
\STATE $\hat{\x}_i \gets \LB(\{\mathbf{t}^j_i \mid j \in [n]\}, \bar{\x}_i)$\;
}
\ENDFOR
\IF {$\bar{\x} \neq \hat{\x}$}{
\STATE $(\mbox{stationary},\bar{\x}) \gets (\mbox{false},\hat{\x})$\;
% \STATE stationary $\gets$ false\;
}
\ENDIF
}
\ENDFOR
}
\ENDWHILE
\RETURN $[\bar{\x}]$
\end{algorithmic}
\end{algorithm}
\end{figure}
  
\subsection{Computing local barycenters}\label{sec:algorithm}
We will now describe an iterative algorithm aimed at finding local $p$-barycenters in $\SP$, inspired by the algorithm for persistence diagrams in \cite{turner2014frechet}. We state the algorithm in a very general context where convergence to a solution in finite time is not guaranteed, before giving some cases where a finite number of iterations produces a local $p$-barycenter.

\begin{Definition}\label{def:descent}
Let $X$ be a metric space and $F(X)$ the set of all finite subsets of $X$. By a \emph{$p$-descent operator} we mean a function $\LB: F(X) \times X \to X $ such that for any $S \in F(X)$ and $x \in X$ one of the following is true:
\begin{enumerate}
    \item[(a)] either $\sum_{s \in S} d(\LB(S,x), s)^p < \sum_{s \in S} d(x, s)^p$, or
    \item[(b)] $\LB(S,x) = x$ and $x$ is a local $p$-barycenter of $S$,
\end{enumerate}
\end{Definition}

In particular, if $X$ admits unique $p$-barycenters and $\LB(S,x)$ is defined to be the $p$-barycenter of $S$ for any $x$, then $\LB$ is a $p$-descent operator. More generally, a $p$-descent operator might be one or more steps of a gradient descent method. Our method for finding local $p$-barycenters in $\SP$ given a subset $S$, a $p$-descent operator $\LB$ on $X$ and an initial seed $\x_0 \in X^k$ is described in Algorithm \ref{algorithm}.

\begin{Remark}\label{cuturialgorithm2}
Algorithm 2 in \cite{cuturi2014fast} describes a method for finding approximate $2$-barycenters in $P_k(\mathbb{R}^d, \Theta)$, the subset of $W_2(\mathbb{R}^d)$ consisting of those measures with support of size at most $k$ and weights in some chosen set of $k$-tuples $\Theta$. When $\Theta$ contains only the $k$-tuple $(1/k,\ldots,1/k)$, Algorithm 2 in \cite{cuturi2014fast} (with the parameter choice $\theta=1$) is equivalent to Algorithm 1 in this section with $X = \mathbb{R}^d$ and $\LB(S,x) = \frac{1}{|S|}\sum_{\s \in S} \s$. 
\end{Remark}

If the $\mathsf{while}$ loop terminates in Algorithm \ref{algorithm}, then we have the exact conditions for $\bar{\x}$ to be stationary with respect to $S$ (Definition \ref{stationary}), so by Theorem \ref{localbarycenters}, we have found a local $p$-barycenter. As mentioned above, however, whether or not the algorithm terminates might depend on the definition of $\LB$. In all cases, we can observe the quantity
\[
D = \sum_j d_{\ell^p}(\bar{\x}, \mathbf{t}^j)^p  = \sum_{i,j} d_X(\bar{\x}_i, \mathbf{t}^j_i)^p,
\]
where $\mathbf{t}^j \in R_j$ for each $j$ (note that $D$ does not depend on the choice of $\mathbf{t}^j$), and note that $D$ is strictly reduced during all but the last iteration of the $\mathsf{while}$ loop. Indeed, redefining $R_j$ in Line 6 cannot increase $D$, while the definition of the local $p$-barycenter operator $\LB$ guarantees that when Line 14 is executed, one of the sums $\sum_i d_X(\bar{\x}_i, \mathbf{t}^j_i)^p$ is reduced. 

\begin{Proposition}
Suppose Algorithm \ref{algorithm} always terminates in a finite number of steps for some $X$ and $\LB$. For $S \subseteq X$ and $x \in X$, define $\mathcal{A}(S,x)$ to be the output of Algorithm \ref{algorithm} applied to the subset $S$ with initial value $\x_0 = x$. Then $\mathcal{A}$ defines a $p$-descent operator.
\end{Proposition}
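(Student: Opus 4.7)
The plan is to show one of the two conditions of Definition \ref{def:descent} holds for $\mathcal{A}$. Write $[\bar{\x}] = \mathcal{A}(S, [\x])$ and $F([\y]) := \sum_{[\s] \in S} W_p([\y], [\s])^p$. I would split into two cases according to whether the algorithm ever executes the strict update on Line 14.

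In the first case, suppose no such update ever occurs. Then the $\mathsf{while}$ loop exits after a single pass with $\bar{\x} = \x_0$. Since $\bar{\x}$ was not updated, we must have $\hat{\x} = \bar{\x}$ for every $\Phi = (\mathbf{t}^1,\ldots,\mathbf{t}^n) \in \prod_j R_j$, i.e., $\LB(S_i^{\Phi}, \bar{\x}_i) = \bar{\x}_i$ for all such $\Phi$ and all $i$. Condition (b) of Definition \ref{def:descent} applied to $\LB$ on $X$ then says that each $\bar{\x}_i$ is a local $p$-barycenter of $S_i^{\Phi}$. This is exactly the stationarity condition of Definition \ref{stationary}, so Theorem \ref{localbarycenters} gives that $[\x]$ is a local $p$-barycenter of $[S]$, establishing condition (b) for $\mathcal{A}$.

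In the second case, at least one strict update occurs. I would first establish that the very first such update strictly decreases $F$. Immediately before this update, each $\mathbf{t}^j \in R_j$ realizes the optimal matching to $\bar{\x}$, so
\begin{equation*}
k \cdot F([\bar{\x}]) \;=\; \sum_j d_{\ell^p}(\bar{\x}, \mathbf{t}^j)^p \;=\; \sum_i \sum_j d_X(\bar{\x}_i, \mathbf{t}^j_i)^p.
\end{equation*}
Since $\hat{\x} \neq \bar{\x}$, some coordinate $i$ satisfies $\hat{\x}_i = \LB(S_i^{\Phi}, \bar{\x}_i) \neq \bar{\x}_i$; at this coordinate, condition (a) of Definition \ref{def:descent} strictly reduces $\sum_j d_X(\,\cdot\, , \mathbf{t}^j_i)^p$, while at other coordinates it does not increase. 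Therefore $\sum_{i,j} d_X(\hat{\x}_i, \mathbf{t}^j_i)^p < k \cdot F([\bar{\x}])$, and combining with the upper bound $k \cdot W_p([\hat{\x}],[\s^j])^p \leq d_{\ell^p}(\hat{\x}, \mathbf{t}^j)^p$ (valid since $\mathbf{t}^j$ need not be optimal for $\hat{\x}$) yields $F([\hat{\x}]) < F([\bar{\x}])$. Subsequent $\mathsf{while}$ iterations cannot increase $F$, by the observation preceding this proposition that recomputing the $R_j$'s only reduces $D = k F(\bar{\x})$ and that further updates within an iteration only reduce $D$ further. Hence $F([\bar{\x}]) < F([\x])$, giving condition (a) for $\mathcal{A}$.

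The main obstacle is getting the direction of the inequality $k \cdot F([\hat{\x}]) \leq \sum_j d_{\ell^p}(\hat{\x}, \mathbf{t}^j)^p$ right: after updating, the fixed representatives $\mathbf{t}^j$ may no longer be optimally matched to $\hat{\x}$, so only an inequality (not equality) holds; fortunately it runs in the direction needed to preserve the strict decrease of $F$. Beyond this, the proof is a careful translation between the $\SP$-language of Definition \ref{def:descent} and the $X^k$-level bookkeeping of the algorithm, combined with the stationarity characterization of Theorem \ref{localbarycenters}.
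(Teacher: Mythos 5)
Your proof is correct and follows essentially the same route as the paper's: the paper also argues via the monitored quantity $D=\sum_j d_{\ell^p}(\bar{\x},\mathbf{t}^j)^p = k\sum_{\s\in S}W_p([\bar{\x}],[\s])^p$, noting it strictly decreases in all but the last \texttt{while} iteration (giving condition (a)) and that a single-iteration run forces the output to equal the seed and to be stationary, hence a local $p$-barycenter by Theorem \ref{localbarycenters} (giving condition (b)). You merely spell out the inequality bookkeeping --- in particular the direction of $k\,W_p([\hat{\x}],[\s^j])^p\le d_{\ell^p}(\hat{\x},\mathbf{t}^j)^p$ --- that the paper delegates to the discussion preceding the proposition.
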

\begin{proof}
Note that $D = \sum_{\s^i \in S} W_p(\bar{\x}_0, \s^i)^p$ at Line 8 on the first iteration of the \texttt{while} loop, after which $\sum_{\s^i \in S} W_p(\bar{\x}_0, \s^i)^p \leq D$. Since $D$ is strictly reduced during all but the last iteration of the \texttt{while} loop, $\mathcal{A}$ satisfies Definition \ref{def:descent}(a). The only way the \texttt{while} loop terminates after a single iteration is if the output $[\bar{\x}]$ equals the initial value $\x_0$, which shows condition (b).
\end{proof}

Since $D$ is strictly reduced at each iteration, if we can show that there are only finitely many possible values $D$ can take, then we are done. If $\LB(S,\cdot)$ admits finitely many values for each $S$, then $D$ can take only finitely many values. Indeed, in this case the sets $R_j$ constrain $\bar{x}$, and hence $D$, to finitely many values. Moreover, there are only finitely many possibilities for the sets $R_j$ since each $R_j$ is a subset of $[\s^j]$. Thus we get the following cases where Algorithm \ref{algorithm} is guaranteed to terminate.

\begin{Proposition}\label{termination}
Let $X$ be one of the following:
\begin{itemize}
    \item[(a)] a space such that every finite subset has a non-empty, finite set of local $p$-barycenters,
    \item[(b)] $\mathsf{SP}^k Y$ where $Y$ is a space satisfying (a) above.
\end{itemize}
Suppose $\LB$ is chosen so that $\LB(S,x)$ is always a local $p$-barycenter of $S$. Then Algorithm \ref{algorithm} terminates after finitely many steps.
\end{Proposition}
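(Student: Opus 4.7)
The plan is to reduce both cases to the observation (already made in the paragraph preceding the proposition) that Algorithm \ref{algorithm} terminates whenever $\LB(S,\cdot)$ takes only finitely many values for each finite $S$: in that case the quantity $D$ also takes only finitely many values, and it is strictly decreased at each non-terminating iteration of the $\mathsf{while}$ loop. The strategy in each case is therefore to show that the assumption ``$\LB(S,x)$ is always a local $p$-barycenter of $S$,'' combined with a finiteness statement about the set of local $p$-barycenters in the underlying space, forces $\LB(S,\cdot)$ to have finite range.

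Case (a) is immediate: by hypothesis, the set of local $p$-barycenters of any finite $S \subseteq X$ is finite, and $\LB(S,x)$ is required to lie in this finite set for every $x$, so $\LB(S,\cdot)$ takes only finitely many values and termination follows.

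For case (b), write $X = \mathsf{SP}^m Y$ with $Y$ satisfying (a); my goal is to show that $X$ itself has only finitely many local $p$-barycenters for each finite subset, after which the argument from case (a) applies. Fix a finite $T \subseteq X$ and pick representatives $\s^t \in Y^m$ for each $t \in T$. By Theorem \ref{localbarycenters}, every local $p$-barycenter of $T$ is $[\x]$ for some stationary $\x \in Y^m$, and stationarity forces each coordinate $\x_i$ to be a local $p$-barycenter in $Y$ of some set $T^\Phi_i$ as in Definition \ref{stationary}. The main observation---and really the only subtlety---is that each such $T^\Phi_i$ is a subset of the fixed finite pool $\{\s^t_j : t \in T,\ 1 \leq j \leq m\} \subseteq Y$, so as $\Phi$ and $i$ vary, the sets $T^\Phi_i$ range over a finite collection $\mathcal{C}$ determined by $T$ alone, independent of the candidate $\x$ under consideration. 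Since $Y$ satisfies (a), each member of $\mathcal{C}$ has only finitely many local $p$-barycenters in $Y$, so each coordinate $\x_i$ of a stationary configuration is drawn from a fixed finite subset of $Y$ (the union of barycenter sets over $\mathcal{C}$). This gives only finitely many stationary configurations in $Y^m$, hence only finitely many local $p$-barycenters of $T$ in $X$, and termination follows by the argument of case (a).
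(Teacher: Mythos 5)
Your proposal is correct and follows essentially the same route as the paper: both reduce to showing $\LB(S,\cdot)$ has finite range, and for case (b) both invoke Theorem \ref{localbarycenters} to see that every local barycenter in $\mathsf{SP}^m Y$ is determined by the finitely many possible sets $S^\Phi_i$ together with a choice from the finite set of local $p$-barycenters of each. Your write-up merely makes explicit a detail the paper leaves implicit---that the $T^\Phi_i$ all come from a fixed finite pool of coordinates of the representatives, independent of the candidate $\x$---which is a worthwhile clarification but not a different argument.
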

\begin{proof}
Given the above discussion, we have only to show that $\LB(S,\cdot)$ admits finitely many values for fixed $S$. For spaces of type (a) this is obvious. For spaces of type (b) we note that by Theorem \ref{localbarycenters}, local barycenters in $\mathsf{SP}^k Y$ are completely determined by the sets $S^\phi_i$ in Definition \ref{stationary} and a choice of local $p$-barycenter for each $S^\phi_i$. Since there are only finitely many possibilities for the $S^\phi_i$, we get the required result.
\end{proof}

Examples of spaces satisfying condition (a) in Proposition \ref{termination} for $p=2$ include $\mathbb{R}^n$ and more generally any Hadamard manifold~\cite{cartan1928geometrie}. An example of a space which does not satisfy (a) is the $2$-sphere: any point along the equator is a $2$-barycenter of the set consisting of the north and south poles. In the applications to redistricting in Section \ref{sec:redistrict} we will work with $\SP$ where $X = \mathsf{SP}^M \mathbb{R}^2$, and the $p$-descent operator will be Algorithm \ref{algorithm}, but applied to $X$; Corollary \ref{termination} thus applies doubly to this setting (see Section \ref{sec:redistrict} for details).

\begin{Observation}\label{W2global}
A local $p$-barycenter in any Wasserstein space $W_p(X)$ is necessarily a global $p$-barycenter. Indeed, if $\mu$ is a local $p$-barycenter of $\mu_1, \ldots, \mu_k$ and there exists a $\mu'$ with $
\sum_{i} W_p(\mu_i,\mu)^p > \sum_{i} W_p(\mu_i,\mu')^p
$ then for any small $\varepsilon > 0$ let $\mu_\varepsilon$ be the convex combination $\mu_\varepsilon = \varepsilon \mu' + (1-\varepsilon) \mu$. Taking appropriate convex combinations of optimal plans between $\mu_i$ and $\mu$ and $\mu'$ respectively, one can show $\sum W_p(\mu_i,\mu_\varepsilon)^p < \sum W_p(\mu_i,\mu)^p$.
This is a contradiction since $\mu$ is a local $p$-barycenter and the distance between $\mu$ and $\mu_\varepsilon$ can be made arbitrarily small in $W_p$ distance. This all remains true if we restrict ourselves to the subset of $W_p(X)$ consisting of measures with finite support. 
\end{Observation}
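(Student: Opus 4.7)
The plan is to make precise the ``appropriate convex combinations of optimal plans'' alluded to in the statement. Assume for contradiction that $\mu$ is a local $p$-barycenter of $\mu_1,\ldots,\mu_k$ in $W_p(X)$, yet there exists some $\mu'$ with $\sum_i W_p(\mu_i,\mu')^p < \sum_i W_p(\mu_i,\mu)^p$. For $\varepsilon \in [0,1]$, define $\mu_\varepsilon := (1-\varepsilon)\mu + \varepsilon \mu'$. The goal is to exhibit, for each small $\varepsilon>0$, a measure $\mu_\varepsilon$ that both lies in an arbitrarily small $W_p$-neighborhood of $\mu$ and has strictly smaller $p$-Fr\'{e}chet value.

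First I would verify that $W_p(\mu,\mu_\varepsilon)\to 0$ as $\varepsilon \to 0^+$. To do this, take any $\sigma \in \mathcal{U}(\mu,\mu')$ (say an optimal plan) and form
\[
\gamma_\varepsilon := (1-\varepsilon)\,(\mathrm{id}\times \mathrm{id})_\#\mu \;+\; \varepsilon\, \sigma,
\]
which is a coupling of $\mu$ with $\mu_\varepsilon$. Integrating $d(x,y)^p$ against $\gamma_\varepsilon$ yields $W_p(\mu,\mu_\varepsilon)^p \le \varepsilon\, W_p(\mu,\mu')^p \to 0$. Consequently $\mu_\varepsilon$ enters any $W_p$-neighborhood of $\mu$ for $\varepsilon$ small enough.

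The key step, and the only one requiring any real work, is the convexity estimate for $W_p^p$. For each $i$ fix optimal couplings $\pi_i \in \mathcal{U}(\mu_i,\mu)$ and $\pi_i' \in \mathcal{U}(\mu_i,\mu')$. Then $\eta_i := (1-\varepsilon)\pi_i + \varepsilon\pi_i'$ has first marginal $(1-\varepsilon)\mu_i+\varepsilon\mu_i = \mu_i$ and second marginal $(1-\varepsilon)\mu + \varepsilon\mu' = \mu_\varepsilon$, so $\eta_i \in \mathcal{U}(\mu_i,\mu_\varepsilon)$. Integrating $d(x,y)^p$ against $\eta_i$ and applying the definition of $W_p$ gives
\[
W_p(\mu_i,\mu_\varepsilon)^p \;\le\; (1-\varepsilon)\, W_p(\mu_i,\mu)^p \;+\; \varepsilon\, W_p(\mu_i,\mu')^p.
\]
Summing over $i$ and invoking the strict inequality $\sum_i W_p(\mu_i,\mu')^p < \sum_i W_p(\mu_i,\mu)^p$ yields, for every $\varepsilon \in (0,1]$,
\[
\sum_i W_p(\mu_i,\mu_\varepsilon)^p \;<\; \sum_i W_p(\mu_i,\mu)^p.
\]
Combined with $W_p(\mu,\mu_\varepsilon)\to 0$, this contradicts the hypothesis that $\mu$ is a local $p$-barycenter, and thus proves the first assertion.

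For the final sentence, note that if all of the $\mu_i$ and the candidate competitor $\mu'$ have finite support, then so does each $\mu_\varepsilon$, since a convex combination of finitely supported measures is finitely supported. The entire argument above therefore takes place inside the subset of finitely supported measures, so the result restricts to this setting. The main (minor) obstacle is simply the verification of the convexity bound via the explicit coupling $\eta_i$; no additional regularity of $X$ beyond the standing assumption that finite Borel measures on $X$ are Radon is required.
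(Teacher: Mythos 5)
Your argument is correct and is precisely the paper's intended proof: the ``appropriate convex combinations of optimal plans'' in the Observation are exactly your couplings $\eta_i = (1-\varepsilon)\pi_i + \varepsilon\pi_i'$, which give the convexity bound $W_p(\mu_i,\mu_\varepsilon)^p \le (1-\varepsilon)W_p(\mu_i,\mu)^p + \varepsilon W_p(\mu_i,\mu')^p$, and your coupling $\gamma_\varepsilon$ supplies the closeness estimate $W_p(\mu,\mu_\varepsilon)^p \le \varepsilon W_p(\mu,\mu')^p$. You have simply written out the details the paper leaves implicit, including the finite-support remark; nothing differs in substance.
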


\begin{Remark}\label{rem:subspace}
One can also consider local $p$-barycenters of a set $S$ \emph{on a subset $\Omega \subseteq X$}, that is, local minimizers in the subspace $\Omega$ of the functional $f_{S,p}$ from Equation \ref{frechet}. For example, in \cite{cuturi2014fast} the authors consider minimizing $f_{S,p}$ on the subset of $W_2(X)$ given by discrete measures with support of size at most $m$. Theorem \ref{localbarycenters} holds if we replace ``local $p$-barycenter'' by ``local $p$-barycenter on $\Omega$'' in the definition of stationary (Definition \ref{stationary}) and replace ``local $p$-barycenter in $\SP$'' with ``local $p$-barycenter in $\SP$ on $\mathsf{SP}^k \Omega$''. The proof is the same. If we replace ``local $p$-barycenter'' in the definition of $p$-descent operator (Definition \ref{def:descent}) by ``local $p$-barycenter on $\Omega$'', then Algorithm \ref{algorithm} becomes an algorithm for finding local $p$-barycenters on $\mathsf{SP}^k \Omega$. This more general setup will become relevant when we apply Algorithm \ref{algorithm} to unbalanced partitions of point clouds in Section \ref{sec:clustering_algorithms}.
\end{Remark}

\subsection{Indexing by the barycenter}\label{sec:matching}

Let $S$ be a set of points in $X^k$, and let $[\x]$ be a (local) barycenter of the set $[S] \subseteq \SP$. A choice of representative $\hat{\x} \in [\x]$ gives rise naturally to a way of choosing a representative $\hat{\s} \in [\s]$ for each $[\s] \in S$. Indeed, for $[\s] \in S$, choose an optimal matching $\pi$ from $\hat{\x}$ to $\s$, and define $\hat{\s}_i = \s_{\pi(i)}$. In other words, label the local barycenter first, and then reorder the points in $S$ so that they are each labeled by a best matching to the local barycenter. In many applications, this reordering of the elements of $S$ is at least as important as the local barycenter itself. In general, however, there may be multiple choices of optimal matching $\pi$, so an additional condition is required for the choice of representative $\hat{\s}_i$ to be unique. In this section we outline some cases for which $\hat{\s}_i$ is defined, focussing on $2$-barycenters and the $2$-Wasserstein distance.

\begin{Definition} \label{onepointchange}
Let $X$ be a metric space. We say that $X$ has the \emph{$p$-barycenter one-point-change ($p$-OPC) property} if for any finite set of points $x_1,\ldots, x_k$ with local $p$-barycenter $\hat{x}$, the point $\hat{x}$ is not a local $p$-barycenter of $x_1',\ x_2\, \ldots, x_k$ if $x_1 \neq x_1'$.
\end{Definition}

In other words, the OPC property means that local barycenters change whenever exactly one point changes. For the case $p = 2$ and $X = \mathbb{R}^n$ with the Euclidean metric, (local) barycenters are given by the coordinate means, and so it is clear that Euclidean spaces have the $2$-OPC property. Note the OPC property does not require that the local barycenter change if another point is removed or added. Indeed, in $\mathbb{R}^n$ the barycenter of a subset $S$ is the same as the barycenter of the subset $S \cup \{\bar{s}\}$ where $\bar{s}$ is the barycenter of $S$. 

\begin{Proposition}
If $X$ has the $p$-OPC property, then so does $\SP$ with the $W_p$ metric.
\end{Proposition}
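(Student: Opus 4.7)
The plan is to lift the OPC property from $X$ to $\SP$ by invoking Theorem \ref{localbarycenters}, which reduces the stationarity condition in $\SP$ to coordinatewise local barycenter conditions in $X$. Once that translation is in place, the $p$-OPC hypothesis in $X$ can be applied directly at each coordinate.

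I would argue by contradiction. Suppose $[\hat{\x}]$ is a local $p$-barycenter in $\SP$ of both $\{[\x^1],[\x^2],\ldots,[\x^n]\}$ and $\{[\x'^1],[\x^2],\ldots,[\x^n]\}$, with $[\x^1]\neq[\x'^1]$. The key preparatory step is to fix optimal matchings $\phi_1,\ldots,\phi_n$ from $\hat{\x}$ to $\x^1,\ldots,\x^n$ together with an optimal matching $\phi'_1$ from $\hat{\x}$ to $\x'^1$, and crucially to \emph{reuse} $\phi_2,\ldots,\phi_n$ in the stationarity condition for both ensembles. This is legitimate because the two ensembles share the points $\x^2,\ldots,\x^n$, and Theorem \ref{localbarycenters} allows any choice of optimal matchings.

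With these compatible matchings, Theorem \ref{localbarycenters} applied to each ensemble tells me that for every $i \in \langle k \rangle$ the point $\hat{\x}_i \in X$ is simultaneously a local $p$-barycenter of the two multisets
\[
\{\x^1_{\phi_1(i)},\x^2_{\phi_2(i)},\ldots,\x^n_{\phi_n(i)}\} \quad\text{and}\quad \{\x'^1_{\phi'_1(i)},\x^2_{\phi_2(i)},\ldots,\x^n_{\phi_n(i)}\}.
\]
These multisets differ in at most their first element, so the $p$-OPC property of $X$ forces $\x^1_{\phi_1(i)} = \x'^1_{\phi'_1(i)}$ for every $i$. Equivalently, the ordered tuples $\phi_1 \x^1$ and $\phi'_1 \x'^1$ coincide, so $\x^1$ and $\x'^1$ differ only by the permutation $\phi_1^{-1}\phi'_1 \in S_k$, giving $[\x^1] = [\x'^1]$ and the desired contradiction.

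The only subtlety, and in fact the only nontrivial step, is arranging the matchings so that the ``one point change'' at each coordinate multiset genuinely corresponds to the change from $\x^1$ to $\x'^1$; this is why $\phi_2,\ldots,\phi_n$ must be shared between the two invocations of Theorem \ref{localbarycenters}. Beyond that observation, the proof is a direct coordinatewise application of the hypothesis on $X$, with no deeper obstacle.
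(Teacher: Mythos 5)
Your proof is correct and follows essentially the same route as the paper's: both reduce the claim to the coordinatewise local-barycenter condition via Theorem \ref{localbarycenters} and then apply the $p$-OPC property of $X$ in each coordinate. Your version is somewhat more careful than the paper's, in that you explicitly fix compatible optimal matchings for the two ensembles and spell out why some coordinate multiset must change in exactly one element (otherwise $\phi_1\x^1=\phi'_1\x'^1$ would force $[\x^1]=[\x'^1]$), a detail the paper leaves implicit.
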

\begin{proof}
If the subset $[S] \subseteq \SP$ has local $p$-barycenter $[\x]$ then by Theorem \ref{stationary}, for every $1 \leq i \leq k$, $\x_i$ is a local $p$-barycenter of the set $
S^{\Phi}_{i} = \{ \s_{\phi_{\s}(i)} \mid \s \in S \}$, where $\Phi = (\phi_{\s})_{\s \in S}$ is a set of optimal matchings from $\x$ to each $\s \in S$. If one of the elements of $[S]$ changes then at least one of the $S^{\Phi}_{i}$ changes by exactly one element. Thus since $X$ has the $p$-OPC property, $[\x]$ is no longer a local $p$-barycenter.
\end{proof}

\begin{Proposition}
Let $M$ be a connected, compact Riemannian manifold of dimension $n$. Then $M$ has the $2$-OPC property.
\end{Proposition}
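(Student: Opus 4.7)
The plan is to derive a sharp first-order necessary condition for $\hat{x}$ to be a local $2$-barycenter on $M$, and then observe that this condition leaves no room to change exactly one data point. For each $y \in M$, let $V(y) \subseteq T_{\hat{x}}M$ denote the set of initial velocities $w$ of minimizing geodesics $\gamma:[0,1] \to M$ from $\hat{x}$ to $y$; equivalently,
\[
V(y) \;=\; \{\, w \in T_{\hat{x}}M : \exp_{\hat{x}}(w) = y,\ |w| = d(\hat{x},y)\,\},
\]
with $V(\hat{x}) = \{0\}$. By the first variation formula for arc length together with the chain rule applied to $d^2 = (d)(d)$, the one-sided directional derivative of $g_y(x) := d(y,x)^2$ at $\hat{x}$ in direction $v \in T_{\hat{x}}M$ is
\[
D_v^+ g_y(\hat{x}) \;=\; -2\,\max_{w \in V(y)} \langle w, v \rangle.
\]

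Suppose $\hat{x}$ is a local minimizer of $f(x) := \sum_{i=1}^k g_{x_i}(x)$. Since the one-sided derivative of a sum is the sum of one-sided derivatives, we obtain $\sum_i \max_{w_i \in V(x_i)} \langle w_i, v \rangle \le 0$ for every $v$. Writing $A_i := \mathrm{conv}(V(x_i))$, this says exactly that the support function $h_A$ of the Minkowski sum $A := A_1 + \cdots + A_k$ is nonpositive everywhere. Any nonempty convex set $A$ with $h_A \le 0$ must equal $\{0\}$: for any $a \in A$, plugging $v = a$ gives $|a|^2 \le 0$. Hence each $A_i$ is a singleton $\{u_i\}$ with $\sum_i u_i = 0$, and consequently $V(x_i) = \{u_i\}$, meaning there is a unique minimizing geodesic from $\hat{x}$ to $x_i$.

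Now apply this necessary condition to both of the tuples $(x_1,x_2,\ldots,x_k)$ and $(x_1', x_2, \ldots, x_k)$, which by hypothesis both have $\hat{x}$ as a local $2$-barycenter. We obtain vectors $u_i, u_1' \in T_{\hat{x}}M$ satisfying $\sum_i u_i = 0 = u_1' + \sum_{i \ge 2} u_i$. Subtraction gives $u_1 = u_1'$, whence
\[
x_1 \;=\; \exp_{\hat{x}}(u_1) \;=\; \exp_{\hat{x}}(u_1') \;=\; x_1',
\]
contradicting the assumption that $x_1 \ne x_1'$. Note that this deduction uses only that $\exp_{\hat{x}}$ is a function of $T_{\hat{x}}M$, so conjugacy considerations do not enter.

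The hard part is the cut-locus step above: showing that a local barycenter $\hat{x}$ cannot have any $x_i$ in its cut locus. A naive ``$\nabla f(\hat{x}) = 0$'' argument fails because $g_y$ is only directionally (not continuously) differentiable when $y$ lies in the cut locus of $\hat{x}$, and $V(y)$ can then be a continuum (e.g.\ antipodal points on a sphere). The support-function trick packages the multivalued first-order data at $\hat{x}$ into a single convex-geometric condition that immediately forces each $V(x_i)$ to be a singleton, which is the technical heart of the proof.
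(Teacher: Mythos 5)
Your argument is correct, and it reaches the key balance condition $\sum_i u_i = 0$ by a genuinely different route than the paper. The paper's proof first \emph{shrinks} the configuration: it replaces each $x_i$ by $\gamma_i(\varepsilon)$ along a chosen minimizing geodesic and shows, via the triangle inequality and Cauchy--Schwarz, that $\bar x$ remains a local $2$-barycenter of the shrunken configuration; once all points lie in a normal neighborhood of $\bar x$ the \Fre functional is smooth, and the gradient formula $\nabla\, d(x_i,\cdot)^2\big|_{\bar x} = -2\log_{\bar x} x_i$ gives $\sum_i \log_{\bar x} x_i = 0$, pinning down $x_1$. You instead work directly at $\hat x$ without moving the data: the first variation formula supplies the one-sided derivative of $d(y,\cdot)^2$ even when $y$ lies in the cut locus of $\hat x$, and your support-function argument converts stationarity into $\mathrm{conv}(V(x_1))+\cdots+\mathrm{conv}(V(x_k))=\{0\}$, which simultaneously forces each $V(x_i)$ to be a singleton (so antipodal-type configurations are excluded outright) and yields the balance condition; the final subtraction step is identical in spirit to the paper's. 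What your route buys: it makes explicit that no $x_i$ can be joined to $\hat x$ by two distinct minimizing geodesics (the paper handles this only implicitly through the shrinking step), and it avoids having to check that shrinking preserves local minimality and that the conclusion about the shrunken point transfers back to $x_1$. What the paper's route buys: it needs only smooth calculus in a normal neighborhood plus Cauchy--Schwarz, with no appeal to the first variation formula at the cut locus. Your one external ingredient --- existence of the one-sided derivative equal to $-2\max_{w\in V(y)}\langle w,v\rangle$ --- is standard for complete Riemannian manifolds (and in fact only the easy ``$\limsup\le$'' half is needed, since local minimality already gives the lower bound on the difference quotient of the sum), so the proof is sound as written.
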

\begin{proof}
Consider a finite set of points $x_1,\ldots, x_k$ with local $2$-barycenter (in $X$) $\bar{x}$. For each $i$, let $\gamma_i: [0,1] \to M$ be a minimal geodesic from $\bar{x}$ to $x_i$. We claim that for any $\varepsilon \in [0,1]$, $\bar{x}$ is a $2$-barycenter of $x'_1,\ldots,x'_k$ where $x_i' = \gamma_i(\varepsilon)$. Suppose that $\sum d(y,x_i')^2 < \sum d(\bar{x},x_i')^2$ for some $y \in M$. Then
\begin{align*}
\sum_i d(y, x_i)^2 & \leq  \sum_i \left( d(y, x_i') + d(x_i', x_i) \right)^2  = \sum_i \left(d(y, x_i')^2 +  d(x_i', x_i)^2 + 2 d(y, x_i')d(x_i', x_i)\right).
\end{align*}
Applying the Cauchy-Schwarz Inequality and the facts that $d(\bar{x}, x_i') = \varepsilon d(\bar{x}, x_i)$ and $d(x_i', x_i) = (1-\varepsilon) d(\bar{x}, x_i)$, we see that the latter quantity is upper bounded by
\begin{align*}
&\sum_i  d(y, x_i')^2 + \sum_i d(x_i', x_i)^2 + 2 \left[ \sum_i d(y, x_i')^2 \cdot \sum_i d(x_i', x_i)^2 \right]^{1/2} \\
& \qquad < \sum_i d(\bar{x}, x_i')^2 + \sum_i d(x_i', x_i)^2 + 2 \left[ \sum_i d(\bar{x}, x_i')^2 \cdot \sum_i d(x_i', x_i)^2\right]^{1/2} \\
& \qquad = \sum_i d(\bar{x}, x_i')^2 + \sum_i d(x_i', x_i)^2 + 2 \varepsilon (1-\varepsilon) \sum_i d(\bar{x}, x_i)^2 = \sum_i d(\bar{x}, x_i)^2.
\end{align*}
Since $\bar{x}$ is a local $2$-barycenter of the $x_i$, this implies that $y$ cannot be too near $\bar{x}$, so we have that $\bar{x}$ is indeed a local $2$-barycenter of the $x_i'$ as well. Thus, by replacing $x_i$ with $x_i'$ for sufficiently small $\varepsilon$, we can assume from here on that the $x_i$ are contained in a neighborhood of $\bar{x}$ for which the exponential map $\exp_{\bar{x}}: \mathbb{R}^n \to M$ is a diffeomorphism. In particular, $\log_{\bar{x}} x_i$ is defined for each $i$. We now recall that in this situation the gradient of the function $p \mapsto d(x_i, p)^2$ at $\bar{x}$ is given by $-2 \log_{\bar{x}} x_i$. Thus if $\bar{x}$ is a local $2$-barycenter, we have $\sum_i \log_{\bar{x}} x_i = 0$, so that $x_1$ is completely determined by $\bar{x}$ and the other $x_i$, as required.
\end{proof}

When $X$ has the OPC property, optimal representative choices are always unique:

\begin{Proposition}
Let $X$ be a space with the $p$-OPC property. Let $S$ be a set of points in $X^k$, and let $[\mathbf{x}]$ be a (local) barycenter of the set $[S] \subseteq \SP$. Then for every $\mathbf{s} \in S$, there is a unique representative $\hat{\s} \in [\s]$ minimizing
\begin{equation}\label{aligning}
\sum_{i=1}^k d(\hat{\textbf{x}}_i, \hat{\s}_{i})^p,
\end{equation}
\end{Proposition}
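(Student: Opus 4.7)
The plan is to argue by contradiction using the stationarity characterization of local barycenters (Theorem \ref{localbarycenters}) together with the $p$-OPC property. Suppose that there exist two distinct representatives $\hat{\s}, \hat{\s}' \in [\s]$ that both minimize the sum $\sum_{i} d(\hat{\x}_i, \hat{\s}_i)^p$. Write $\hat{\s}_i = \s_{\phi(i)}$ and $\hat{\s}'_i = \s_{\phi'(i)}$ for two permutations $\phi, \phi' \in S_k$; both $\phi$ and $\phi'$ are then optimal matchings from $\hat{\x}$ to $\s$. Since $\hat{\s} \neq \hat{\s}'$ as tuples, there must be some index $i_0$ with $\s_{\phi(i_0)} \neq \s_{\phi'(i_0)}$ (otherwise the two representatives would coincide).

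Next, I would extend $\phi$ and $\phi'$ to two full families of optimal matchings. For each $\s'' \in S \setminus \{\s\}$, fix an arbitrary optimal matching $\phi_{\s''}$ from $\hat{\x}$ to $\s''$. Let $\Phi$ denote the family obtained by setting $\phi_{\s} := \phi$, and let $\Phi'$ be identical to $\Phi$ except $\phi'_{\s} := \phi'$. By Theorem \ref{localbarycenters}, since $[\hat{\x}]$ is a local $p$-barycenter of $[S]$, the point $\hat{\x}_{i_0}$ is simultaneously a local $p$-barycenter of the tuples
\[
S^{\Phi}_{i_0} = \{\s''_{\phi_{\s''}(i_0)} \mid \s'' \in S\} \quad \text{and} \quad S^{\Phi'}_{i_0} = \{\s''_{\phi_{\s''}(i_0)} \mid \s'' \in S \setminus \{\s\}\} \cup \{\s_{\phi'(i_0)}\}.
\]
These two collections (indexed by $S$) differ in exactly one entry: the one coming from $\s$, where $\s_{\phi(i_0)}$ is replaced by $\s_{\phi'(i_0)} \neq \s_{\phi(i_0)}$.

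Applying the $p$-OPC property of $X$ to this pair then yields an immediate contradiction: $\hat{\x}_{i_0}$ cannot be a local $p$-barycenter of both $S^{\Phi}_{i_0}$ and $S^{\Phi'}_{i_0}$. Hence no two distinct optimal representatives can coexist, and the minimizer $\hat{\s}$ is unique.

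I don't anticipate a serious obstacle; the only mild subtlety is ensuring that the distinctness of $\hat{\s}$ and $\hat{\s}'$ as tuples forces a coordinate $i_0$ at which the actual \emph{elements} of $X$ (not just the indices $\phi(i_0)$ and $\phi'(i_0)$) differ, so that the OPC property is genuinely applicable. This is handled automatically because $\hat{\s}_i = \s_{\phi(i)}$ and $\hat{\s}'_i = \s_{\phi'(i)}$ agreeing in $X$ at every $i$ would make the two representatives identical as $k$-tuples, contradicting our assumption.
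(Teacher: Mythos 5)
Your proof is correct and follows essentially the same route as the paper's: both arguments extend the two competing optimal matchings for $\s$ to two families of optimal matchings differing only in the entry coming from $\s$, invoke Theorem \ref{localbarycenters} to conclude that $\hat{\x}_{i_0}$ is a local $p$-barycenter of two collections differing in exactly one element, and derive a contradiction from the $p$-OPC property. Your write-up is somewhat more explicit than the paper's about constructing the families $\Phi$ and $\Phi'$ and about why distinctness of the tuples forces a coordinate where the underlying points of $X$ differ, but the argument is the same.
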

\begin{proof}
For each $\s$, choose a representative $\hat{\s}$ as above and suppose that $\tilde{\s}$ was another possible choice of representative for $\s$ with $\tilde{\s}_m \neq \hat{\s}_m$. By Theorem \ref{localbarycenters}, $\x_m$ is a local $p$-barycenter of both $Y := \{ \hat{\s}_m \mid \s \in S \}$ and $(Y \setminus \hat{\s}_m) \cup \tilde{\s}_m$. But this is a contradiction since $X$ has the $p$-OPC property.
\end{proof}

We now prove that certain $2$-Wasserstein spaces have the $2$-OPC property. For the absolutely continuous case (Proposition \ref{ACcase}), we will need the following two results.

\begin{Theorem}[\cite{brenier1987decomposition} for $\mathbb{R}^n$, \cite{mccann2001polar} for $M$]\label{mccann}
Let $M$ be a connected, compact Riemannian manifold or $\mathbb{R}^n$. Let $\mu_1,\mu_2 \in W_2(M)$ be two probability measures which are absolutely continuous with respect to volume. Then the Wasserstein distance $W_2(\mu_1,\mu_2)$ is realized by a unique measure $\gamma$. Moreover, this $\gamma$ is concentrated on the graph of a measureable mapping $T$ over $\mu_1$.
\end{Theorem}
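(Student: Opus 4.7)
The plan is to invoke Kantorovich duality for the cost $c(x,y) = \tfrac{1}{2} d(x,y)^2$ and exploit the regularity of $c$-concave potentials. By standard weak-$*$ compactness (tightness of $\mathcal{U}(\mu_1,\mu_2)$) and lower semicontinuity of $\gamma \mapsto \int c \, d\gamma$, an optimal coupling $\gamma$ exists, and the dual problem admits a maximizing pair $(\phi,\psi)$ with $\phi$ being $c$-concave, i.e.\ $\phi(x) = \inf_y \{c(x,y) - \psi(y)\}$. Any optimal $\gamma$ must be concentrated on the \emph{contact set} $\Gamma := \{(x,y) : \phi(x) + \psi(y) = c(x,y)\}$, which equals the $c$-superdifferential $\partial^c\phi = \{(x,y) : \phi(z) \le c(z,y) - \psi(y) \text{ for all } z\}$.

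The next step is to show $\partial^c\phi(x)$ is single-valued at $\mu_1$-a.e.\ $x$. The key observation is that $c$-concavity with respect to the squared distance cost forces $\phi$ to be locally semi-convex: in the Euclidean case $\phi(x) - \tfrac{1}{2}|x|^2$ is concave (this is Brenier's reformulation), and on a compact Riemannian manifold $\phi$ is locally a difference of a smooth function and a convex one in normal coordinates. A Rademacher-type theorem then implies $\phi$ is differentiable (volume-)almost everywhere. At any point $x$ of differentiability, a standard first-order argument on the defining inequality of $\partial^c\phi$ shows that the $c$-superdifferential reduces to the single point $T(x) = \exp_x(-\nabla\phi(x))$, which specializes to $T(x) = x - \nabla\phi(x)$ in $\mathbb{R}^n$. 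Since $\mu_1$ is absolutely continuous with respect to volume, the set of non-differentiability is $\mu_1$-negligible, so $T$ is well-defined and measurable $\mu_1$-a.e.

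Combining the two steps: every optimal $\gamma$ is supported on $\{(x,T(x))\}$, hence $\gamma = (\mathrm{id} \times T)_\# \mu_1$. This both exhibits $\gamma$ as concentrated on the graph of $T$ and forces uniqueness, because any measure with first marginal $\mu_1$ that is concentrated on a graph is completely determined by the graph, and the map $T$ is determined $\mu_1$-a.e.\ by $\nabla\phi$; two different optimal $\gamma$ would therefore have to produce the same map, and hence coincide.

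The main obstacle I expect is the differentiability step for $c$-concave potentials in the manifold case. One must handle the cut locus $\mathrm{cut}(x)$, on which $y \mapsto c(x,y)$ fails to be smooth, and verify that the resulting exceptional set is negligible for $\mu_1$. On a compact manifold this is McCann's contribution: the cut locus has zero volume, $c$-concave potentials lie in $W^{1,\infty}_{\mathrm{loc}}$, and absolute continuity of $\mu_1$ absorbs the residual bad set; in $\mathbb{R}^n$ this reduces to Brenier's original observation that a convex function is differentiable off a set of Lebesgue measure zero, which is where the absolute-continuity hypothesis on $\mu_1$ is used.
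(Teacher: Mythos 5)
The paper does not prove this statement; it is imported directly from the literature, with \cite{brenier1987decomposition} cited for $\mathbb{R}^n$ and \cite{mccann2001polar} for the compact manifold case. Your sketch correctly reconstructs the standard argument of those references---Kantorovich duality, $c$-concavity and local semi-convexity of the potential, Rademacher-type differentiability off a $\mu_1$-null set (where absolute continuity enters), single-valuedness of the $c$-superdifferential giving $T(x)=\exp_x(-\nabla\phi(x))$, and uniqueness of $\gamma$ because every optimizer concentrates on the graph of the same map---so it matches the intended proof in both substance and route.
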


\begin{Theorem}[\cite{agueh2011barycenters, kim2017wasserstein}]\label{kim}
Let $M$ be a connected, compact Riemannian manifold or $\mathbb{R}^n$ and let $W_2(M)$ be the $2$-Wasserstein space of probability measures on $M$. Let $\mu_1,\ldots, \mu_k$ in $W_2(M)$ be a finite set of elements of $W_2(M)$ which are all absolutely continuous with respect to volume. Then:
\begin{itemize}
    \item the $\mu_1,\ldots, \mu_k$ admit a unique barycenter $\bar{\mu}$ which is absolutely continuous with respect to volume, and
    \item if we denote by $T_k$ the optimal map from $\bar{\mu}$ to $\mu_k$ as guaranteed by Theorem \ref{mccann} then for $\bar{\mu}$-almost every $z$, $z$ is the unique barycenter of the points $T_1(z), T_2(z),\ldots,T_k(z)$.
\end{itemize}
\end{Theorem}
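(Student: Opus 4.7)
The plan is to follow the strategy of Agueh--Carlier in the Euclidean case and its adaptation by Kim--Pass to manifolds, reformulating the barycenter problem as a multi-marginal optimal transport problem. Concretely, I would introduce the cost
\[
c(x_1,\ldots,x_k) := \min_{z\in M} \sum_{i=1}^k d(x_i,z)^2
\]
and consider the multi-marginal problem of minimizing $\int c\,d\gamma$ over couplings $\gamma$ on $M^k$ with marginals $\mu_i$. A short Fubini-type computation (as in \cite{agueh2011barycenters}) shows that if $\gamma^*$ is an optimizer and $B(x_1,\ldots,x_k)$ denotes the pointwise barycenter map realizing the minimum defining $c$, then the pushforward $\bar\mu := B_\#\gamma^*$ is a Wasserstein $2$-barycenter, and conversely every barycenter arises this way. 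Existence of $\gamma^*$ is standard from weak compactness of the set of couplings with fixed marginals together with lower semicontinuity of the cost.

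For uniqueness and absolute continuity, I would use Theorem~\ref{mccann}. Since $\mu_1$ is absolutely continuous, optimality for the two-marginal problem between $\mu_1$ and $\bar\mu$ forces $T_1$ to be the (essentially unique) Brenier/McCann map, i.e.\ the gradient of a (c-)convex function on $M$. One then argues that at $\bar\mu$-a.e.\ $z$, the first-order optimality condition for $\bar\mu$ being a minimizer of $\mu\mapsto \sum_i W_2(\mu_i,\mu)^2$ translates---via duality and the Brenier/McCann representation---to the identity
\[
\sum_{i=1}^k \log_z T_i(z) = 0,
\]
which is exactly the Euler--Lagrange equation for $z$ being a critical point of $w\mapsto \sum_i d(T_i(w),w)^2$ on $M$. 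Combined with a small-neighborhood/convexity argument (using compactness of $M$ to restrict to regions where the pointwise barycenter problem is strictly convex), this gives that $z$ is the \emph{unique} pointwise barycenter of the $T_i(z)$'s. Feeding this back shows $\bar\mu = B\circ(\mathrm{id},T_2\circ T_1^{-1},\ldots,T_k\circ T_1^{-1})_\#\mu_1$, so $\bar\mu$ is determined by $\mu_1$ alone, giving uniqueness. Absolute continuity of $\bar\mu$ then follows because the map $z\mapsto B(z,T_2\circ T_1^{-1}(z),\ldots)$ is a.e.\ well-defined, injective, and has a (weak) Jacobian bounded away from $0$ on compact sets, by a standard change-of-variables / displacement convexity argument.

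The main obstacle I foresee is the step establishing absolute continuity of $\bar\mu$, and more subtly, justifying uniqueness of the pointwise barycenter of $T_1(z),\ldots,T_k(z)$ in the manifold case. On $\mathbb{R}^n$ the pointwise barycenter is just the arithmetic mean and both issues are essentially trivial. On a general compact Riemannian manifold, however, one has to rule out configurations where the $T_i(z)$ are spread widely (e.g.\ antipodal on a sphere), which is where the hypothesis that $\mu_1$ is absolutely continuous and an appropriate regularity (or localization) argument for the optimal maps is crucial; this is precisely the technical heart of \cite{kim2017wasserstein}. The remaining steps---existence of $\gamma^*$, duality, and uniqueness of Brenier/McCann maps---are by now standard tools that can be invoked without further issue.
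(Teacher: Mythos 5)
The paper does not actually prove this statement: its ``proof'' is a two-sentence attribution, pointing to Theorem~5.1 and Lemma~4.3 of \cite{kim2017wasserstein} for the compact manifold case and to \cite{agueh2011barycenters} for the first bullet when $M=\mathbb{R}^n$ (with the observation that the Lemma~4.3 argument carries over verbatim for the second bullet). Your proposal instead sketches the internals of those cited works, and the sketch is faithful to how they actually proceed: the reformulation as a multi-marginal problem with cost $c(x_1,\ldots,x_k)=\min_z\sum_i d(x_i,z)^2$, the identification of barycenters with pushforwards $B_\#\gamma^*$ of optimal multi-marginal plans, the use of Theorem~\ref{mccann} to get Monge maps $T_i$ from the absolutely continuous $\bar\mu$, and the first-order condition $\sum_i\log_z T_i(z)=0$ are all genuinely the skeleton of \cite{agueh2011barycenters} and \cite{kim2017wasserstein}. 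You also correctly locate the hard parts (absolute continuity of $\bar\mu$ and uniqueness of the pointwise barycenter on a manifold, where cut loci and widely spread $T_i(z)$ must be ruled out). The one soft spot is your uniqueness step: ``$\bar\mu$ is determined by $\mu_1$ alone'' via maps $T_i\circ T_1^{-1}$ is circular as written, since those maps are themselves defined in terms of the barycenter whose uniqueness is being claimed; the actual arguments use either uniqueness of the optimal multi-marginal plan (a linear-programming fact exploiting absolute continuity of one marginal) or strict convexity of $\mu\mapsto\sum_i W_2(\mu_i,\mu)^2$ along generalized geodesics. Since the paper treats this as an imported result, the practical difference is that your route commits to reproving it, which buys self-containedness at the cost of several pages of technical work that the authors deliberately outsource.
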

\begin{proof}
For the case where $M$ is a connected, compact Riemannian manifold, the statements are special cases of Theorem 5.1 and Lemma 4.3 in \cite{kim2017wasserstein} respectively. For the case $M = \mathbb{R}^n$, the first statement was first proven by Agueh and Carlier in \cite{agueh2011barycenters}, and the proof of Lemma 4.3 in \cite{kim2017wasserstein} works for the second part without modifications.
\end{proof}

\begin{Proposition}\label{ACcase}
Let $M$ be a connected, compact Riemannian manifold or $\mathbb{R}^n$ and let $W_2(M)$ be the $2$-Wasserstein space of probability measures on $M$. Let $\mu_1, \ldots, \mu_k$ in $W_2(M)$ be a finite set of elements of $W_2(M)$ which are absolutely continuous with respect to volume. Then if $\bar{\mu}$ is a (local) $2$-barycenter of $\{\mu_1,\mu_2, \ldots, \mu_k\}$  in $W_2(M)$, then $\bar{\mu}$ is not a (local) $2$-barycenter of $\mu_1',\mu_2\, \ldots, \mu_k$ when $\mu_1 \neq \mu_1'$.
\end{Proposition}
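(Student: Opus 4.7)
The plan is to leverage the two powerful results just stated (Theorem \ref{mccann} and Theorem \ref{kim}) together with the pointwise OPC property of $M$, which has already been established in the preceding proposition. The key reduction is Observation \ref{W2global}: any local $2$-barycenter in a Wasserstein space is global. So in both the original and the perturbed problem, ``local $2$-barycenter" can immediately be upgraded to ``global $2$-barycenter," and the uniqueness/characterization machinery of \cite{kim2017wasserstein,agueh2011barycenters} becomes available.

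First I would argue as follows. Since $\bar{\mu}$ is a local (hence global, by Observation \ref{W2global}) $2$-barycenter of the absolutely continuous measures $\mu_1,\ldots,\mu_k$, Theorem \ref{kim} says that $\bar{\mu}$ is unique, absolutely continuous, and for $\bar{\mu}$-a.e.\ $z$ the point $z$ is the unique barycenter in $M$ of the tuple $(T_1(z),T_2(z),\ldots,T_k(z))$, where $T_i$ is the optimal transport map from $\bar{\mu}$ to $\mu_i$ provided by Theorem \ref{mccann}. Now suppose toward contradiction that $\bar{\mu}$ is also a local $2$-barycenter of $\mu_1',\mu_2,\ldots,\mu_k$. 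Once again by Observation \ref{W2global} and Theorem \ref{kim} applied to this new collection (whose members are still all absolutely continuous), $\bar{\mu}$ is the unique global $2$-barycenter and for $\bar{\mu}$-a.e.\ $z$ the point $z$ is the unique barycenter in $M$ of $(T_1'(z),T_2(z),\ldots,T_k(z))$, where $T_1'$ is the optimal map from $\bar{\mu}$ to $\mu_1'$.

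The two characterizations hold simultaneously for $\bar{\mu}$-a.e.\ $z$. Thus for $\bar{\mu}$-a.e.\ $z$, the point $z$ is a local (in fact unique) $2$-barycenter in $M$ of both $(T_1(z),T_2(z),\ldots,T_k(z))$ and $(T_1'(z),T_2(z),\ldots,T_k(z))$. By the $2$-OPC property of $M$ just proved in the previous proposition, this forces $T_1(z)=T_1'(z)$ for $\bar{\mu}$-a.e.\ $z$. Pushing forward $\bar{\mu}$ under the (essentially) identical maps $T_1$ and $T_1'$ then yields $\mu_1=(T_1)_\#\bar{\mu}=(T_1')_\#\bar{\mu}=\mu_1'$, contradicting $\mu_1\neq\mu_1'$.

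I expect the main subtlety not to be in the argument structure but in verifying that Theorem \ref{kim} may be invoked symmetrically on both sides: the first application guarantees that the original $\bar{\mu}$ is absolutely continuous, which is precisely the hypothesis needed to make Brenier/McCann's optimal maps $T_i$ (and $T_1'$) well-defined from $\bar{\mu}$, and to access the pointwise barycenter characterization for the perturbed family. Once absolute continuity of $\bar{\mu}$ is in hand, the rest is essentially a pushforward-uniqueness argument combined with a measurable selection of the OPC property, which is immediate since both tuples are defined on a common full-measure set for $\bar{\mu}$.
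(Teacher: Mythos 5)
Your proof matches the paper's argument essentially step for step: apply Theorem \ref{kim} to both families to obtain the pointwise characterization of $\bar{\mu}$-a.e.\ $z$ as the unique barycenter of $(T_1(z),\ldots,T_k(z))$ and of $(T_1'(z),T_2(z),\ldots,T_k(z))$, invoke the $2$-OPC property of $M$ to force $T_1 = T_1'$ $\bar{\mu}$-almost everywhere, and conclude $\mu_1 = \mu_1'$ by pushing forward, contradicting the hypothesis. The only cosmetic difference is that you make the local-to-global upgrade via Observation \ref{W2global} explicit, which the paper leaves implicit.
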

\begin{proof}
Assume for contradiction that $\bar{\mu}$ is a $2$-barycenter of $\mu_1',\mu_2\, \ldots, \mu_k$ when $\mu_1 \neq \mu_1'$. By Theorem \ref{kim}, for $\bar{\mu}$-almost every $z$, $z$ is the unique barycenter of the points $T_1(z)$,  $T_2(z)$,$\ldots$,$T_k(z)$, where $T_k$ is defined as in Theorem \ref{kim}. In addition, for $\bar{\mu}$-    almost every $z$, $z$ is the unique barycenter of the points $T'_1(z), T_2(z),\ldots,T_k(z)$, where $T'_1$ is the optimal map from $\bar{\mu}$ to $\mu_1'$. Since $M$ has the $p$-OPC property, this means that $T'_1(z) = T_1(z)$ for $\bar{\mu}$-almost every $z$. It follows that for any measureable subset $A$, $\mu_1(A) = \bar{\mu}(T_1^{-1}(A)) = \bar{\mu}(T_1^{\prime -1}(A)) = \mu'_1(A)$, a contradiction since $\mu_1 \neq \mu_1'$.
\end{proof}

We now consider the situation where $\mu_1,\ldots,\mu_k$ are finitely supported distributions over $\mathbb{R}^n$, since these are often used in applications to data, or to approximate continuous distributions. In this situation Anderes, Borgwadt and Miller~\cite{anderes2016discrete} showed the existence of a finite subset $S$ such that any $2$-barycenter of  $\mu_1,\ldots,\mu_k$ has support contained in $S$. Let $\mathcal{S} = \{s_0, s_1, \ldots, s_K\}$ denote the union of the above set $S$ and all the supports of $\mu_1,\ldots,\mu_k$, and let $\bar{\mu}$ be any $2$-barycenter of $\mu_1,\ldots,\mu_k$. For each $i$, choose an optimal coupling between each $\mu$ and $\mu_i$, encoded as a list of values $y_{i}^{j\ell}$ where $y_{i}^{j\ell}$ denotes the coupling between point $s_j$ in $\mu_i$ and point $s_\ell$ in $\mu$. In particular, we have the following identities
\[
\forall_i \forall_\ell \sum_j y_{i}^{j\ell} = \mu(\{s_\ell\}), \quad \forall_i \forall_j \sum_\ell y_{i}^{j\ell} = \mu_i(\{s_j\}).
\]

\begin{Lemma}[Lemma 1 in \cite{anderes2016discrete}]\label{anderes}
With the notation above, for any $s_j \in \supp(\bar{\mu})$, there is for each $i$ a unique point $s_{j_i} \in \mathcal{S}$ for which $y_{i}^{ j_i j} > 0$. Moreover, we have $s_j =\sum_i s_{j_i} /k $, so that in particular $s_j$ is the $2$-barycenter of the $s_{j_i}$.
\end{Lemma}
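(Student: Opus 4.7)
The plan is a contradiction argument: assume the conclusion fails at some atom of $\bar{\mu}$ and construct a competitor $\bar{\mu}'$ with strictly smaller \Fre cost, contradicting the fact that any $2$-barycenter in $W_2(\mathbb{R}^n)$ is in fact a global minimizer (Observation \ref{W2global}). Suppose $s_j \in \supp(\bar{\mu})$ is such that either two distinct points $s_{j_i}, s_{j_i'} \in \mathcal{S}$ both satisfy $y_i^{j_i j}, y_i^{j_i' j} > 0$ for some $i$, or each preimage is unique but $s_j \neq \frac{1}{k}\sum_i s_{j_i}$.

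The first step is to glue the pairwise couplings through $s_j$ into a multi-marginal coupling. Set $m_j := \bar{\mu}(\{s_j\}) > 0$ and define
\[
\pi_j^{j_1 \cdots j_k} := \frac{1}{m_j^{k-1}} \prod_{i=1}^k y_i^{j_i j}.
\]
A direct calculation using the marginal identity $\sum_{j_i} y_i^{j_i j} = m_j$ shows that $\pi_j$ has total mass $m_j$ and that its $i$th marginal assigns mass $y_i^{j_i j}$ to $s_{j_i}$; in particular $\pi_j^{j_1 \cdots j_k} > 0$ exactly when every $y_i^{j_i j}$ is positive. Form the competitor $\bar{\mu}'$ by replacing the Dirac at each $s_j$ with the sum of Diracs at the Euclidean averages $\bar{s}(j_1,\ldots,j_k) := \frac{1}{k}\sum_i s_{j_i}$, weighted by $\pi_j^{j_1 \cdots j_k}$, and define transport plans from each $\mu_i$ to $\bar{\mu}'$ by routing mass $\pi_j^{j_1 \cdots j_k}$ from $s_{j_i}$ to $\bar{s}(j_1,\ldots,j_k)$ for every $(j, j_1, \ldots, j_k)$. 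The marginal on $\mu_i$ is correct by the gluing identities, so these are admissible (not necessarily optimal) couplings and yield upper bounds on $W_2(\mu_i, \bar{\mu}')^2$.

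To compare costs, observe that using the marginal property of $\pi_j$ the old cost contribution from mass at $s_j$ can be rewritten as
\[
\sum_{(j_1,\ldots,j_k)} \pi_j^{j_1 \cdots j_k} \sum_i \|s_{j_i} - s_j\|^2,
\]
while the new-plan contribution at the same atom has $s_j$ replaced by $\bar{s}(j_1,\ldots,j_k)$. Since the Euclidean mean uniquely minimizes $s \mapsto \sum_i \|s_{j_i} - s\|^2$, each inner sum is at most the corresponding old one and is strictly smaller unless $s_j = \bar{s}(j_1,\ldots,j_k)$. Summing over all $s_j \in \supp(\bar{\mu})$ gives a competitor with cost at most that of $\bar{\mu}$, with strict inequality unless $s_j = \bar{s}(j_1,\ldots,j_k)$ for every tuple with positive $\pi_j$-mass. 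Optimality of $\bar{\mu}$ therefore forces this identity; varying a single index $j_i$ among the choices with $y_i^{j_i j} > 0$ while holding the others fixed shows that the preimage in $\mu_i$ must be unique, and the identity itself becomes the desired averaging formula $s_j = \frac{1}{k}\sum_i s_{j_i}$.

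The main obstacle is the bookkeeping for the gluing step: one must verify that $\pi_j$ is a legitimate joint coupling with the prescribed pairwise marginals and that the resulting plans from each $\mu_i$ to $\bar{\mu}'$ are admissible transport plans. Once that is in place, the heart of the argument is the familiar variational characterization of the Euclidean mean as the unique minimizer of a sum of squared distances.
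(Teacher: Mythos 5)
The paper does not prove this lemma---it is imported verbatim from Anderes--Borgwardt--Miller \cite{anderes2016discrete}---so there is no in-paper proof to compare against. Your argument is a correct, self-contained proof. The product gluing $\pi_j^{j_1\cdots j_k} = m_j^{-(k-1)}\prod_i y_i^{j_i j}$ does have total mass $m_j$ and $i$th marginal $y_i^{j_i j}$ (each check uses $\sum_{j_i} y_i^{j_i j}=m_j$ exactly once), the induced plans from each $\mu_i$ to the competitor $\bar{\mu}'$ have the right marginals and hence give valid upper bounds on $W_2(\mu_i,\bar{\mu}')^2$, and the rewriting of the old cost at the atom $s_j$ as $\sum_{(j_1,\ldots,j_k)}\pi_j^{j_1\cdots j_k}\sum_i\|s_{j_i}-s_j\|^2$ collapses correctly back to $\sum_i\sum_{j_i}y_i^{j_i j}\|s_{j_i}-s_j\|^2$. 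Global optimality of $\bar{\mu}$ (which is what Definition \ref{barycenter} gives here; the appeal to Observation \ref{W2global} is only needed for the local case in the subsequent corollary) then forces $s_j=\frac{1}{k}\sum_i s_{j_i}$ on every tuple of positive $\pi_j$-mass, and since $\pi_j$ is a product measure, positivity of $y_i^{j_i j}$ and $y_i^{j_i' j}$ for $j_i\neq j_{i}'$ would put two tuples differing in one slot in the support, forcing $s_{j_i}=s_{j_i'}$---so the preimage is unique and the averaging identity is exactly the claim. This is essentially the multi-marginal/perturbation argument underlying the cited paper's treatment; the only point worth stating explicitly if you write this up is that the competitor $\bar{\mu}'$ remains finitely supported, so it is admissible in either formulation of the minimization.
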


Intuitively this means that the discrete situation in $\mathbb{R}^n$ is very similar to the absolutely continuous case originally considered in \cite{agueh2011barycenters}: the support of the Wasserstein barycenter consists of the (metric) barycenters of points it is coupled to. It is therefore unsuprising that the $p$-OPC property holds in this setting.

\begin{Corollary}
Let $W_2(\mathbb{R}^n)$ be the $2$-Wasserstein space of probability measures on $\mathbb{R}^n$. Let $\mu_1, \ldots, \mu_k$ in $W_2(M)$ be a finite set of discrete distributions with finite support. Then if $\bar{\mu}$ is a (local) $2$-barycenter of $\{\mu_1,\mu_2, \ldots, \mu_k\}$  in $W_2(\mathbb{R}^n)$, then $\bar{\mu}$ is not a (local) $2$-barycenter of $\mu_1',\mu_2\, \ldots, \mu_k$ when $\mu_1 \neq \mu_1'$.
\end{Corollary}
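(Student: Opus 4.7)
The plan is to follow the same template as Proposition \ref{ACcase}, but with Lemma \ref{anderes} playing the role that Theorem \ref{kim} played in the absolutely continuous setting. The first step is to remove the parenthetical ``local'': by Observation \ref{W2global}, together with its closing sentence about the subspace of finitely supported measures, every local $2$-barycenter in this context is a global one, so I may assume $\bar{\mu}$ is a genuine $2$-barycenter of $\mu_1, \mu_2, \ldots, \mu_k$ and, for contradiction, also of $\mu_1', \mu_2, \ldots, \mu_k$.

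Next I would set up a common ambient finite point set $\mathcal{S} = \{s_0,\ldots,s_K\}$ containing the supports of $\bar{\mu}$, $\mu_1$, $\mu_1'$, $\mu_2, \ldots, \mu_k$, and the finite candidate supports from \cite{anderes2016discrete} associated to both collections. Fix optimal couplings from $\bar{\mu}$ to each of $\mu_2, \ldots, \mu_k$ once and for all, and use these \emph{same} couplings in both applications of Lemma \ref{anderes}; for the first marginal, pick any optimal couplings from $\bar{\mu}$ to $\mu_1$ and from $\bar{\mu}$ to $\mu_1'$. Applying Lemma \ref{anderes} twice, for each $s_j \in \supp(\bar{\mu})$ there are unique indices $j_1, j_1', j_2, \ldots, j_k$ (the last $k-1$ shared between both applications) such that
\[
s_j \;=\; \frac{1}{k}\bigl(s_{j_1} + s_{j_2} + \cdots + s_{j_k}\bigr) \;=\; \frac{1}{k}\bigl(s_{j_1'} + s_{j_2} + \cdots + s_{j_k}\bigr).
\]
Subtracting forces $s_{j_1} = s_{j_1'}$ for every $s_j \in \supp(\bar{\mu})$. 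Hence the deterministic transport maps $T_1, T_1' : \supp(\bar{\mu}) \to \mathcal{S}$ defined by $T_1(s_j) = s_{j_1}$ and $T_1'(s_j) = s_{j_1'}$ coincide. Since Lemma \ref{anderes} also tells us that each $s_j$ sends all of its $\bar{\mu}$-mass to a single point of $\mu_1$ (respectively $\mu_1'$) under the chosen coupling, we have $\mu_1 = (T_1)_\# \bar{\mu} = (T_1')_\# \bar{\mu} = \mu_1'$, contradicting $\mu_1 \neq \mu_1'$.

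The main obstacle I anticipate is the bookkeeping needed to ensure that the indices $j_2,\ldots,j_k$ can legitimately be taken to be the same in both invocations of Lemma \ref{anderes}. This ultimately comes down to the observation that the marginals $\mu_2,\ldots,\mu_k$ are identical in the two scenarios, so the very same optimal couplings from $\bar{\mu}$ to each of them may be reused; the displayed barycenter identities then share the last $k-1$ summands, and the contradiction falls out of a one-line Euclidean subtraction. Everything else is routine.
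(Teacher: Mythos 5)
Your proposal is correct and follows essentially the same route as the paper: both arguments fix the optimal couplings from $\bar{\mu}$ to $\mu_2,\ldots,\mu_k$, apply the Anderes--Borgwardt--Miller lemma to the two collections, and use the fact that each support point of $\bar{\mu}$ is the Euclidean mean of the unique points it is coupled to in order to force the couplings to $\mu_1$ and $\mu_1'$ to agree, whence $\mu_1 = \mu_1'$. The only cosmetic differences are that you make the reduction from local to global barycenters via Observation \ref{W2global} explicit, and you carry out the one-line subtraction directly where the paper instead cites the $2$-OPC property of $\mathbb{R}^n$ (which is the same computation).
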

\begin{proof}
Define the couplings $y_{i}^{j\ell}$ as above, and define $y_{1}^{\prime j\ell}$ to be an optimal coupling between $\mu$ and $\mu_1'$. Applying Lemma \ref{anderes} to the set of couplings $y_{i}^{j\ell}$ as well as this set with $y_{1}^{j\ell}$ replaced by $y_{1}^{\prime j\ell}$, we see that since $\mathbb{R}^n$ has the $2$-OPC property, we have $y_{1}^{\prime j\ell} > 0 \Leftrightarrow y_{1}^{j\ell} > 0$. This, together with $\sum_j y_{1}^{\prime j\ell} = \mu(\{s_\ell\})$ gives us $y_{1}^{\prime j\ell} = y_{1}^{j\ell}$ for all $j, \ell$, hence $\mu_1 = \mu'_1$. 
\end{proof}

Summarizing, we get the following sufficient conditions for unique optimal representatives. 

\begin{Corollary}\label{onepointcases}
Suppose $X$ is one of the following spaces:
\begin{itemize}
    \item a connected, compact manifold $M$;
    \item the subset of $W_2(M)$ consisting of absolutely continuous measures with respect to volume, where $M$ is a connected, compact manifold;
    \item the subset of $W_2(\mathbb{R}^n)$ consisting of finitely supported measures;
    \item $\mathsf{SP}^k Y$ for $Y$ either a compact connected manifold or $\mathbb{R}^n$.
\end{itemize}
Let $S$ be a set of points in $X^k$, and let $[\mathbf{x}]$ be a (local) barycenter of the set $[S] \subseteq \SP$ where $\SP$ is endowed with the $2$-Wasserstein metric. Then for every $\mathbf{s} \in S$, there is a unique representative $\hat{\s} \in [\s]$ minimizing $\sum_{i=1}^k d(\hat{\textbf{x}}_i, \hat{\s}_{i})^2$.
\end{Corollary}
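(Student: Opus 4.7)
The proof reduces entirely to verifying the $2$-OPC property (Definition \ref{onepointchange}) for each space on the list and then invoking the earlier proposition asserting that if $X$ satisfies the $2$-OPC property, then for every local $2$-barycenter $[\x]$ of $[S] \subseteq \SP$ and every $\s \in S$, the representative $\hat{\s} \in [\s]$ minimizing $\sum_i d(\hat{\x}_i, \hat{\s}_i)^2$ is unique. So the plan is just to walk through the four cases.

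For $X$ a connected compact Riemannian manifold, the $2$-OPC property is established in the earlier proposition using the identity $\sum_i \log_{\bar{x}} x_i = 0$ at a local $2$-barycenter, so that fixing $\bar{x}$ and all but one of the $x_i$ determines the remaining point. The Euclidean case $X = \mathbb{R}^n$ was observed directly, since $2$-barycenters in Euclidean space are coordinate means.

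For $X$ equal to the subset of $W_2(M)$ consisting of absolutely continuous measures (with $M$ a compact connected Riemannian manifold or $\mathbb{R}^n$), the $2$-OPC property is exactly the content of Proposition \ref{ACcase}; Theorem \ref{kim} additionally guarantees that the local barycenter itself lies in the subspace of absolutely continuous measures, so the $2$-OPC condition makes sense relative to this subspace. For $X$ the subset of $W_2(\mathbb{R}^n)$ consisting of finitely supported measures, the $2$-OPC property is precisely the statement of the corollary immediately preceding this one.

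Finally, for $X = \mathsf{SP}^k Y$ with $Y$ a compact connected manifold or $\mathbb{R}^n$, the first step shows $Y$ has the $2$-OPC property, and then the earlier proposition transferring the $2$-OPC property from a space to its symmetric product shows that $\mathsf{SP}^k Y$ inherits it. There is no real obstacle beyond bookkeeping; the only subtlety, already noted, is that in the second case local $2$-barycenters of absolutely continuous measures are themselves absolutely continuous, which keeps us inside the relevant subspace so that the $2$-OPC hypothesis of the unique-representative proposition can be applied.
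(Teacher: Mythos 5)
Your proposal is correct and follows exactly the route the paper intends: the corollary is stated as a summary, with the $2$-OPC property supplied case by case by the preceding propositions (the manifold proposition, Proposition \ref{ACcase}, the finitely-supported corollary, and the transfer of $2$-OPC to symmetric products), then combined with the proposition that $2$-OPC implies unique optimal representatives. Your added remark that local $2$-barycenters of absolutely continuous measures remain absolutely continuous, so the argument stays within the relevant subspace, is a worthwhile point that the paper leaves implicit.
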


The next example gives a class of spaces which do not have the $2$-OPC property.

\begin{Example}
Let $X$ be any space that admits a branching geodesic $\phi:[a,b] \to X$. Without loss of generality, suppose $\phi(b) \neq \bar{\phi}(b)$ and that the geodesic branches at $t_0 = (a+b)/2$. It is easy to check that $\phi(t_0)$ is the $2$-barycenter of the set $\{\phi(a), \phi(b)\}$ as well as of the set  $\{\bar{\phi}(a), \bar{\phi}(b)\}$. Since $\phi(a) = \bar{\phi}(a)$, $X$ does not have the $2$-OPC property. For a concrete example of a space $X$ with the branching property, consider the space $X$ obtained by gluing three copies of the ray $[0,\infty)$ at zero with $d$ the shortest path metric.
\end{Example}

\section{Examples}\label{sec:examples}

To illustrate the flexibility and intuitive outputs of Algorithm \ref{algorithm}, we provide in this section some experimental results on simple datasets.

\subsection{A simple non-Euclidean example}\label{sec:circle}
In this section we demonstrate Algorithm \ref{algorithm} in a case where the descent operator does not always pick out a local barycenter, but where the algorithm nonetheless converges to a local barycenter in finite time. In this example $X$ will also not be a Euclidean space, but rather the circle $S^1$ with the usual geodesic metric. We will use Algorithm \ref{algorithm} to compute the local $2$-barycenter of a set of points in $\mathsf{SP}^2 S^1$. For two points $x,y$, we denote by $\angle xy$ the signed clockwise angle from $x$ to $y$ measured in $[-\pi, \pi]$, addressing the choice between $-\pi$ and $\pi$ as a special case when necessary. The distance between two points is therefore equal to the absolute value of the angle between them.

For $S \subseteq S^1$ and $x \in S^1$, we define $\LB(S,x)$ by the formula $\angle x \LB(S,x) = \frac{1}{|S|} \sum_{s \in S} \angle xs
$. If $S$ contains antipodes of $x$, we define $\angle xs$ for all of the antipodes to be whichever of $\{-\pi, \pi\}$ makes $
D' = \sum_{s \in S} (\angle xs -  \frac{1}{|S|} \sum_{s \in S} \angle xs)^2$ smaller. If $D'$ is the same for $\pm \pi$, we choose $\pi$.

\begin{Proposition}\label{LBcircle}
With $\LB$ defined as above, $\LB$ is a $2$-descent operator. Moreover, for fixed $S$, $B(S,-)$ takes only finitely many values.
\end{Proposition}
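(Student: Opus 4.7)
The plan is to pull back the $2$-Fr\'{e}chet functional $f_{S,2}$ via a signed-angle parameterization at $x$ and compare it with an explicit global quadratic majorant. Let $\phi:(-\pi,\pi] \to S^1$ send $t$ to the point at signed angle $t$ from $x$, set $g(t) := f_{S,2}(\phi(t)) = \sum_{s \in S} d(\phi(t),s)^2$, and partition $S = S_0 \sqcup S_1$ according to whether $s$ is non-antipodal or antipodal to $x$. Write $a_s := \angle xs \in (-\pi,\pi)$ for $s \in S_0$ and $a_s \in \{-\pi,\pi\}$ for $s \in S_1$, the latter chosen by the rule in the statement. Using the pointwise inequality $d(\phi(t),s) \leq |t - a_s|$ for $s \in S_0$ together with the exact identity $d(\phi(t),s)^2 = (\pi - |t|)^2$ for $s \in S_1$ and $t \in (-\pi,\pi]$, I would define
\[
h(t) := \sum_{s \in S_0}(t-a_s)^2 + \sum_{s \in S_1}(\pi - |t|)^2 = kt^2 - 2tA - 2\pi k_2|t| + C,
\]
where $A := \sum_{s \in S_0} a_s$, $k_2 := |S_1|$, and $C$ is constant in $t$. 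Then $g \leq h$ on $(-\pi,\pi]$, with equality at $t = 0$.

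Next, I would identify $\bar a := \angle x \LB(S,x)$ as a global minimizer of $h$ on $\mathbb{R}$. The function $h$ is piecewise quadratic with a possible kink at $0$: on $t \geq 0$ it is minimized at $t^+ := \max\{0,(A+\pi k_2)/k\}$ with value $C - k(t^+)^2$, and symmetrically on $t \leq 0$ at $t^-$ with value $C - k(t^-)^2$. The global minimum is attained at whichever of $t^\pm$ has the larger absolute value; unwinding the definition of $\LB$, this is exactly $\bar a$, because $\sum_s a_s^2$ does not depend on the antipodal sign choice, and so minimizing the variance $D'$ is equivalent to maximizing $|\bar a|$. Hence $h(\bar a) \leq h(0) = g(0)$, with strict inequality unless $\bar a = 0$; direct inspection shows that $\bar a = 0$ iff $A = 0$ and $k_2 = 0$. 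In the strict case $\bar a \neq 0$ we conclude $g(\bar a) \leq h(\bar a) < g(0)$, establishing Definition \ref{def:descent}(a). In the case $\bar a = 0$, $g$ agrees with the smooth quadratic $h(t) = kt^2 + C$ on a full neighborhood of $0$ (no antipodal contributions are present and no $\pi$-boundary is crossed), so $x$ is a local $2$-barycenter, establishing (b).

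For the finite-image claim, I would observe that on each connected component (``chamber'') of $S^1 \setminus \{-s : s \in S\}$, the assignment $x \mapsto \angle xs$ is continuous into $(-\pi,\pi)$, and hence $\sum_s \angle xs \equiv \sum_s \theta_s - k\theta_x \pmod{2\pi}$ with a correction by an integer multiple of $2\pi$ that is locally constant on the chamber. It follows that $\theta_{\LB(S,x)} = \theta_x + \bar a \equiv \tfrac{1}{k}\sum_s \theta_s + \tfrac{2\pi c}{k} \pmod{2\pi}$ for an integer $c$ depending only on the chamber, so $\LB(S,\cdot)$ is constant on each of the at most $|S|$ chambers. The finite set of boundary points, where $x$ is antipodal to some element of $S$, contributes only finitely many further values via the binary $\pm\pi$ tiebreak, so $\LB(S,\cdot)$ has finitely many values overall.

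The main obstacle is that $\bar a$ may lie well outside the neighborhood of $0$ on which $g$ literally coincides with a quadratic, so a purely local analysis of $g$ is insufficient to show descent. The one-sided majorization $g \leq h$ combined with a global analysis of the explicitly computable piecewise quadratic $h$ sidesteps this issue cleanly.
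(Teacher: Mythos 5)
Your proof is correct. It follows the same skeleton as the paper's proof --- strict descent when $\LB(S,x)\neq x$; when $\LB(S,x)=x$, no point of $S$ is antipodal to $x$ and hence $x$ is a genuine local minimum; finiteness via constancy of $\LB(S,\cdot)$ on the arcs cut out by the antipodes of $S$ --- but the paper dispatches the descent step and the local-minimum step with ``it is easy to check,'' whereas you actually supply the argument. Your substantive addition is the global quadratic majorant $h\geq g$ with $h(0)=g(0)$: since $\bar a$ may lie far from $0$, where $g$ no longer agrees with any single quadratic, a local expansion alone cannot yield the descent inequality, and explicitly minimizing the piecewise quadratic $h$ is exactly the right way to close that gap. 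Where the paper rules out antipodes via a variance comparison (flipping the sign of the antipodal angles would strictly decrease $D'$ if the chosen mean were $0$), you read off ``$\bar a=0$ iff $A=0$ and $k_2=0$'' directly from the formula for the minimizer of $h$, having already identified the sign rule with maximization of $|\bar a|$; this is the same fact obtained more transparently. The finiteness arguments coincide: your chamber decomposition of $S^1\setminus\{-s : s\in S\}$ is the paper's ``at most $|S|$ arcs whose complement is at most $|S|$ points,'' with the angle bookkeeping made explicit.
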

\begin{proof}
It is easy to check that if $\LB(S,x) \neq x$ then $\sum_{s \in S} d(s, \LB(S,x))^2 < \sum_{s \in S} d(s, x)^2$. Suppose then that $\LB(S,x) = x$. We first observe that none of the points in $S$ are diametrically opposite $x$. Indeed, if some were, then $\frac{1}{|S|} \sum_{s \in S} \angle xs$ must change based on the choice between $-\pi$ and $\pi$. Denote one set of chosen angles by $\angle_0 sx$ and the other by $\angle_1 sx$, and suppose without loss of generality that $\frac{1}{|S|} \sum_{s \in S} \angle_0 xs = 0$. Then we get
\[
D' = \sum_{s \in S} (\angle_1 xs -  \frac{1}{|S|} \sum_{s \in S} \angle_1 xs)^2 < \sum_{s \in S} ( \angle_1 xs )^2 = \sum_{s \in S} ( \angle_0 xs )^2
\]
so that $\angle_1 sx$ is the correct choice, contradicting our assumption that $\LB(S,x) = x$. Since $S$ does not contain an antipode of $x$, it is easy to check that $x$ is a local barycenter of $S$. To show the second part of the statement, observe that $\LB(S,x) = \LB(S,x')$ if there are no points of $S$ on one of the arcs between the antipodes of $x$ and $x'$. It follows that $\LB(S,x)$ is constant on at most $|S|$ arcs whose complement is at most $|S|$ points, so $B(S,x)$ has only finite many possible values for a fixed subset $S$.
\end{proof}

We therefore conclude based on the discussion in Section \ref{sec:algorithm} that Algorithm \ref{algorithm} terminates after a finite number of iterations when we define $\LB$ as above. To demonstrate this experimentally, we generate 10 random unordered pairs of points on the circle, shown as the leftmost plot in Figure \ref{fig:circle1}. We then run Algorithm \ref{algorithm} with two different seeds and plot the barycenter locations as well as the points matched to each component of the barycenter. These are shown in the rightmost two columns of Figure \ref{fig:circle1}. The two different seeds produce different local $2$-barycenters of the input set.

\begin{figure}
    \centering
    \resizebox{!}{6cm}{
    \begin{tikzpicture}
    \node at (-9,-4.1) {Unordered pairs in $S^1$};
    \node at (-9,-6) {\includegraphics[width=.2\textwidth]{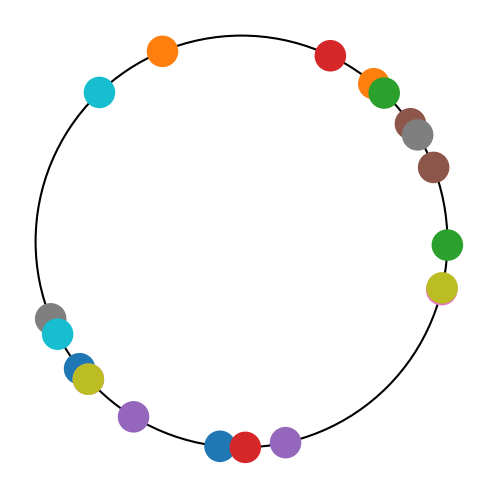}};
    
    \node at (-4, -2.1) {Barycenter};
    \node at (1.5, -2.1) {Matching};
    \node[rotate=90] at (-6.2, -4) {Dark blue seed};
    \node[rotate=90] at (-6.2, -8) {Orange seed};
    
    \node at (-4,-4) {\includegraphics[width=.2\textwidth]{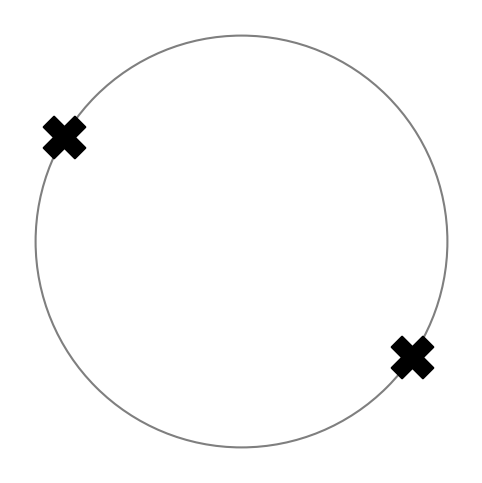}};
    \node at (-0,-4) {\includegraphics[width=.2\textwidth]{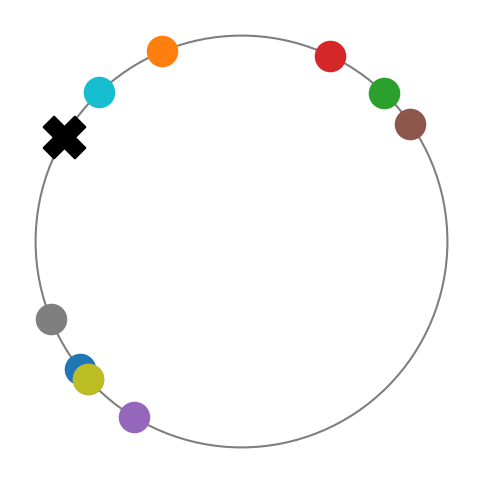}};
    \node at (3,-4) {\includegraphics[width=.2\textwidth]{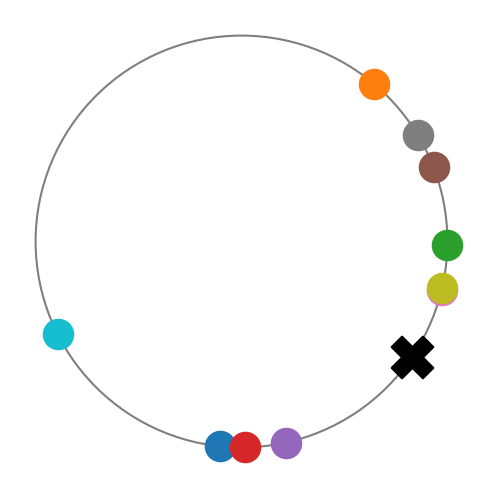}};
    \node at (-4,-8) {\includegraphics[width=.2\textwidth]{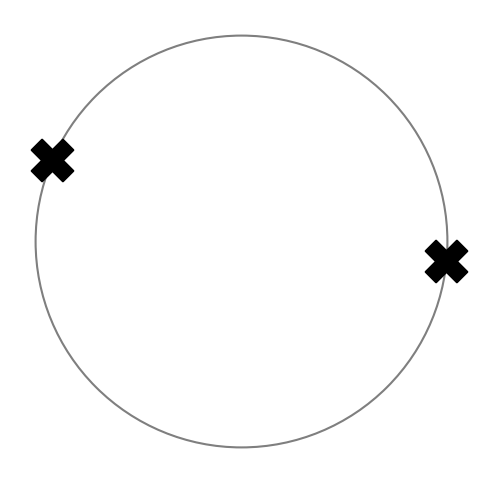}};
    \node at (-0,-8) {\includegraphics[width=.2\textwidth]{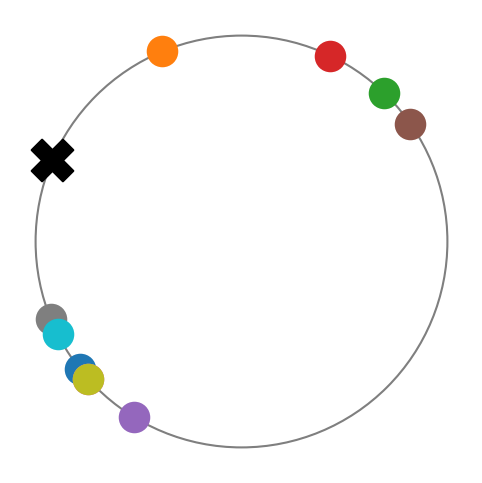}};
    \node at (3,-8) {\includegraphics[width=.2\textwidth]{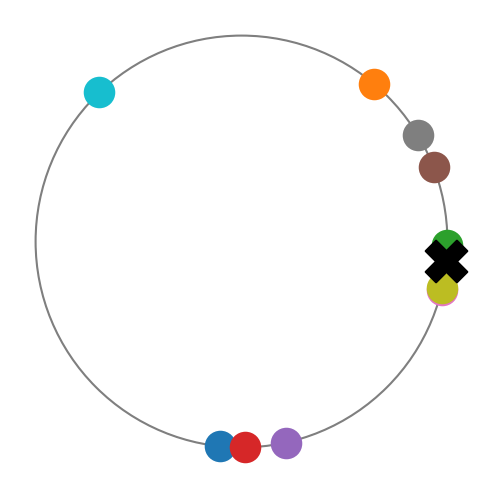}};
    \end{tikzpicture}
    }
    \caption{Demonstrating Algorithm \ref{algorithm} for a set of points in $\mathsf{SP}^2 S^1$ using the $2$-descent operator defined in Section \ref{sec:circle}. For two different initial seeds, the barycenter and the points matched to each point in the barycenter via a unique optimal matching are shown.}
    \label{fig:circle1}
\end{figure}

\subsection{Clustering algorithm consistency}\label{sec:clustering_algorithms}
We can use our method to compare the consistency of different clustering methods on point cloud data. To do this, we generate point clouds $P \subseteq \mathbb{R}^2$ consisting of 5,000 points each based on the test datasets provided by \texttt{scikit-learn} \cite{scikit-learn}, labeled A--E in Figure \ref{fig:clustering}. For each point cloud, we randomly subsample 10 sets $\{P_1,\ldots,P_{10}\}$ of 500 points each and partition these subsets using a clustering method. If we identify a point cloud with its empirical distribution, we can view a partitioned point cloud as an element of $\mathsf{SP}^k W_2(\mathbb{R}^2)$, where $k$ is the number of clusters. We can therefore compute a local $2$-barycenter on $\mathsf{SP}^k \Omega \subseteq \mathsf{SP}^k W_2(\mathbb{R}^2)$ of the ten partitioned point clouds, where $\Omega$ is a suitable subspace of $\mathsf{SP}^k W_2(\mathbb{R}^2)$ (see Remark \ref{rem:subspace}). For our experiments, we chose $\Omega$ to be the set of atomic distributions on 100 atoms, which is isometric to $\mathsf{SP}^{100} \mathbb{R}^2$ by Proposition \ref{prop:geometric_facts}. The descent operator we used for this application is the restricted version of Algorithm 2 in \cite{cuturi2014fast} implemented in \cite{flamary2021pot}. For the initial seed $\x_0$, we under- and oversample the clusters in the first partition as needed to obtain $k$ sets of 100 points each.

For this experiment we test four different clustering methods from \texttt{scikit-learn}: $k$-means, spectral clustering, Ward agglomerative clustering and single linkage clustering. We set $k = 3$ for datasets A, D and E and $k=2$ for the remaining two. For each point cloud and clustering method, we compute a barycenter $B$ as described above and display them in the middle four columns of Figure \ref{fig:clustering}. We also compute the distance $W_2(B, P_i)$ from the barycenter to each partitioned point cloud and show the distribution of these distances as boxplots. 

The $W_2(B, P_i)$ distances shown in the boxplots can be read as a constistency measure in that they indicate how similar the partitioned point clouds were. Note that these values are not a measure of the quality of the clusterings, only their consistency across sub-datasets. Indeed, $k$-means was very consistent on dataset D despite the barycenter not resembling the intuitively ``correct'' clustering. Single linkage clustering was the least consistent on datasets A, C and E, and the most consistent on dataset B. The components of the single linkage barycenter for A and E resemble affine transformations of the full data set; this is because all of the partitions contained one large component and other very small clusters.

\begin{figure}
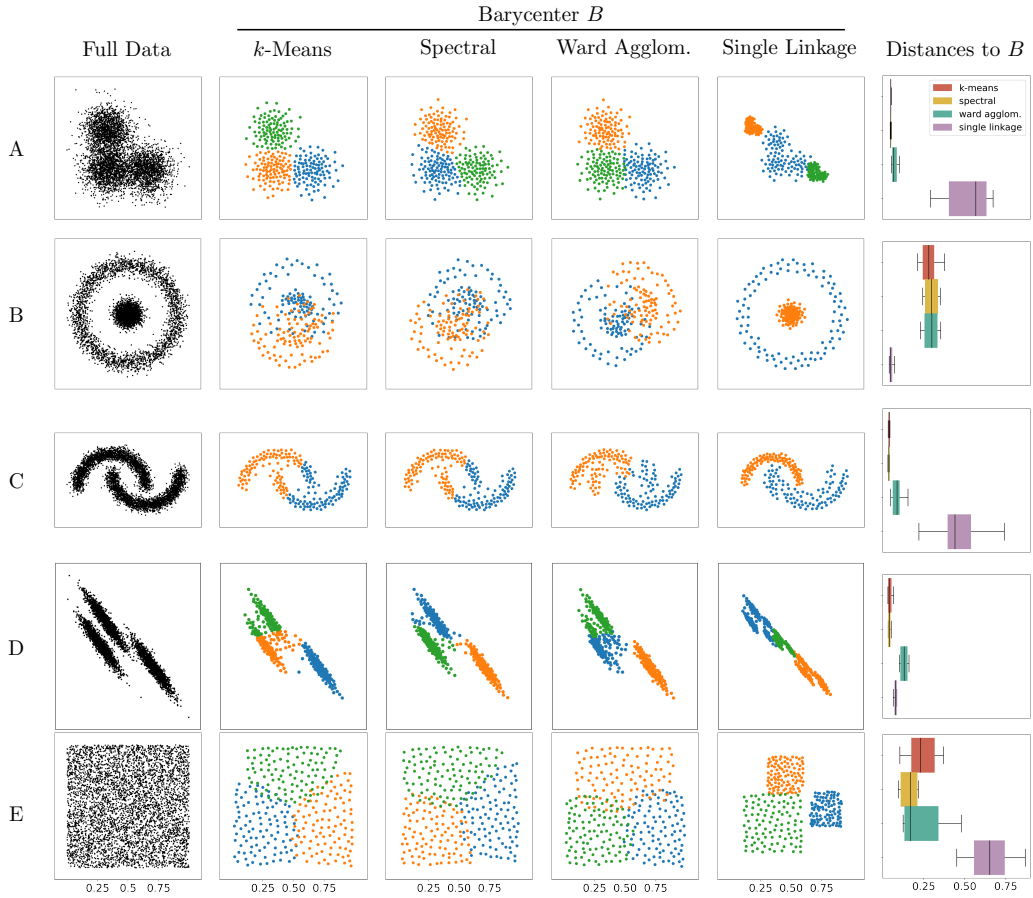

    \centering
    \resizebox{!}{12cm}{
    \begin{tikzpicture}
    \foreach \method [count=\i] in {$k$-Means, Spectral, Ward Agglom., Single Linkage}{
    \node at (3*\i, -1.2) {\method};
    }
    \node at (0, -1.2) {Full Data};
    \draw[thick] (2,-0.85) -- (13,-0.85);
    \node at (7.5, -0.6) {Barycenter $B$};
    \foreach \model [count=\j] in {gaussians, circles, moons, aniso, square} {
        \node at (0, -3*\j) {\includegraphics[width=0.18\textwidth]{clustering/sample_\model _False.png}};
    }
    \foreach \model [count=\j] in {A,B,C,D,E} {
        \node at (-2,-3*\j) {\model};
    }
    \foreach \method [count=\i] in {kMeans, spectral, agglomerative_ward, agglomerative_single}{
    \foreach \model [count=\j] in {gaussians, circles, moons, aniso, square} {
        \node at (3*\i, -3*\j) {\includegraphics[width=0.18\textwidth]{clustering/bary_\model _False_\method .png}};
    }
    }
    \node at (15, -1.2) {Distances to $B$};
    \foreach \model [count=\j] in {gaussians, circles, aniso, moons, square} {
        \node at (15, -3*\j) {\includegraphics[width=0.18\textwidth]{clustering/distortions_\model.png}};
    }
    \end{tikzpicture}
    }
    \caption{Comparing the consistency of clustering algorithms by subsampling. For each full data set (shown on the left), we generate 10 randomly subsampled datasets and cluster each of them into 2 or 3 clusters using a particular clustering method. We then compute the barycenter of the resulting partitioned point clouds (the middle plots). The distribution of distances from each partition to the barycenter are shown as boxplots (right).}
    \label{fig:clustering}
\end{figure}

\section{Applications to redistricting}\label{sec:redistrict}
In this section, we demonstrate applications of our method to political redistricting. Given a state $X$, which can be viewed as a polygonal planar region once an appropriate projection is chosen, we view a district $D \subseteq X$ as a probability distribution $P_D$ on $X$ whose support approximates that district. In practice, districts are built out of smaller territorial units such as Census blocks or voting precincts, so to discretize the problem, we assume that the measure of each territorial unit under $P_D$ is concentrated at its centroid. The distribution $P_D$ is therefore a sum of Dirac distributions with support all the centroids of territorial units assigned to $D$. In this paper, we will consider two different ways to define $P_D$: the \emph{area-weighted representation} in which the measure of a territorial unit $p$ under $P_D$ is proportional to its area, and a \emph{population-weighted representation} in which the measure of a precinct $p$ under $P_D$ is proportional to its population. 

In order to further reduce the computational cost of the algorithm, we approximate each $P_D$ by sampling $M$ points from it. We choose $M=40$ in this section based on the stability analysis in Section \ref{sec:stability}. We can therefore identify $P_D$ with an element of $\mathsf{SP}^M \mathbb{R}^2$, and a $k$-district redistricting plan with an element of $\mathcal{P} = \mathsf{SP}^k \mathsf{SP}^M \mathbb{R}^2$. To compute a local $2$-barycenter in $\mathcal{P}$, we use Algorithm \ref{algorithm} with a descent operator $\LB_1(S,x)$ given by applying Algorithm \ref{algorithm} to $S \subseteq \mathsf{SP}^M \mathbb{R}^2$ using seed $x$. The descent operator for this inner algorithm is given by $\LB_2(S,x) = \frac{1}{|S|} \sum_{\s \in S} \s$. Theorem \ref{localbarycenters} and Proposition \ref{termination} guarantee that Algorithm \ref{algorithm} returns a local $2$-barycenter in finite time and Corollary \ref{onepointcases} guarantees that there are unique labelings for each redistricting plan that arise from an optimal matching to that barycenter. Note that while enacted redistricting plans often come pre-equipped with identifiers such as ``1st Congressional District'', ``2nd Congressional district'', and so on, plans generated by a computer do not, hence the need for the geometry-aware labeling provided by our method. 

We demonstrate our method on Congressional redistricting in North Carolina, an area which has seen a great deal of litigation in the last decade, and which has already been extensively studied using ensemble methods in the literature \cite{mattingly1, amicus_math, duchinneedham21,amicus_geographers}. For some context on North Carolina's political geography, we show the 2016 Presidential two-way vote shares by precinct in Figure \ref{fig:labelledNC}. An examination of the stability of our method with respect to choice of initial seed and $M$ can be found in Section \ref{sec:stability}; in particular, we give evidence for a high degree of stability for both population- and area-weighted representations. For the sake of brevity, we will only show the population-weighted versions of plots in the main text and relegate the area-weighted version to Appendix (we will however mention any notable differences between the two). More details about the computational aspects of the study are available in Section \ref{sec:compdetails} as well as a link to our code base.

\begin{figure}
  \begin{minipage}[c]{0.5\textwidth}
    \includegraphics[width=\textwidth]{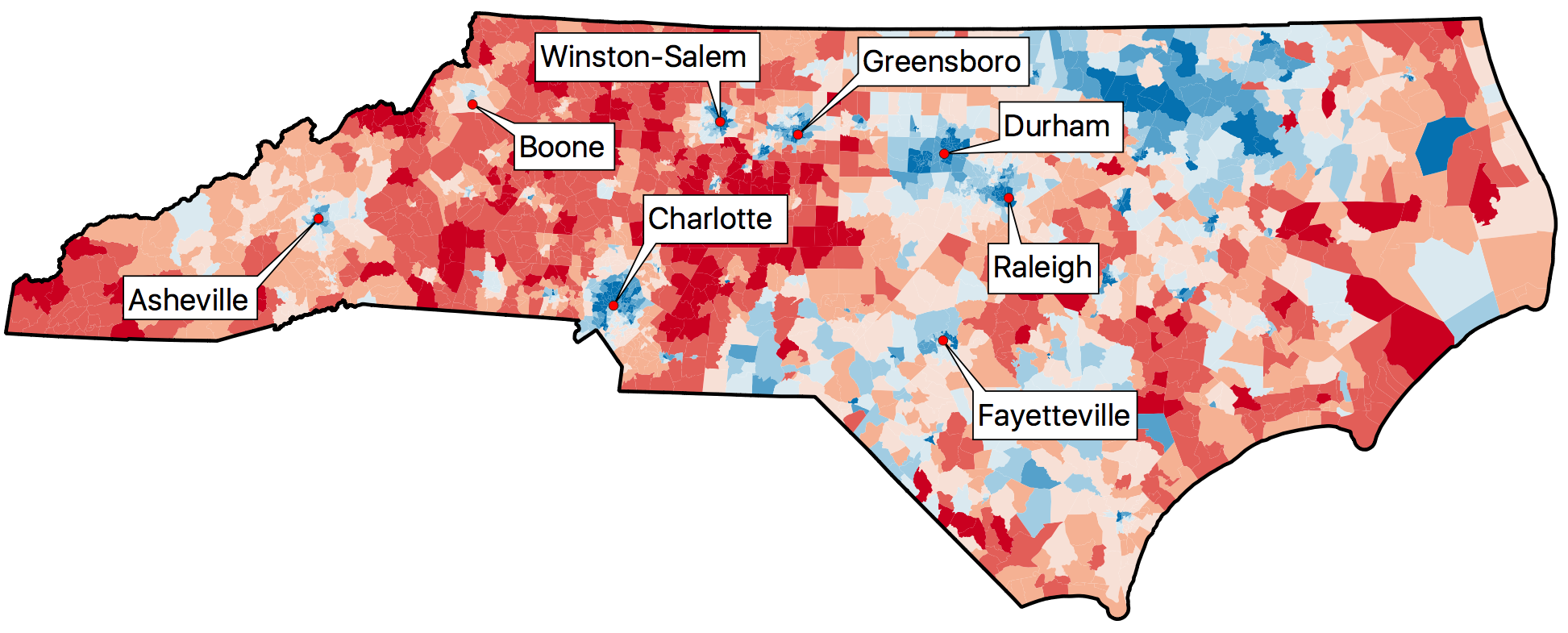}
  \end{minipage}\hfill
  \begin{minipage}[c]{0.48\textwidth}
    \caption{North Carolina's precincts showing two-way vote shares for the 2016 Presidential election. Red indicates more Republican and blue indicates more Democratic.}\label{fig:labelledNC}
  \end{minipage}
\end{figure}

\subsection{Visualizing ensembles}
We generate 1,000 Congressional plans for North Carolina using the ReCom Markov chain algorithm \cite{DeFord2021Recombination}, each with $k=13$ districts.\footnote{Note that in the next redistricting cycle, North Carolina will have $k=14$ Congressional districts, but we keep with the numbers for the 2010-2020 cycle since we want to compare plans enacted during that era to the ensemble.} We then represent each plan in $\mathsf{SP}^k \mathsf{SP}^M \mathbb{R}^2$ as described above and compute a local $2$-barycenter. Figure \ref{fig:neutral} shows the barycenter as an element of $\mathsf{SP}^k \mathsf{SP}^M \mathbb{R}^2$ for the population-weighted representation; each set of $M=40$ points of a given color gives one component of the barycenter which is then matched to one district in each plan in the ensemble. The accompanying heat maps show where the districts matched to each component lie. The heat maps each appear concentrated and have low overlap with one another, demonstrating that our method is able to successfully label districts based on their geography. Figure \ref{fig:neutralarea} shows the area-weighted analysis; the heat maps are almost identical, while the form of the barycenters is very different: the points in the population-weighted barycenter concentrate on population-dense areas such as cities, while the points in each component of the area-weighted barycenter are more evenly spread.

Figures \ref{fig:otherstates_pop} and \ref{fig:otherstates_area} show barycenters for ReCom ensembles on a selection of 19 states whose precinct data is available from \cite{mggg_states}. To determine the number of districts for each state, we use Congressional apportionments from the 2020 Census, which as of the writing of this paper had not gone into effect.\footnote{Population balance is still based on 2010 Census data as that is what was available from \cite{mggg_states} at time of writing} In particular, Figures \ref{fig:otherstates_pop} and \ref{fig:otherstates_area} show North Carolina with 14 districts instead of the 13 used in the analysis in this section.

\begin{figure}
    \centering
    \resizebox{32pc}{!}{
    \begin{tikzpicture}
    \begin{scope}
    \node at (-4, 2.2) {Location of matched districts};
    \node at (5, 2.2) {Barycenter};
     \node at (5, -2.3) {Overlaid heat maps};
    
    \node at (5,0) {\includegraphics[width=0.7\textwidth]{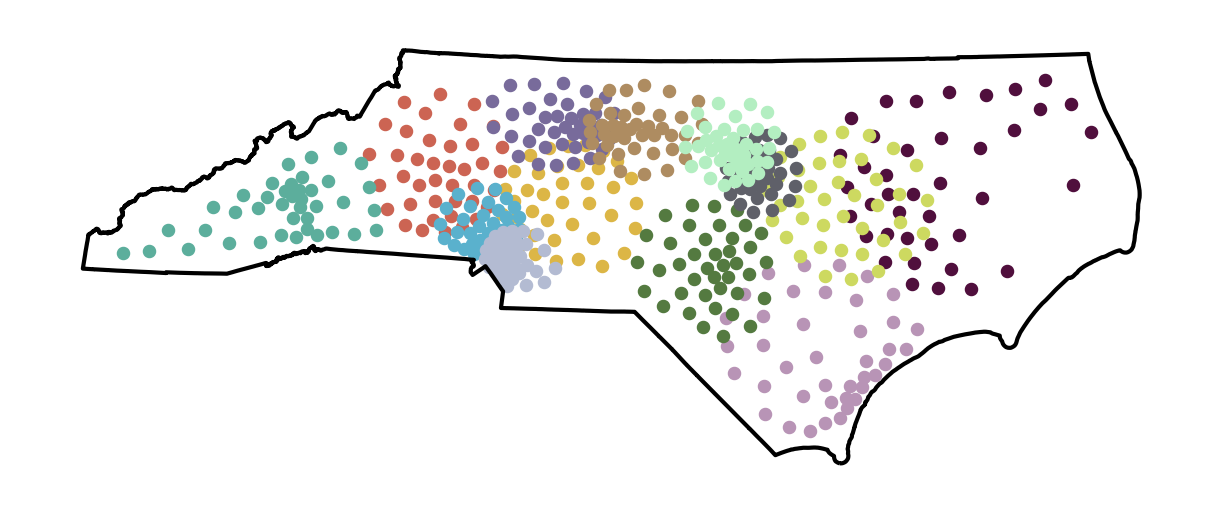}};
    \node at (5,-4.5) {\includegraphics[width=0.7\textwidth]{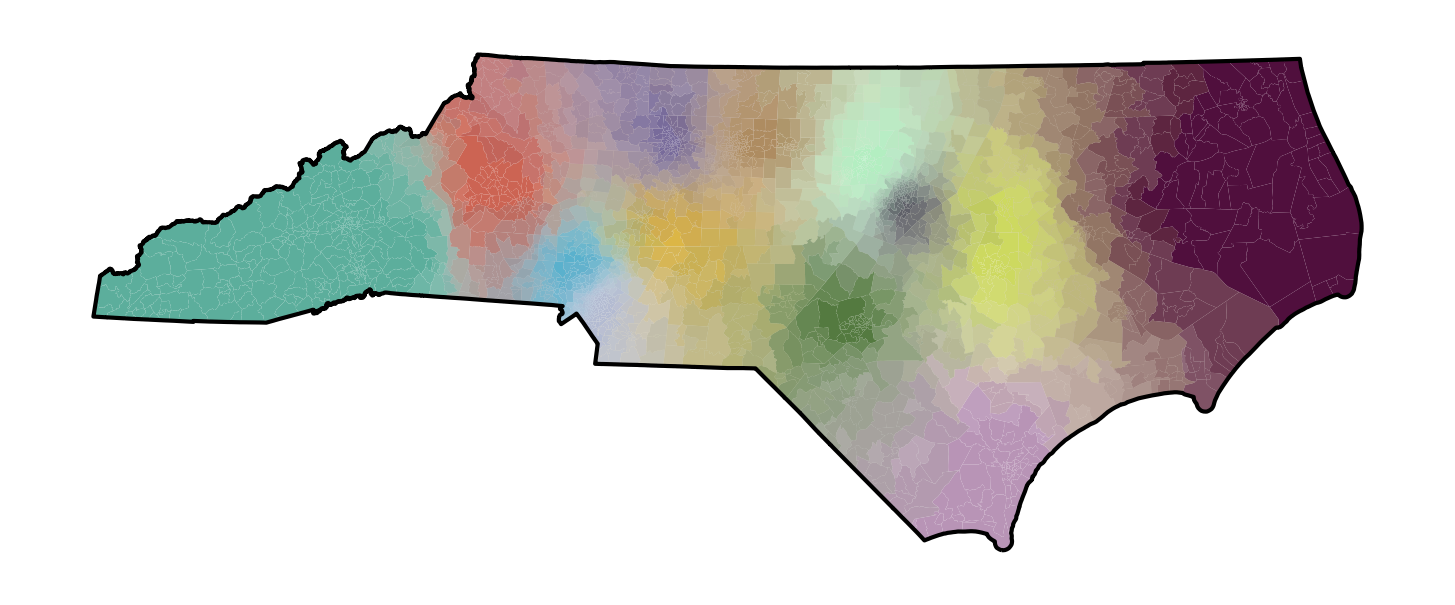}};
    
    \foreach \i in {1,3,5,7,9,11,13} {
        \node at (-7.5,2.25-0.75*\i) {$\mathsf{\i}$};
    }
    \foreach \i in {2,4,6,8,10,12} {
        \node at (-3.5,3-0.75*\i) {$\mathsf{\i}$};
    }
    
    \node at (-6,1) {\includegraphics[width=0.25\textwidth]{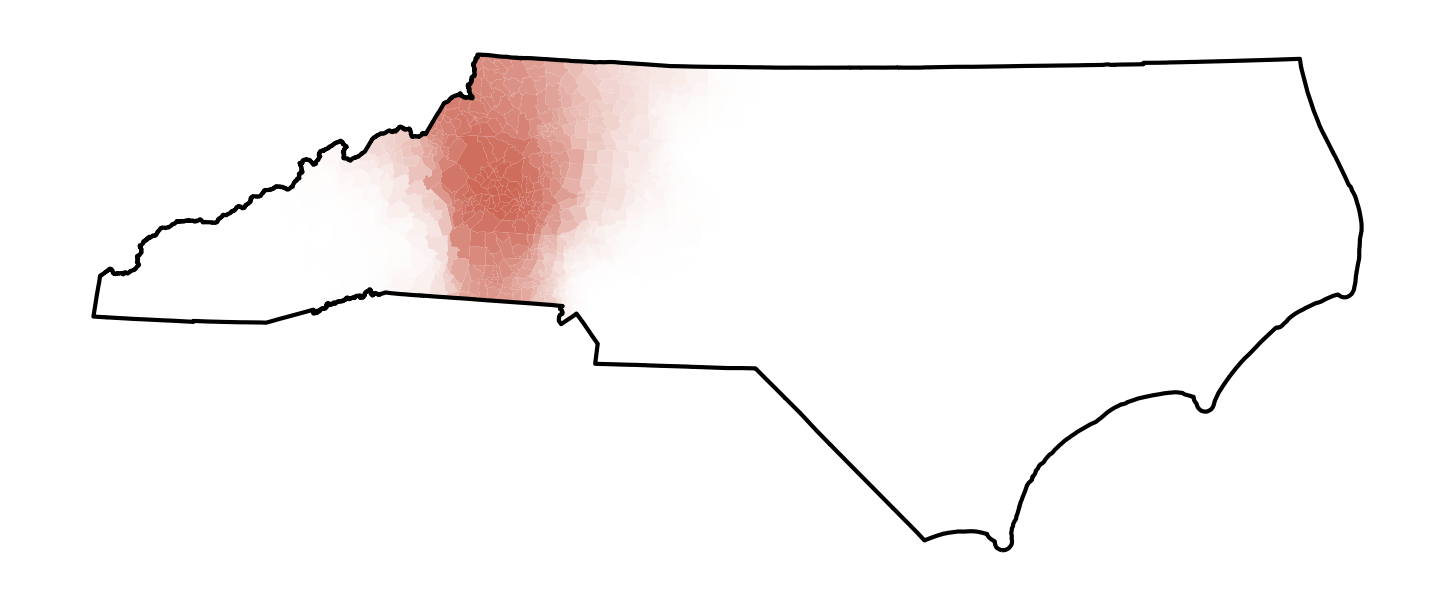}};
    \node at (-2,1) {\includegraphics[width=0.25\textwidth]{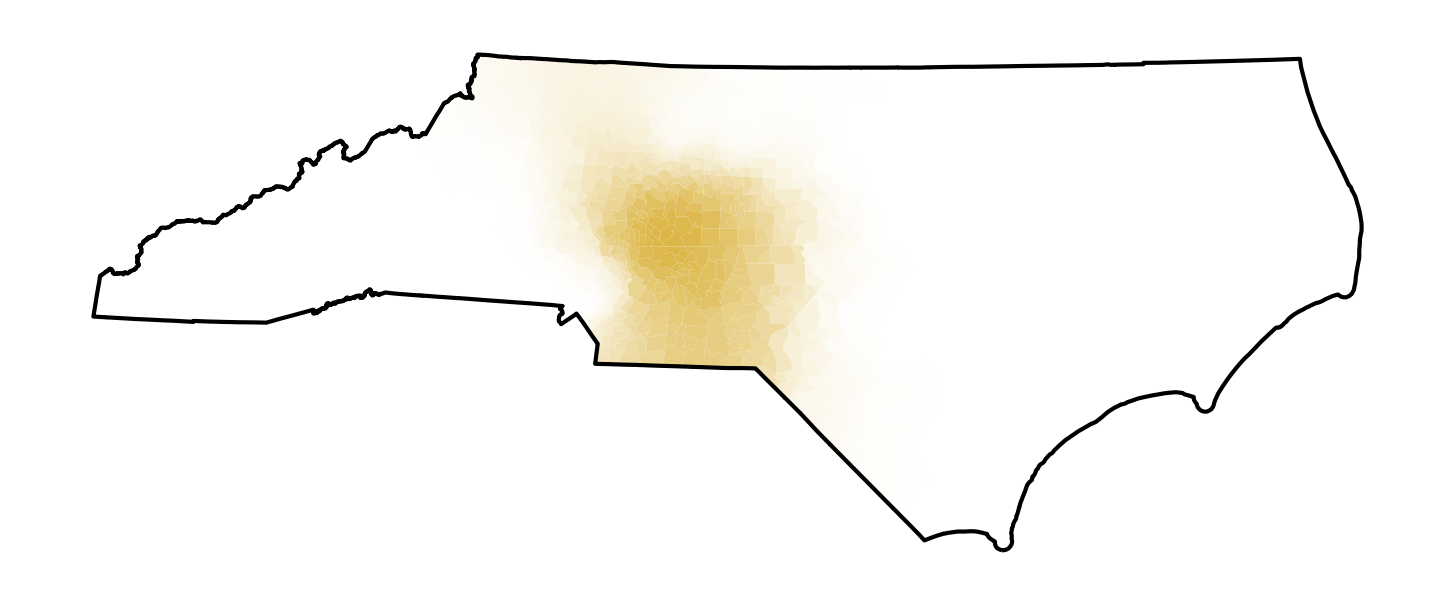}};
    \node at (-6,-0.5) {\includegraphics[width=0.25\textwidth]{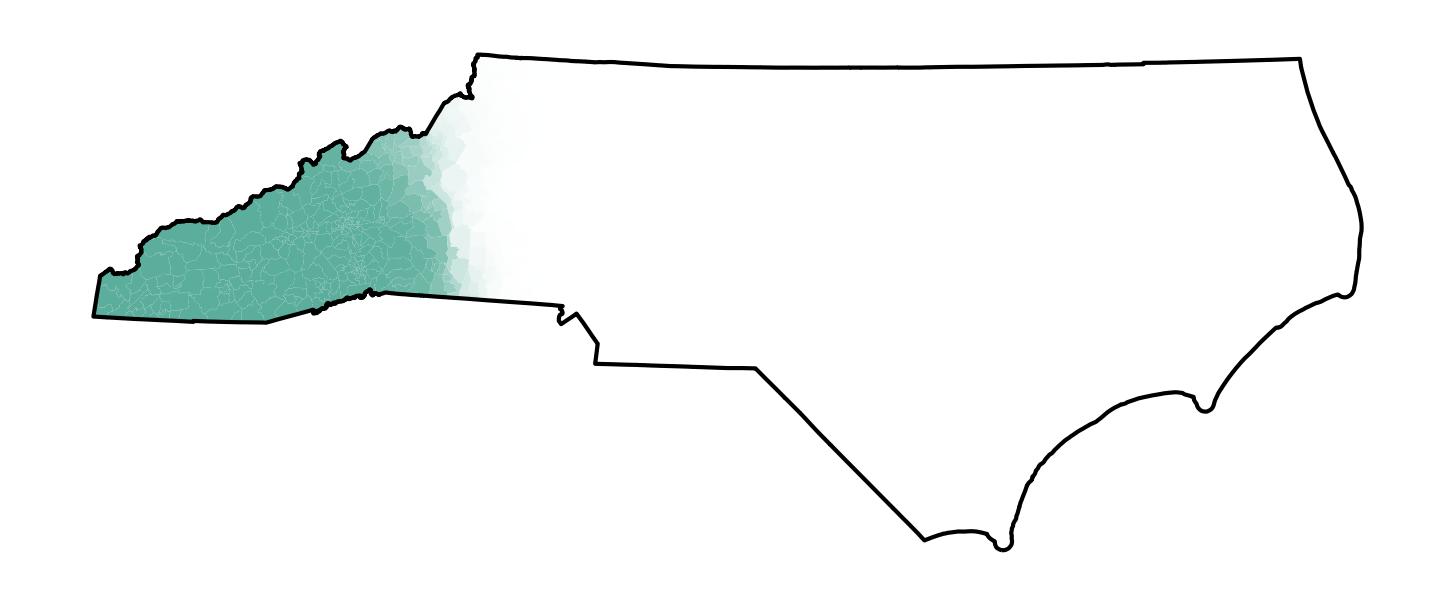}};
    \node at (-2,-0.5) {\includegraphics[width=0.25\textwidth]{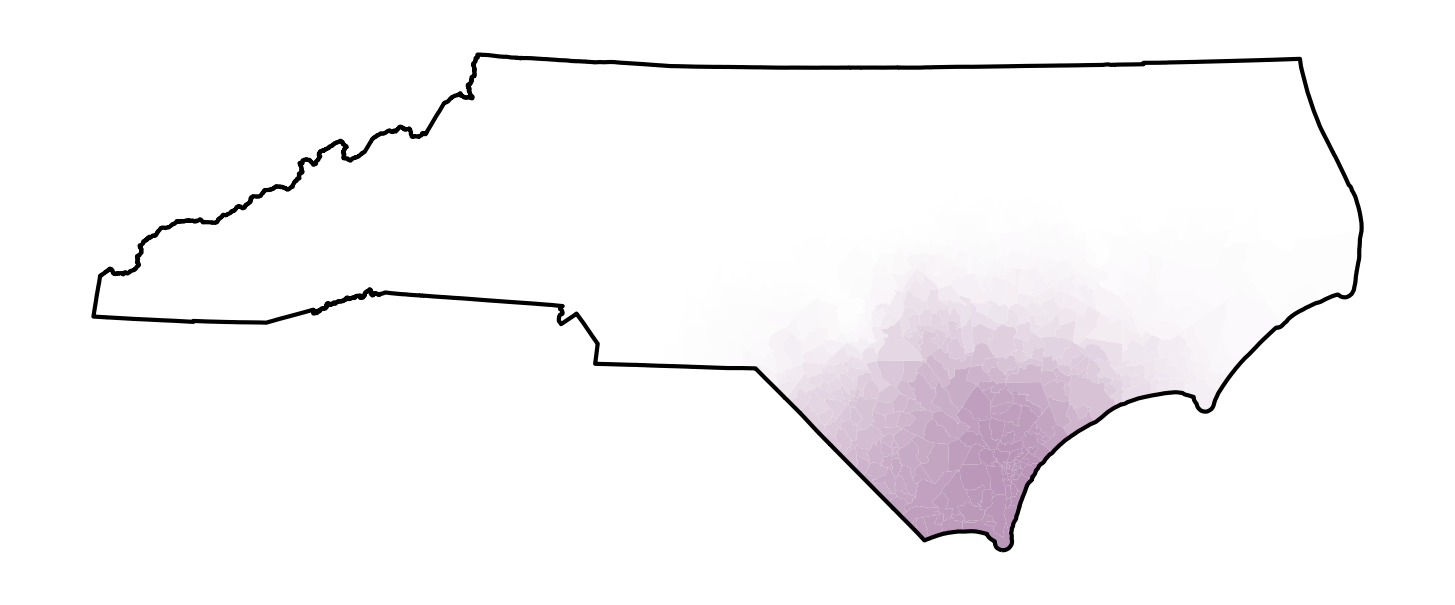}};
    \node at (-6,-2) {\includegraphics[width=0.25\textwidth]{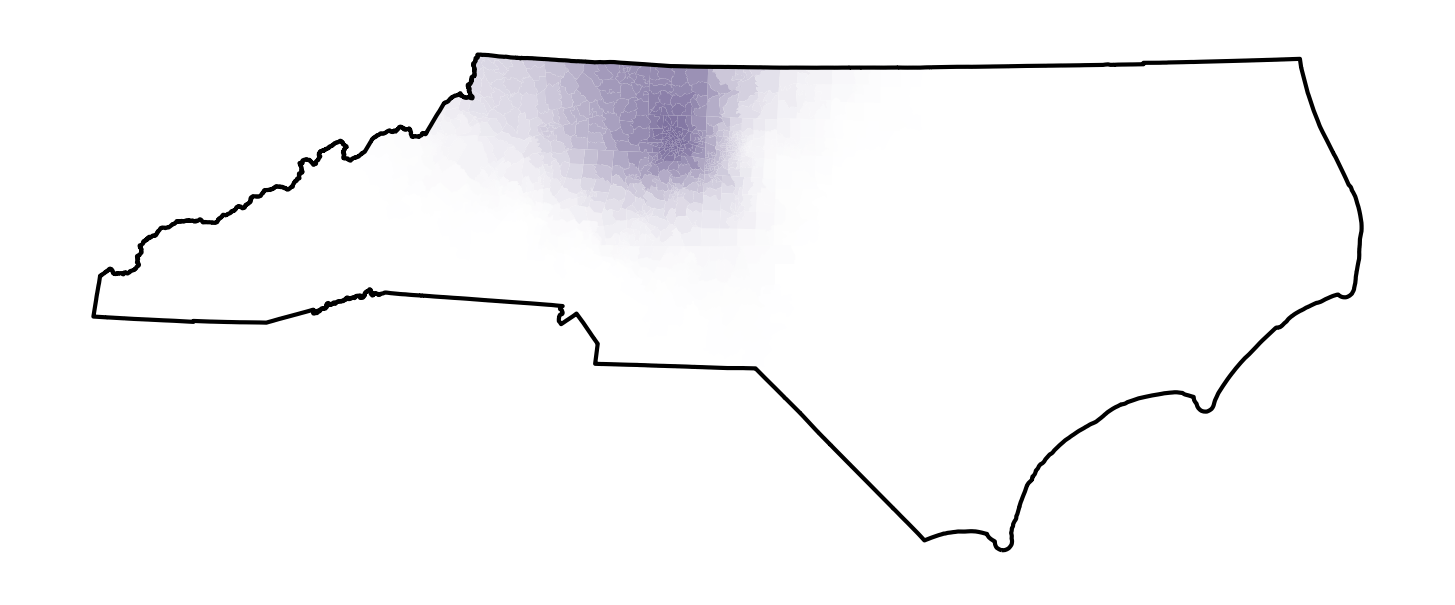}};
    \node at (-2,-2) {\includegraphics[width=0.25\textwidth]{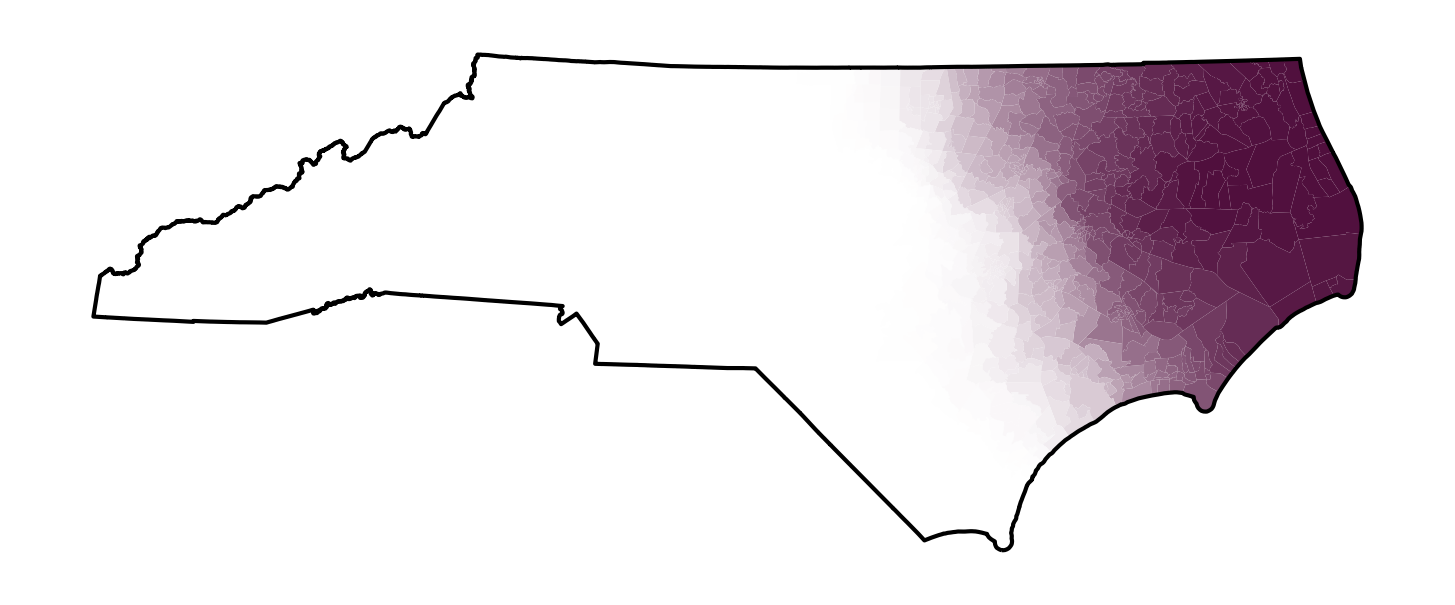}};
    \node at (-6,-3.5) {\includegraphics[width=0.25\textwidth]{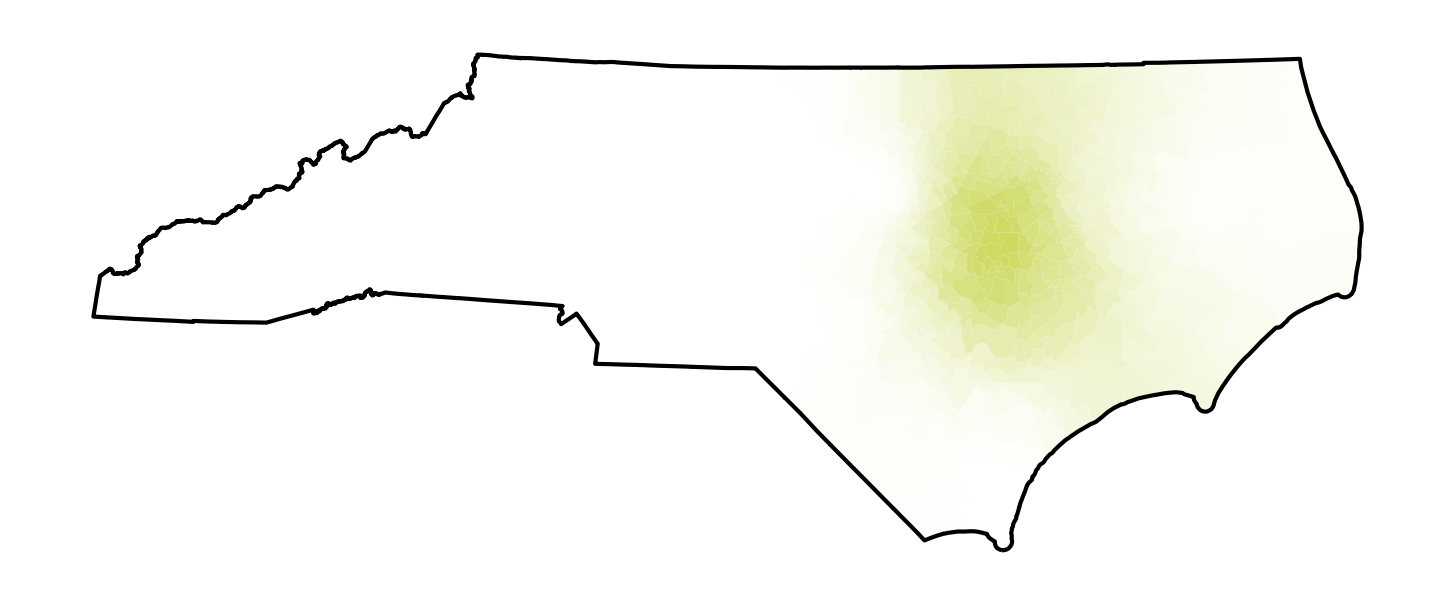}};
    \node at (-2,-3.5) {\includegraphics[width=0.25\textwidth]{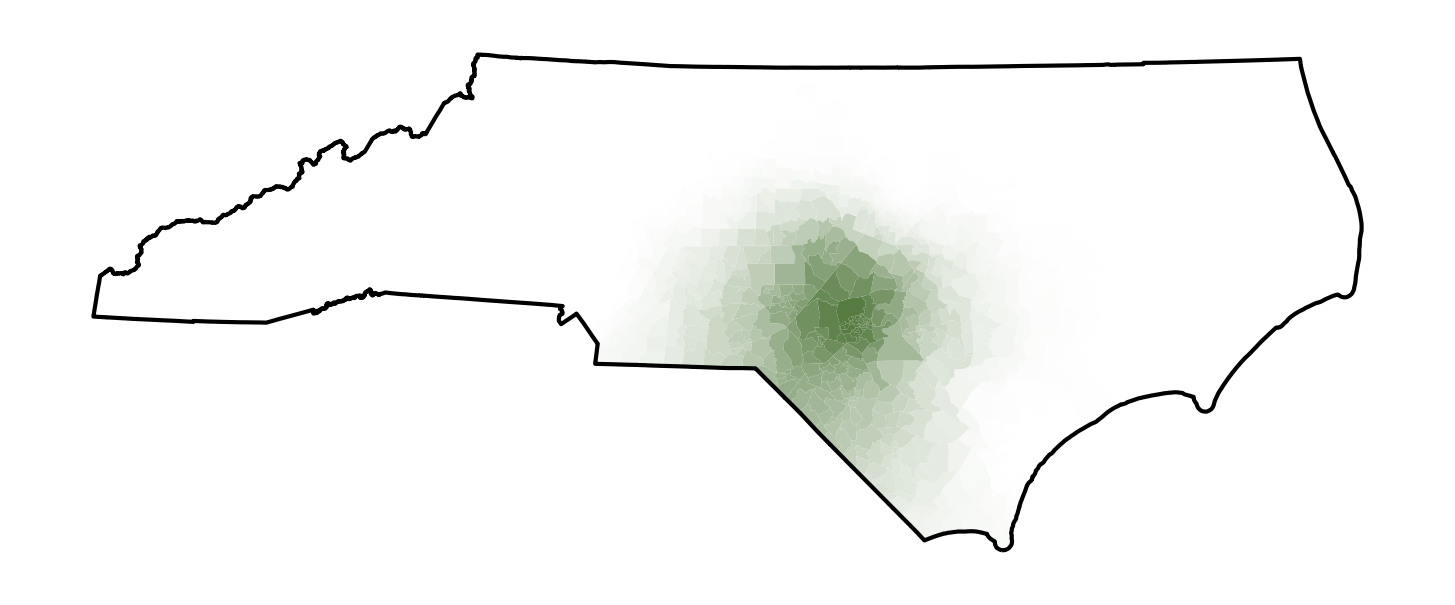}};
    \node at (-6,-5) {\includegraphics[width=0.25\textwidth]{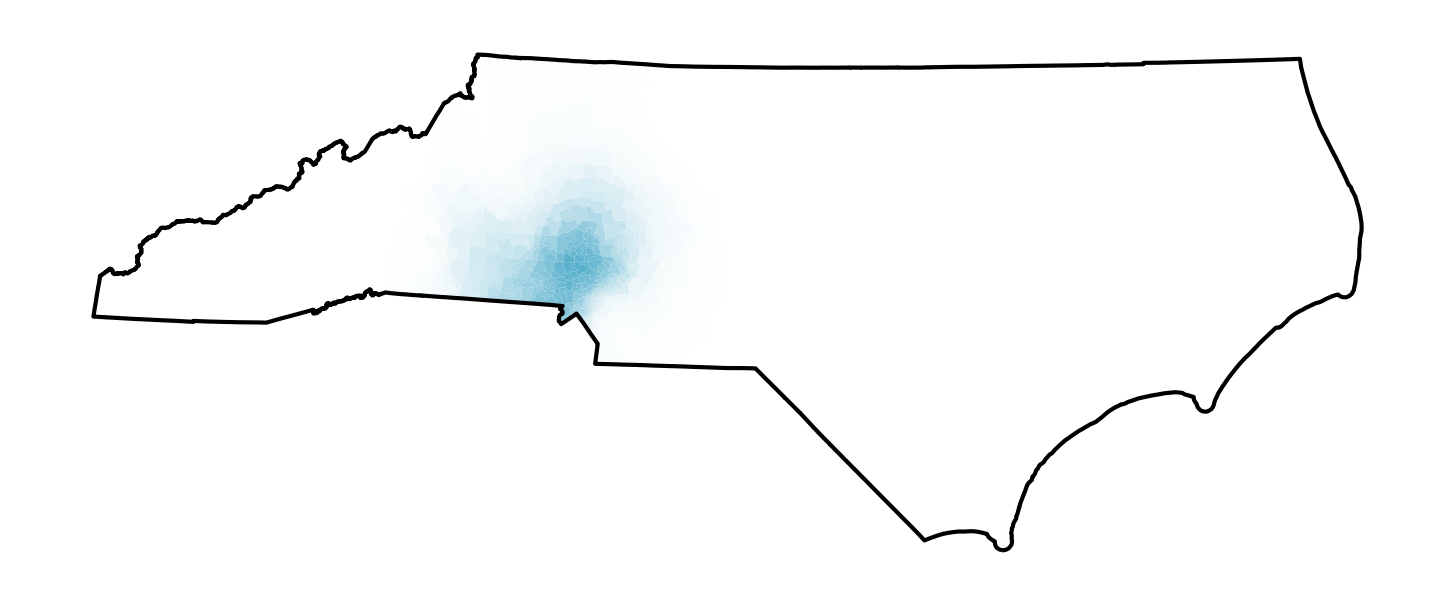}};
    \node at (-2,-5) {\includegraphics[width=0.25\textwidth]{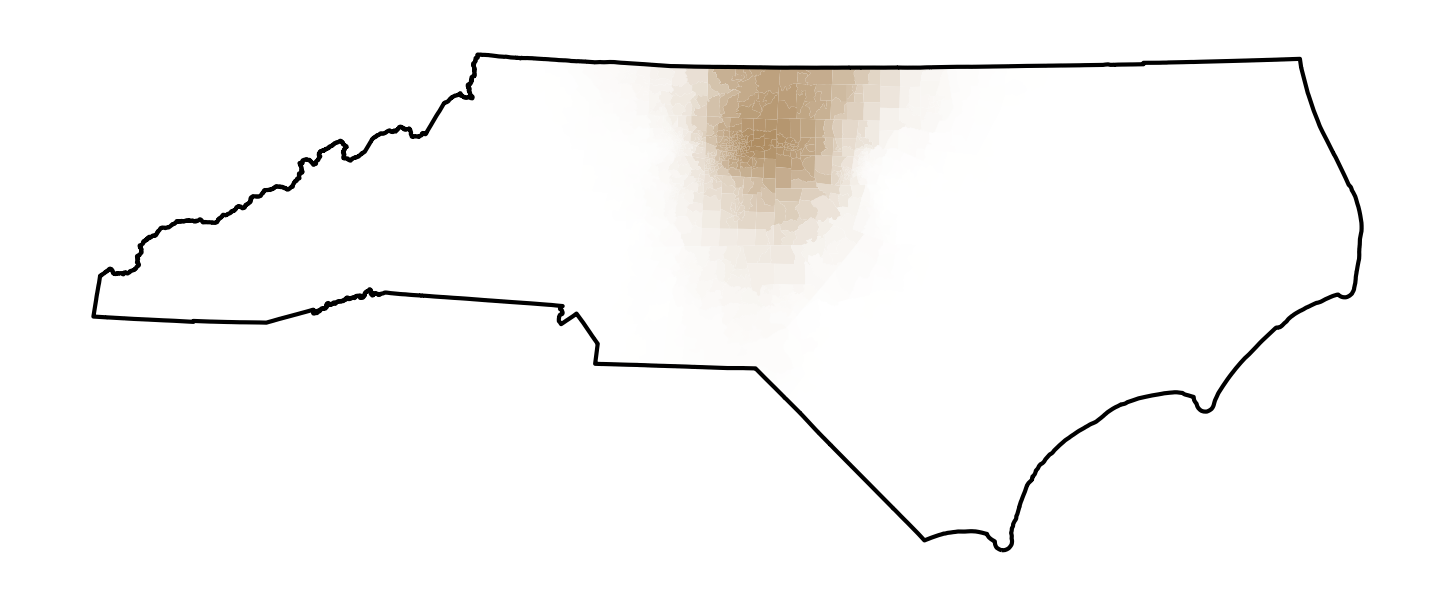}};
    \node at (-6,-6.5) {\includegraphics[width=0.25\textwidth]{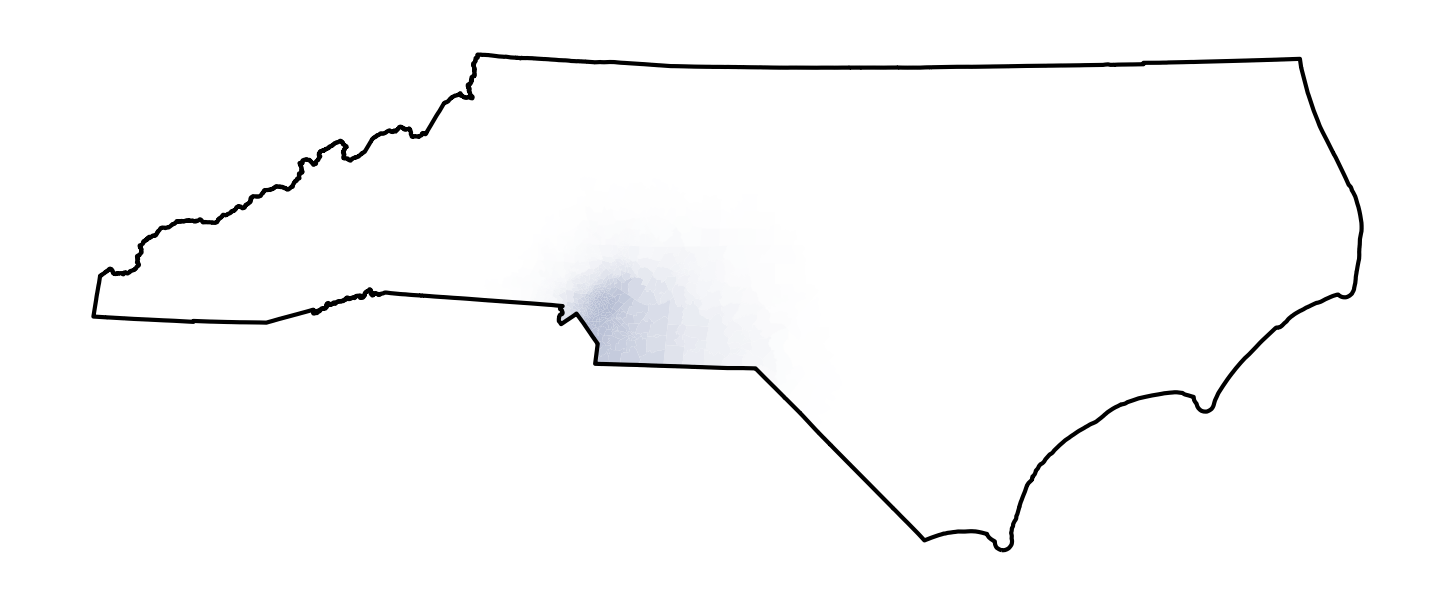}};
    \node at (-2,-6.5) {\includegraphics[width=0.25\textwidth]{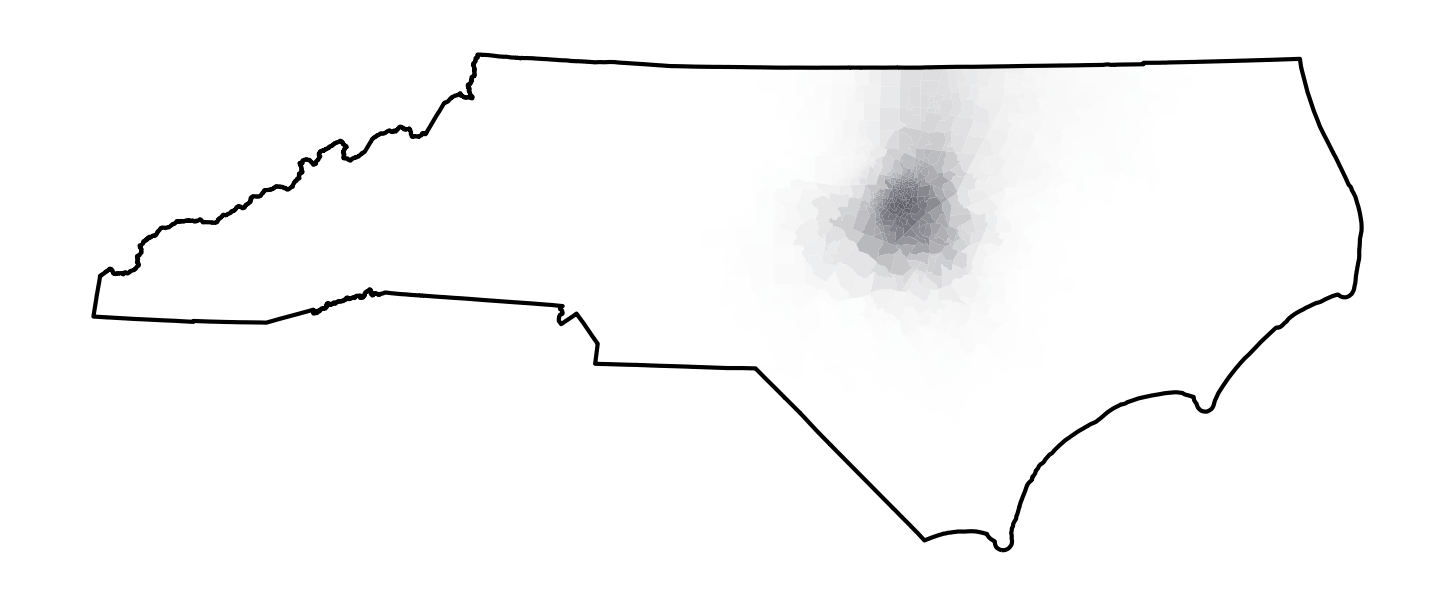}};
    \node at (-6,-8) {\includegraphics[width=0.25\textwidth]{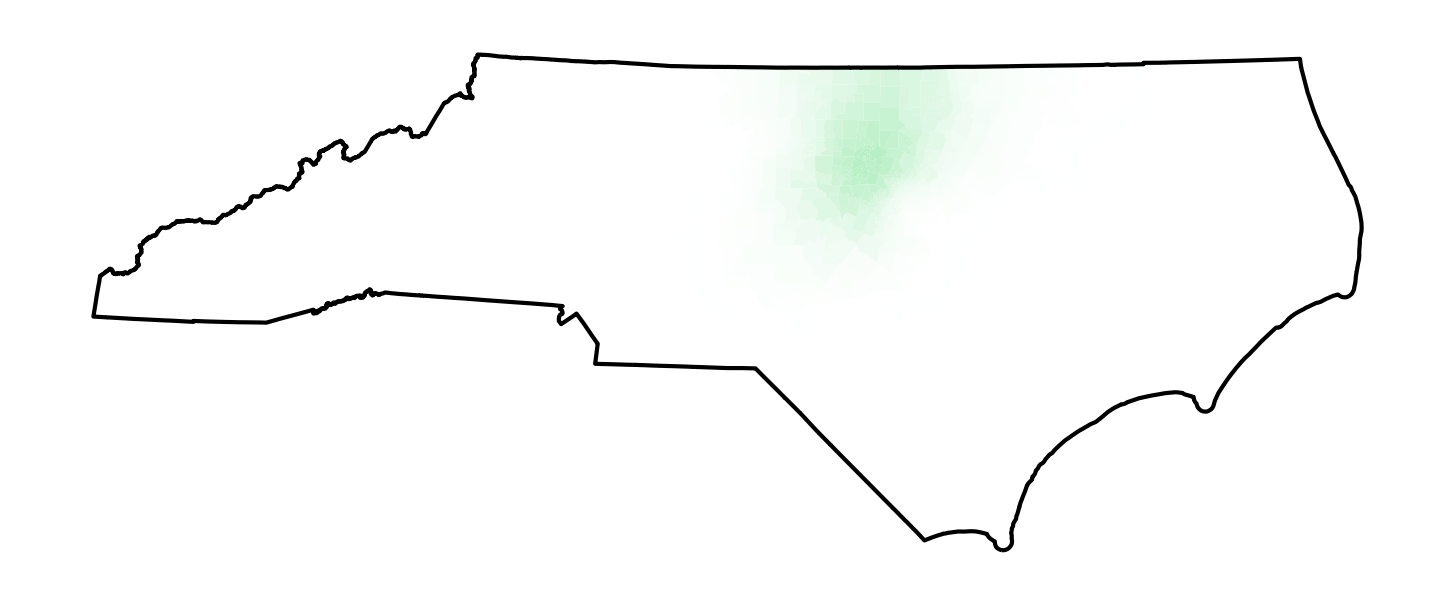}};
    \end{scope}
    
    \begin{scope}
    \node at (4,-7.5) {Presidential 2012 (left) vs Presidential 2016 (right)};
    \node at (4,-10.3) {\includegraphics[width=0.85\textwidth]{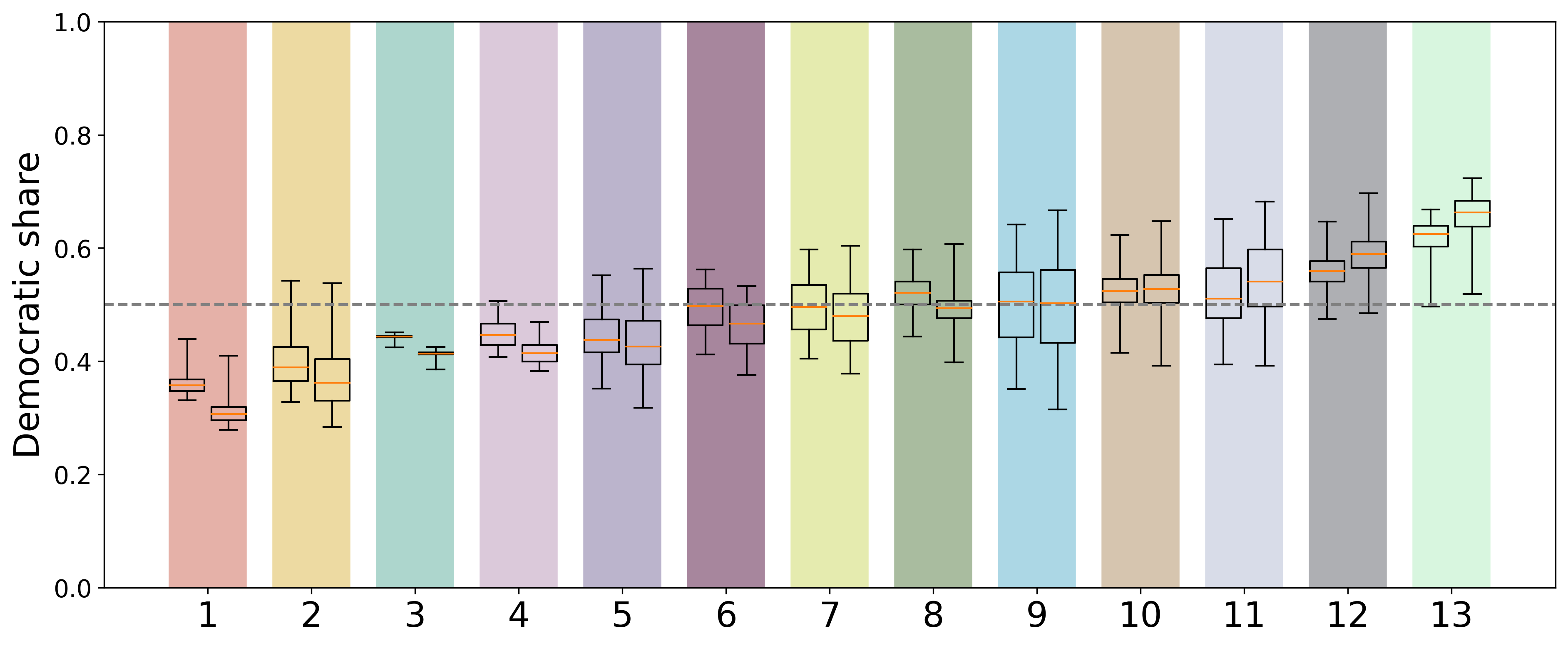}};
    \end{scope}
    \end{tikzpicture}
    }
    \caption{Barycenter for one ensemble of Congressional plans for North Carolina, with population-weighted representations used for the districts, approximated by $40$-point samples. The heat maps show the location of districts matched to each component of the barycenter. The boxplots show vote shares for the ensemble the 2012 and 2016 Presidential elections.}
    \label{fig:neutral}
\end{figure}

\subsection{Comparing elections}
In this section, we compare two-way Democratic vote shares for our ensemble based on data from the 2012 and 2016 Presidential elections. Figure \ref{fig:neutral} shows these vote shares for each district label\footnote{Following the convention in \cite{deford2019recom}, the boxes show the 25th--75th percentile range, while the whiskers indicate the 1st and 99th percentiles.}. For ease of visualization, we sort the district labels by mean Democratic vote share under the 2016 data. The statewide two-way vote shares for these two elections differ by less than 1 percent: Obama received $49\%$ of the two-way vote in 2012 and Clinton received $48\%$ in 2016. A common hypothesis used to model shifts in voting patterns in the political science literature is the \emph{uniform swing hypothesis} (see e.g.~\cite{katz2020theoretical}); in this scenario this hypothesis would dictate that Democratic vote shares in every district or even every precinct would shift by $1\%$ between these two elections. Our method allows us to probe the validity of this hypothesis at the district level for a typical plan. In this case the shift is noticeably non-uniform. Indeed, between 2012 and 2016, the more Republican Districts 1--8, mostly larger, more rural districts, tended to become even more Republican. The Democratic-leaning districts 11--13 near Charlotte and Raleigh, however, became more Democratic. This finding is consistent with the examination of (non)-uniformity of vote shifts in North Carolina using topological data analysis in \cite{duchinneedham21}.

\subsection{Comparing enacted plans in NC}
Since the districts in a computer-generated plan are unsorted, a common approach found in redistricting research and litigation is to sort them by Democratic vote share (under some choice of vote data) and do the same to the plan being evaluated \cite{mattingly1, amicus_math, mgggva, DeFord2021Recombination}. In other words, these methods are based on the \emph{order statistics} of the vote shares in an ensemble. Figure \ref{fig:rankorder} demonstrates this technique for our ensemble using 2016 Presidential vote data, along with heat maps showing where the districts for each rank lie. Clearly, the districts at a particular rank need not be geographically similar. The result is that if a district from a proposed plan is an outlier compared to districts from the ensemble of the same rank, it is impossible to say whether that district is itself unusually drawn or if it is merely placed in an unusually low or high rank as a result of unusual vote shares in other districts (which may be in completely different part of the state). 

A method based on comparing districts which are geographically similar has already been proposed by Mattingly in a blog post \cite{mattingly} and was also presented in \cite{amicus_math}. Both of these methods rely on grouping districts who share a particular geographic unit; our method takes a more geometric approach by looking at the geometric distance between districts to group them into geographic clusters. Our method can be applied to any ensemble, and answers the call by Mattingly in \cite{mattingly} for a ``more geographically localized analysis'' than order statistics.

\begin{figure}
    \centering
    \begin{subfigure}{0.33\textwidth}
    \includegraphics[width=\textwidth]{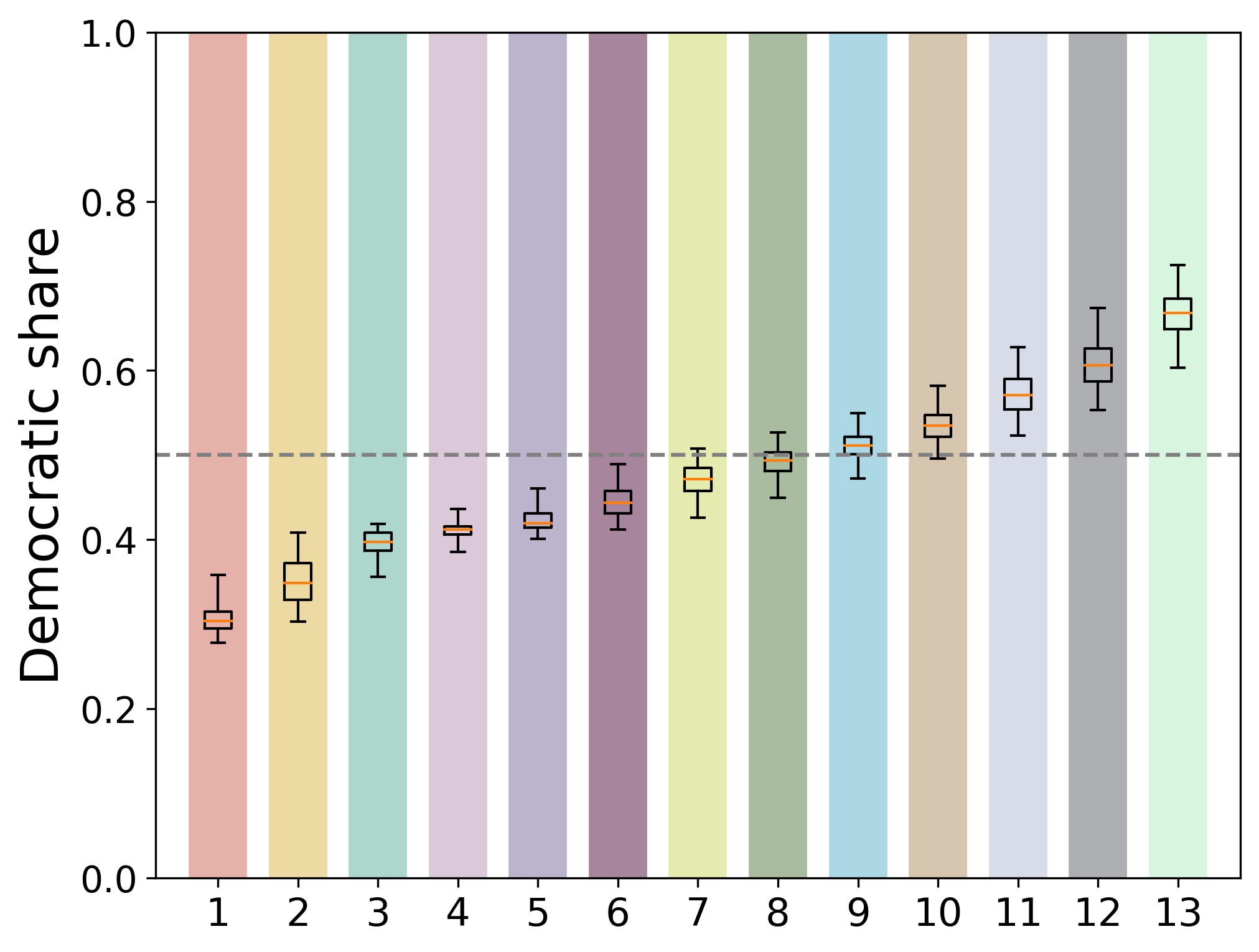}
    \end{subfigure}
    \begin{subfigure}{0.6\textwidth}
    \begin{tikzpicture}[scale=0.9]
    \foreach \i in {0,...,4}{
        \node at (2*\i,0) {\includegraphics[width=0.18\textwidth]{heatmaps/neutral_heatmaps_byDvote_NC_d\i _k13_M40_n1000.png}};
    }
    \foreach \i in {1,...,5}{
        \node at (2*\i-0.7-2,0.4) {\footnotesize $\mathsf{\i}$};
    }
    \foreach \i in {5,...,8}{
        \node at (2*\i-9,-1.4) {\includegraphics[width=0.18\textwidth]{heatmaps/neutral_heatmaps_byDvote_NC_d\i _k13_M40_n1000.png}};
    }
    \foreach \i in {6,...,9}{
        \node at (2*\i-0.7-9-2,0.4-1.4) {\footnotesize $\mathsf{\i}$};
    }
    \foreach \i in {9,...,12}{
        \node at (2*\i-17,-2.8) {\includegraphics[width=0.18\textwidth]{heatmaps/neutral_heatmaps_byDvote_NC_d\i _k13_M40_n1000.png}};
    }
    \foreach \i in {10,...,13}{
        \node at (2*\i-0.7-17-2,0.4-2.8) {\footnotesize $\mathsf{\i}$};
    }
    \end{tikzpicture}
    \end{subfigure}
    \caption{Another way to label districts found in redistricting literature is to order them by Democratic vote share. Applying this with 2016 Presidential vote data to the neutral ensemble used in this section results in the boxplots above showing the vote shares for each label. The heat maps indicate the location of the districts assigned to each label, which are geographically very diffuse.}\label{fig:rankorder}
\end{figure}

We compare the ensemble against the Congressional plans used in 2012, 2016 and 2020 respectively, as well as a plan proposed by a bipartisan panel of judges (which we call the `Judges' plan). The 2012, 2016 and Judges plan were analyzed in \cite{mattingly} and the 2012 and 2016 plans were found to be gerrymanders based on a order-statistics comparison with a neutral ensemble (generated by a different algorithm). We represent these plans in the same way as those in the ensemble and match them to the barycenter to get a labeling (which need not agree with the legal names of these districts). Figure \ref{fig:enacted} shows the results. 

We will make the somewhat arbitrary choice to call a district an \emph{outlier} if it has Democratic vote share which is more than one percentage point ($0.01$ on the plots) outside the 1st--99th percentile range of the ensemble. In the 2012 plan, we note that there are five outliers: Districts 2, 3, 7, 12 and 13. District 2 (legally named the $12^{th}$ Congressional District) in the 2012 plan has a highly unusual shape and is therefore not just a vote share outlier but a geometric outlier too, making it hard to compare with the ensemble. However, since 2012's District 2 has a higher Democratic vote share than any district in the entire ensemble, it does not really matter where it is placed, it will still be an outlier.  In the 2016 plan we find three outliers: Districts 3, 7 and 9. Note the unusually low Democratic vote shares in District 3 in both the 2012 and 2016 plan despite the very narrow range in ensemble values. This phenomenon was observed in \cite{duchinneedham21} using other methods. The Judges plan has no outliers, while the 2020 plan has one: District 9. In the area-weighted analysis (Figure \ref{fig:enactedbyarea} in the Appendix), some labelings change, but not the number of outliers in each plan.

\begin{figure}
    \centering
    \begin{tikzpicture}
    \begin{scope}
    \node at (0,0) {\includegraphics[width=0.28\textwidth]{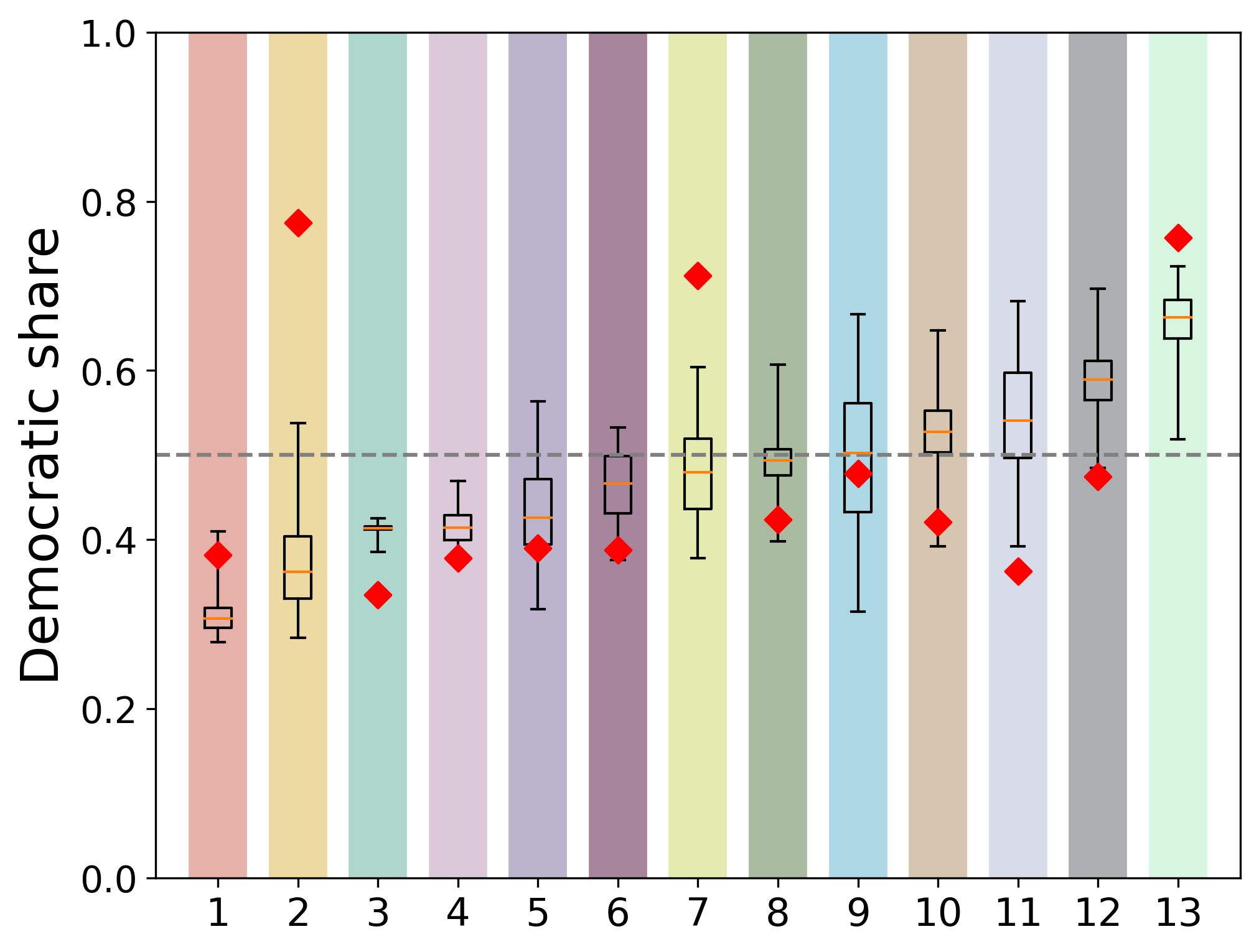}};
    \node at (3.6, 0.8) {2012};
    \node at (3.6,0) {\includegraphics[width=0.2\textwidth]{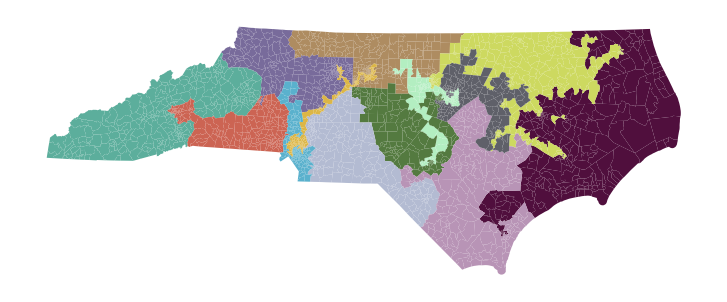}};
    \end{scope}
    
    \begin{scope}[xshift=7.4cm]
    \node at (0,0) {\includegraphics[width=0.28\textwidth]{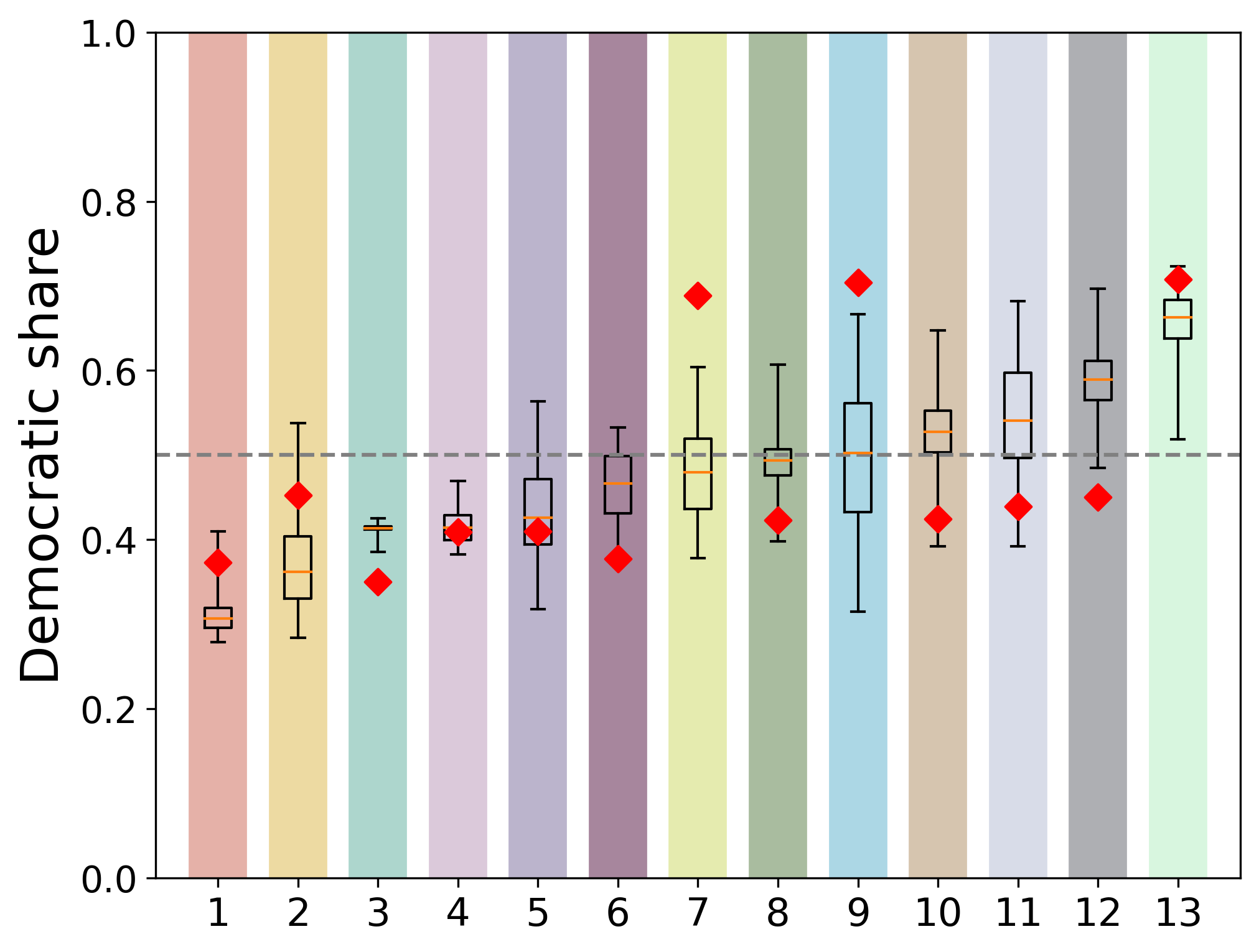}};
    \node at (3.6, 0.8) {2016};
    \node at (3.6,0) {\includegraphics[width=0.2\textwidth]{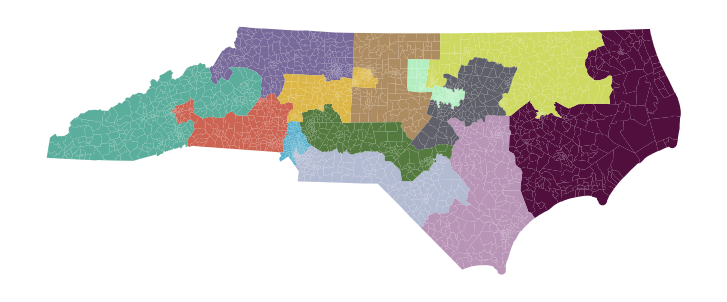}};
    \end{scope}
    
    \begin{scope}[yshift=-3.8cm]
    \node at (0,0) {\includegraphics[width=0.28\textwidth]{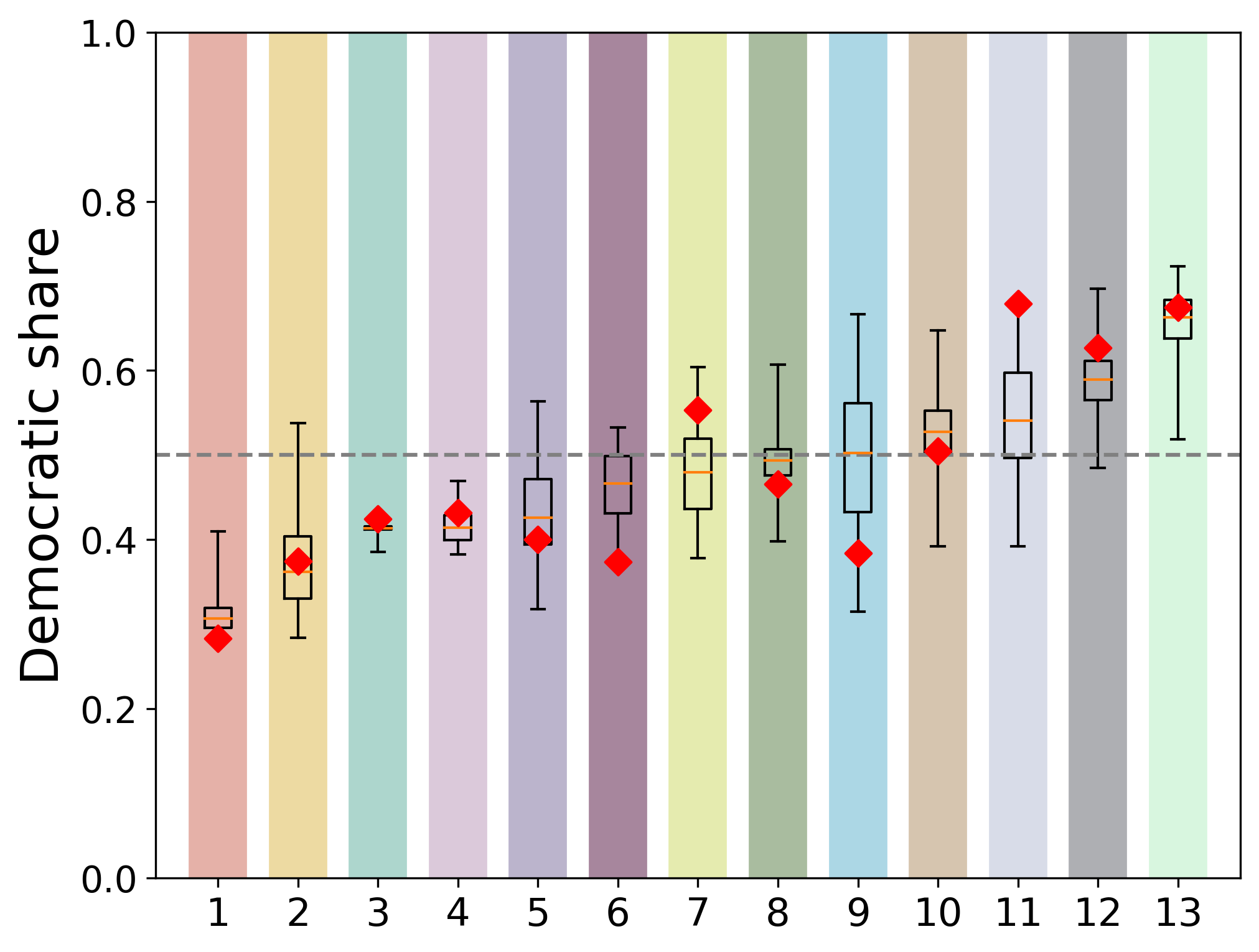}};
    \node at (3.6, 0.8) {Judges};
    \node at (3.6,0) {\includegraphics[width=0.2\textwidth]{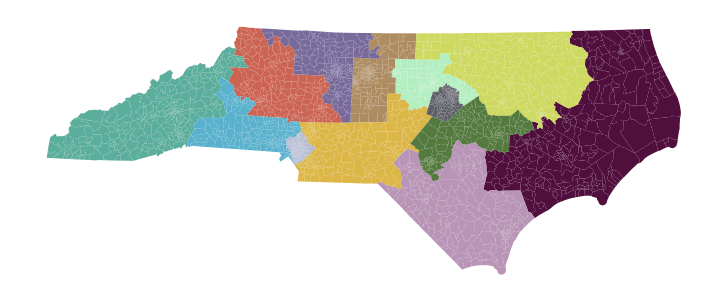}};
    \end{scope}
    
    \begin{scope}[yshift=-3.8cm, xshift=7.4cm]
    \node at (0,0) {\includegraphics[width=0.28\textwidth]{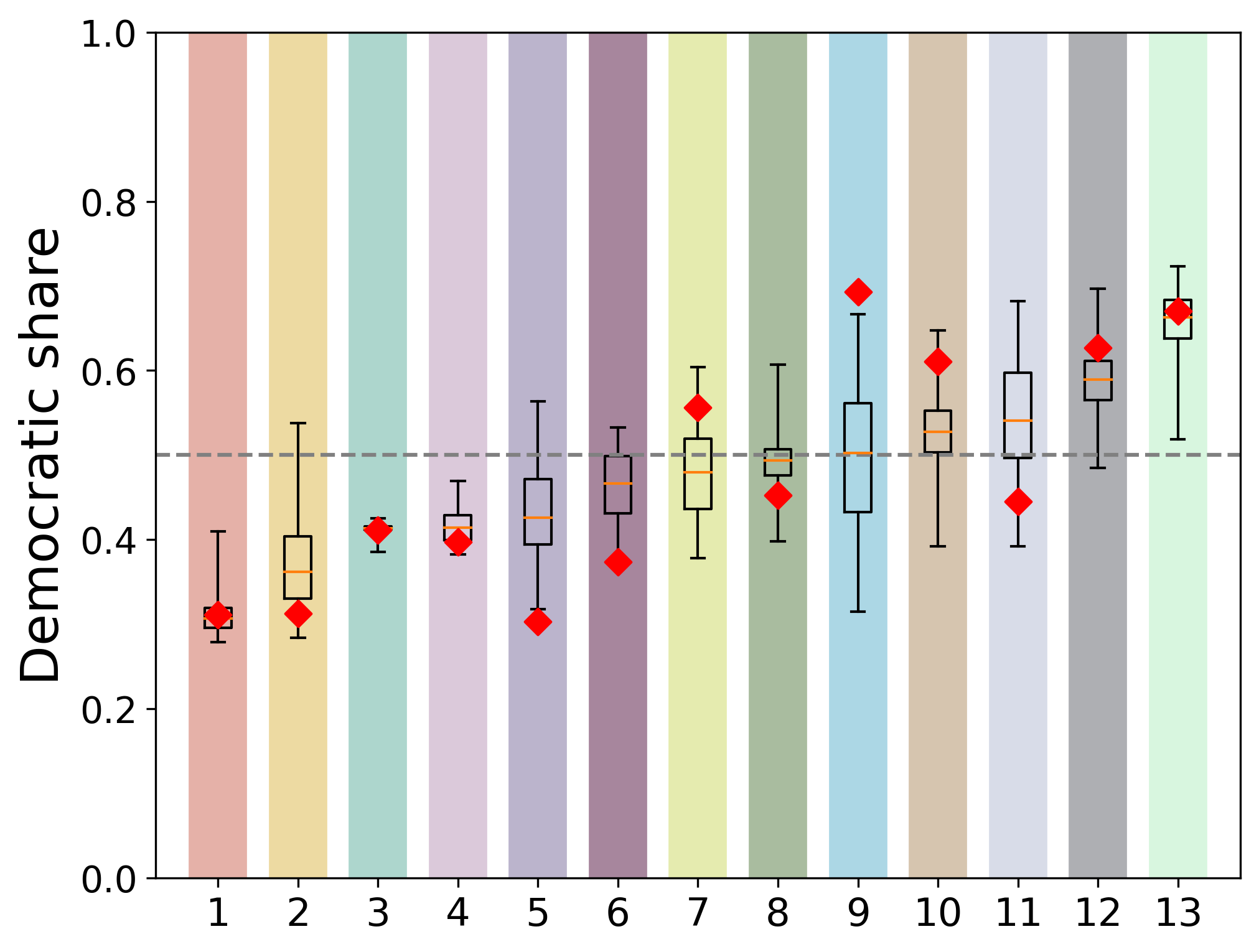}};
    \node at (3.6, 0.8) {2020};
    \node at (3.6,0) {\includegraphics[width=0.2\textwidth]{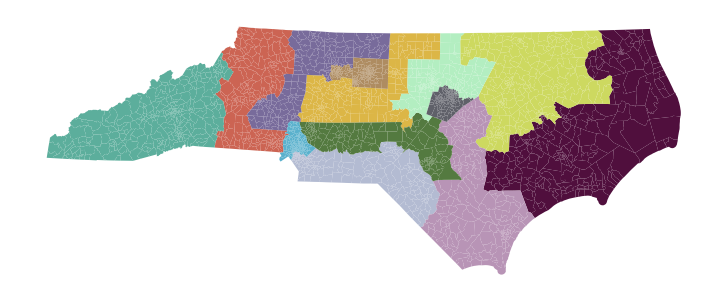}};
    \end{scope}
    
    \end{tikzpicture}
    \caption{Comparison of four enacted or proposed Congressional plans for North Carolina using votes shares from the Presidential 2016 race. Population-weighting was used. The red diamonds in each boxplot indicate the vote shares of the plan being evaluated for the district matched to that component of the barycenter. The maps on the right show the plan in question colored by a best matching to the barycenter.}\label{fig:enacted}
\end{figure}

\section{Conclusion and Future Work}

We have introduced a method for labeling unordered $k$-tuples in a geometrically coherent way using local barycenters in symmetric product spaces. The algorithm (Algorithm \ref{algorithm}) for computing these local barycenters is very general and depends on a inner local barycenter operation in a modular way. We have demonstrated how this method enables a new analysis technique for redistricting ensembles, effectively summarizing and organizing large sample sets from the non-linear and extremely diverse set of possible redistricting plans for a state. Beyond redistricting, we expect this approach to have applications to problems in machine learning involving successive applications of $k$-means or other classifiers to partition multiple datasets into unlabeled clusters. 

This work suggests many directions of future research. For example, it would be interesting to study theoretical properties of local $p$-descent operators on non-Euclidean data, as in Section \ref{sec:circle} for the circle. The development of more advanced statistical machine learning methods for symmetric product spaces  would be useful for more in-depth analysis of clustering algorithms, as in Section \ref{sec:clustering_algorithms}, or for studying spaces of districting plans in further detail.

\section*{Acknowledgements}
The authors would like to thank Justin Solomon for discussions early on in the project, and Olivia Walch for the color scheme used for districts in the paper. 

\bibliographystyle{siamplain}
\bibliography{bib}

\FloatBarrier

\newpage 

\appendix

\section{Choice of seed and $M$ for redistricting application}\label{sec:stability}
In this section we test the dependence of the ensemble barycenter on the choice of seed for Algorithm \ref{algorithm} and also motivate the choice of $M=40$ sample points per district. Since we mainly interested in the labeling of districts induced by a given barycenter, we measure the discrepancy between seeds by the number of relabelings required, assuming good labelings for the resulting barycenters. To be precise, for a barycenter $B$ with a chosen ordering, let $B_i$ be all the districts labeled District $i$ by matching to $B$. We define the \emph{discrepancy} $D(B, B')$ to be the fraction
$$
\min_{\phi: \langle k \rangle \to \langle k \rangle} \sum_i \frac{|B_i \setminus B'_{\phi(i)}|}{|B_i|}
$$
where $\phi$ ranges over all bijections $\langle k \rangle \to \langle k \rangle$. 

We run Algorithm \ref{algorithm} 1000 times on the neutral ensemble, each time using a different plan as a seed $x_0$. For each pair $0\leq j < 1000$, we compute the discrepancy between the barycenters coming from seed $0$ (the one used in the previous sections) and seed $j$ and display these values in Figure \ref{fig:stability}. Comparing the population-weighted and area-weighted representations, we see that the population-weighted version has 20 seeds with $>5\%$ change, while the area-weighted version has 4. On the other hand, for other seeds the discrepancy for the population-weighted representation was generally lower than the area-weighted version. Overall, for both versions, at least 98\% of seeds had less than 2\% difference from seed $0$.

\begin{figure}
    \centering
    \resizebox{\textwidth}{!}{
    \begin{tikzpicture}
    \node at (0,2) {\underline{Population-weighted}};
    \node at (6,2) {\underline{Area-weighted}};
    \node[rotate=90] at (-3.1,0.2) {\tiny fraction of labels changed};
    \node at (0,0) {\includegraphics[width=0.35\textwidth]{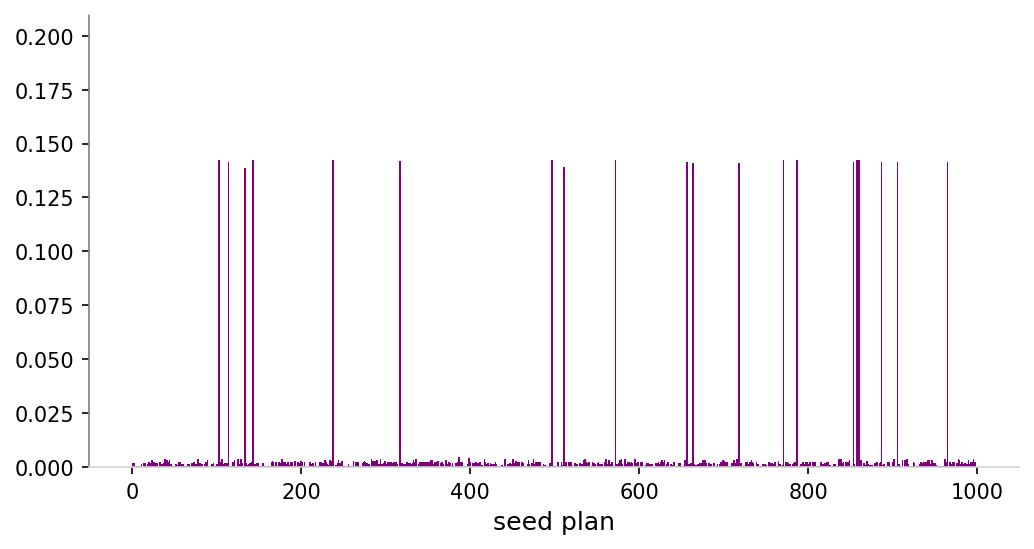}};
    \node at (6,0) {\includegraphics[width=0.35\textwidth]{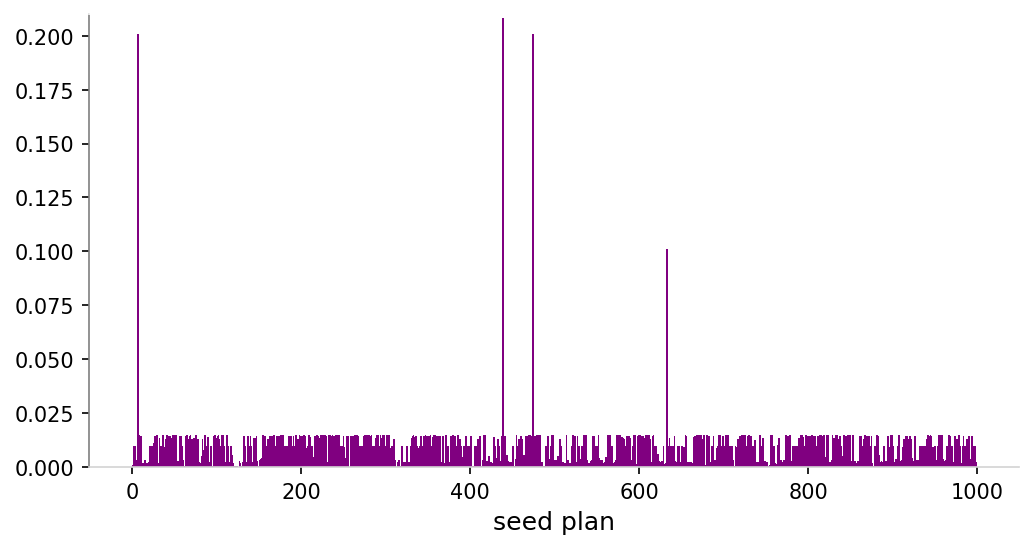}};
    \end{tikzpicture}
    }
    \caption{Measuring the dependence of the method on the choice of seed plan $\x_0$. These plots show the discrepancy between the district labels when using the first plan (seed 0) and the $i^{th}$ plan as the seed.}
    \label{fig:stability}
\end{figure}

In order to find the right number of points to sample from each district, we first sample $M=40$ points from each district. We then run Algorithm \ref{algorithm} with a fixed seed plan, using only the first $i$ sample points for each district for each $i \in \{1,\ldots,40\}$ to produce a series of $40$ barycenters and labelings. For each $t \in \{1,\ldots,39\}$, we compute the discrepancy between the barycenter with $t$ sample points and the barycenter with $t+1$ sample points. Figure \ref{fig:sampling} shows the results. We see that for both the population-weighted and area-weighted representations, the discrepancy between successive values of $t$ drops to around $1\%$ at around $t=20$ and remains low thereafter. 

\begin{figure}
  \begin{minipage}[c]{0.4\textwidth}
    \includegraphics[width=\textwidth]{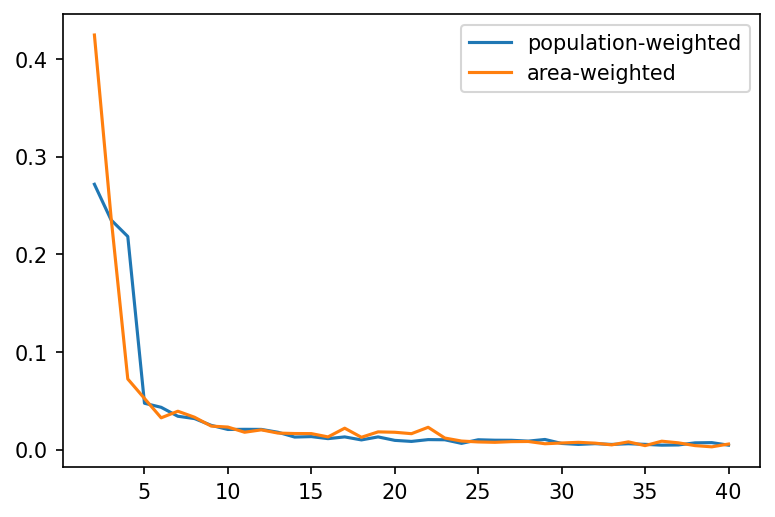}
  \end{minipage}\hfill
  \begin{minipage}[l]{0.6\textwidth}
    \caption{
    Discrepancy between using the first $t$ points and using the first $t+1$ points of the $40$-point samples to compute the barycenter for the neutral ensemble.
    } \label{fig:sampling}
  \end{minipage}
\end{figure}

\section{Computational details}\label{sec:compdetails}
For the experiments in Section \ref{sec:redistrict}, we implemented Algorithm \ref{algorithm} in Python.\footnote{Code is available at \url{https://github.com/thomasweighill/barymandering}} The (outer) $2$-descent operator $\LB_1$ is implemented using the Python Optimal Transport library \cite{flamary2021pot} (the particular function used implements the restricted version of Algorithm 2 in \cite{cuturi2014fast} discussed in Remark \ref{cuturialgorithm2}). Cleaned population and vote data was obtained from the \texttt{mggg-states} repository \cite{mggg_states}. For the neutral ensemble, we sampled every 50th plan from an ensemble of 50,000 plans generated using the ReCom algorithm \cite{deford2019recom} implemented in the Python library GerryChain~\cite{gerrychain}. The chain was constrained to generate only contiguous districts and population deviation less than 2\% from the ideal district population. To give some idea of the computational cost of the method, computing the (population-weighted) barycenter for the neutral ensemble of 1000 Congressional plans for North Carolina (represented in Figure \ref{fig:neutral}) was performed on a single core of a high performance cluster and took about 3,200 seconds (25 iterations) to complete.

\section{Supplemental and area-weighted redistricting figures}\label{sec:area}

\begin{figure}
    \centering
    \resizebox{32pc}{!}{
    \begin{tikzpicture}
    \begin{scope}
    \node at (-4, 2.2) {Location of matched districts};
    \node at (5, 2.2) {Barycenter};
     \node at (5, -2.3) {Overlaid heat maps};
    
    \node at (5,0) {\includegraphics[width=0.7\textwidth]{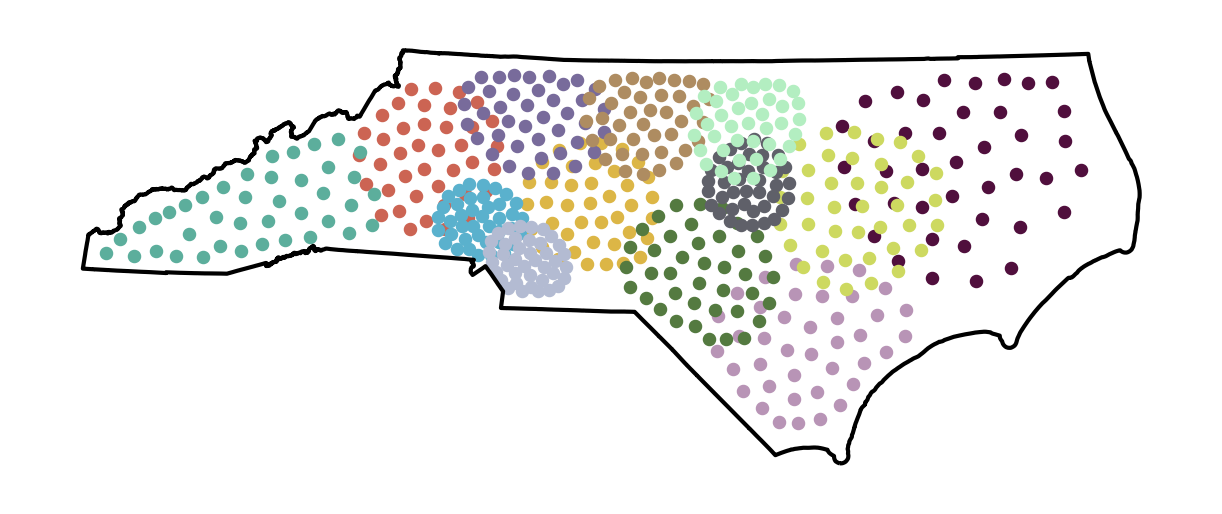}};
    \node at (5,-4.5) {\includegraphics[width=0.7\textwidth]{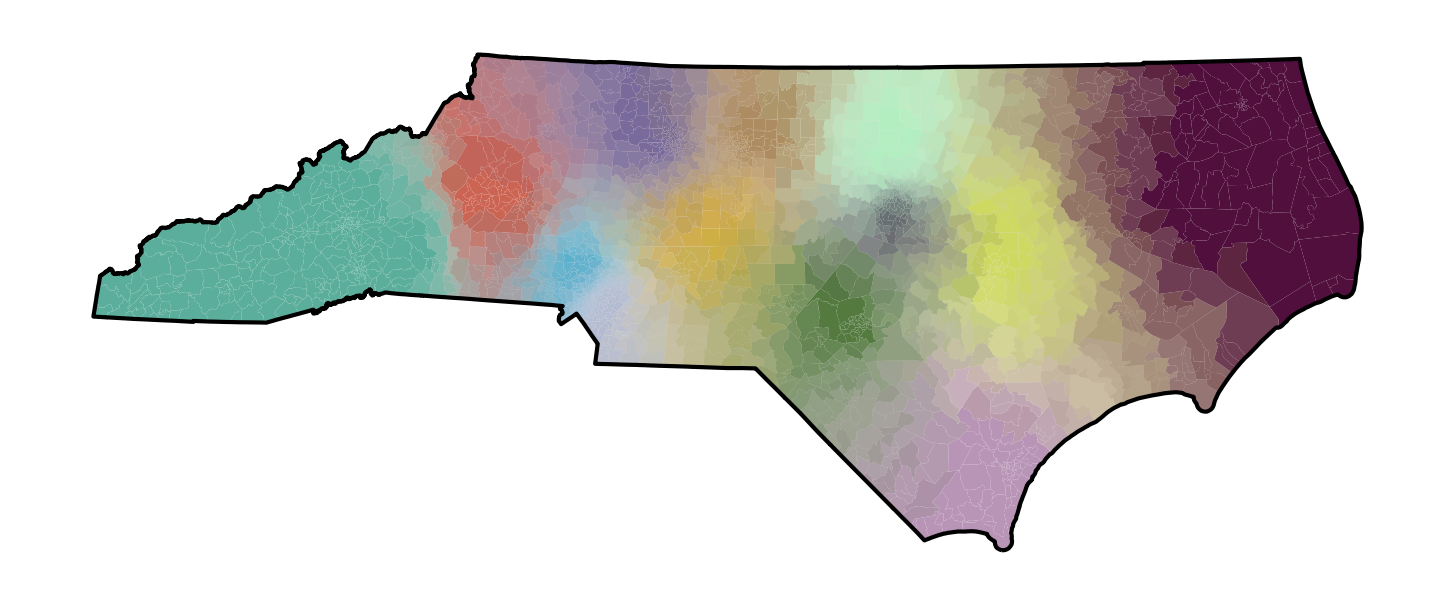}};
    
    \node at (-6,1) {\includegraphics[width=0.25\textwidth]{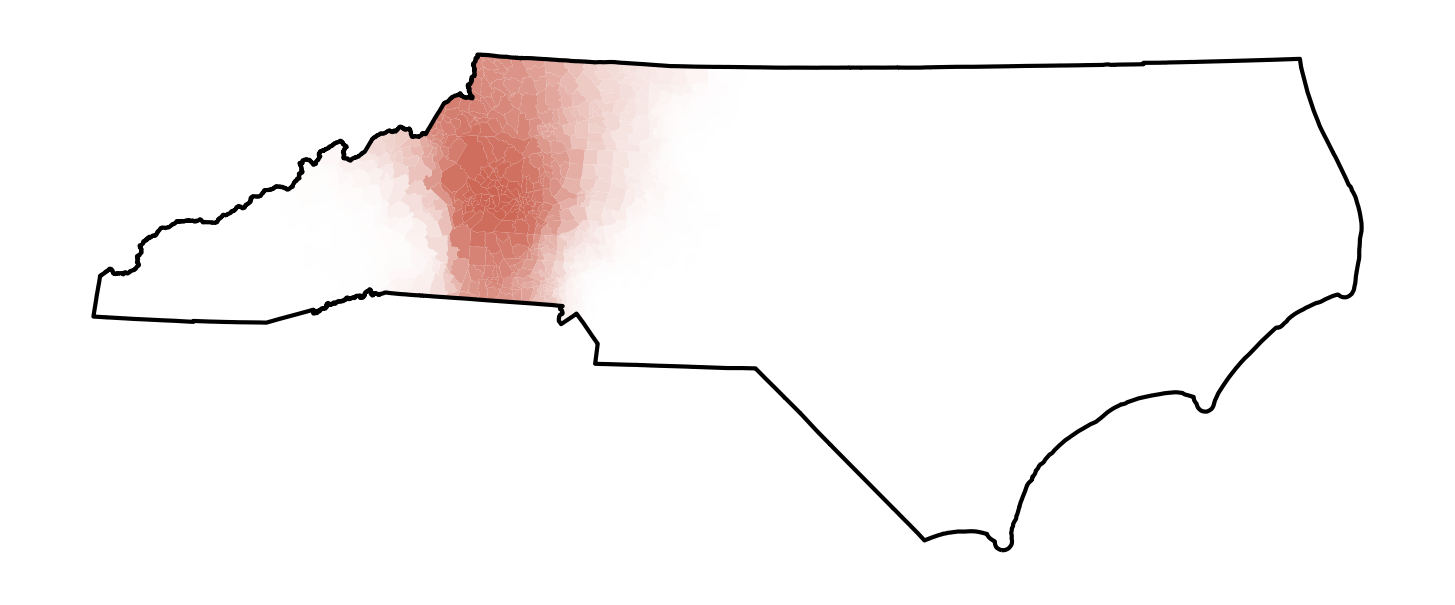}};
    \node at (-2.5,1) {\includegraphics[width=0.25\textwidth]{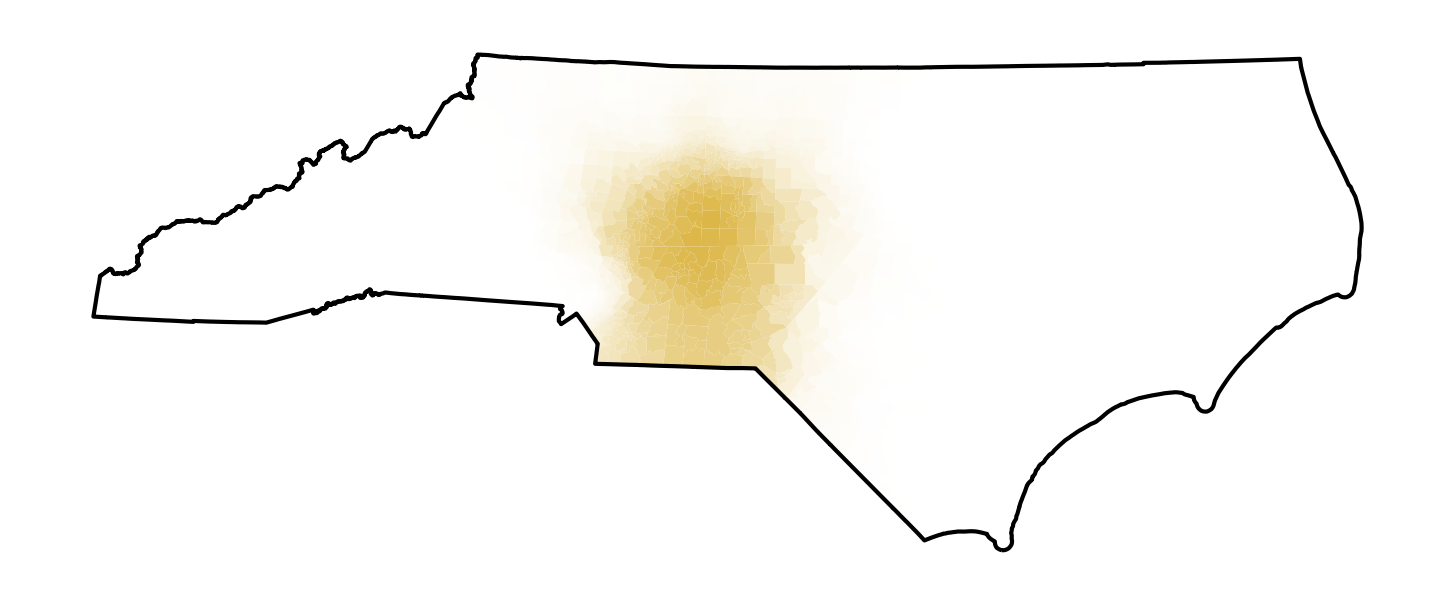}};
    \node at (-6,-0.5) {\includegraphics[width=0.25\textwidth]{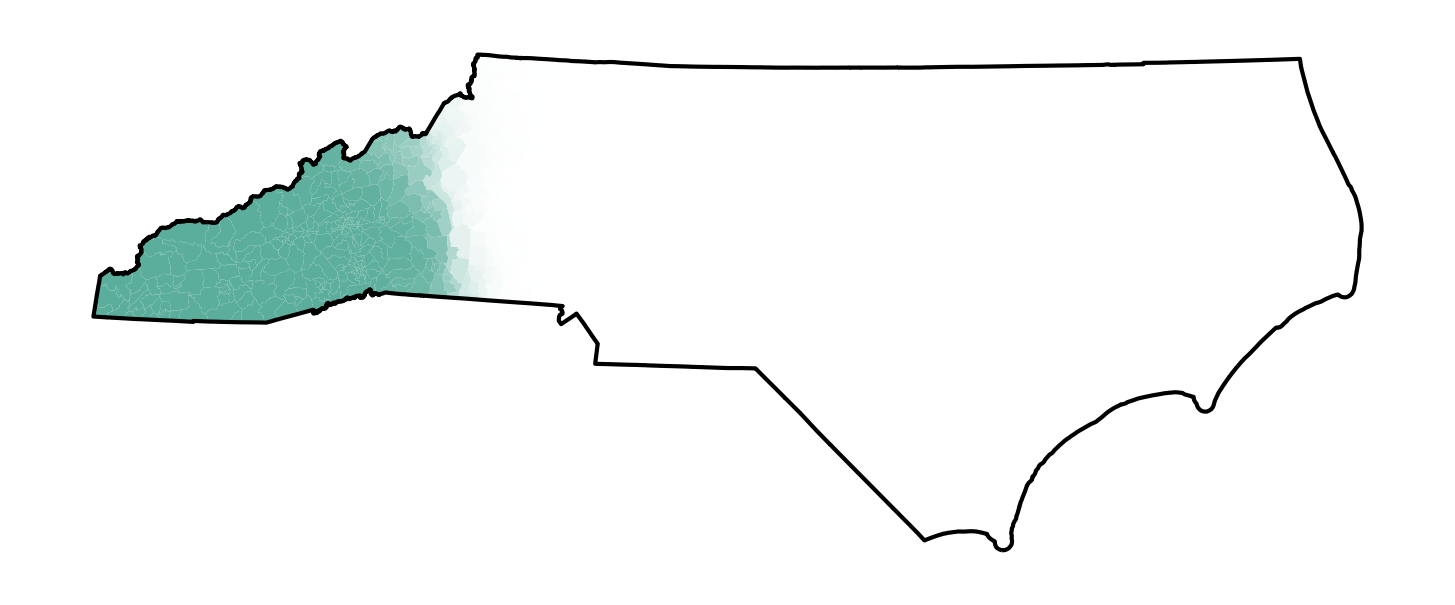}};
    \node at (-2.5,-0.5) {\includegraphics[width=0.25\textwidth]{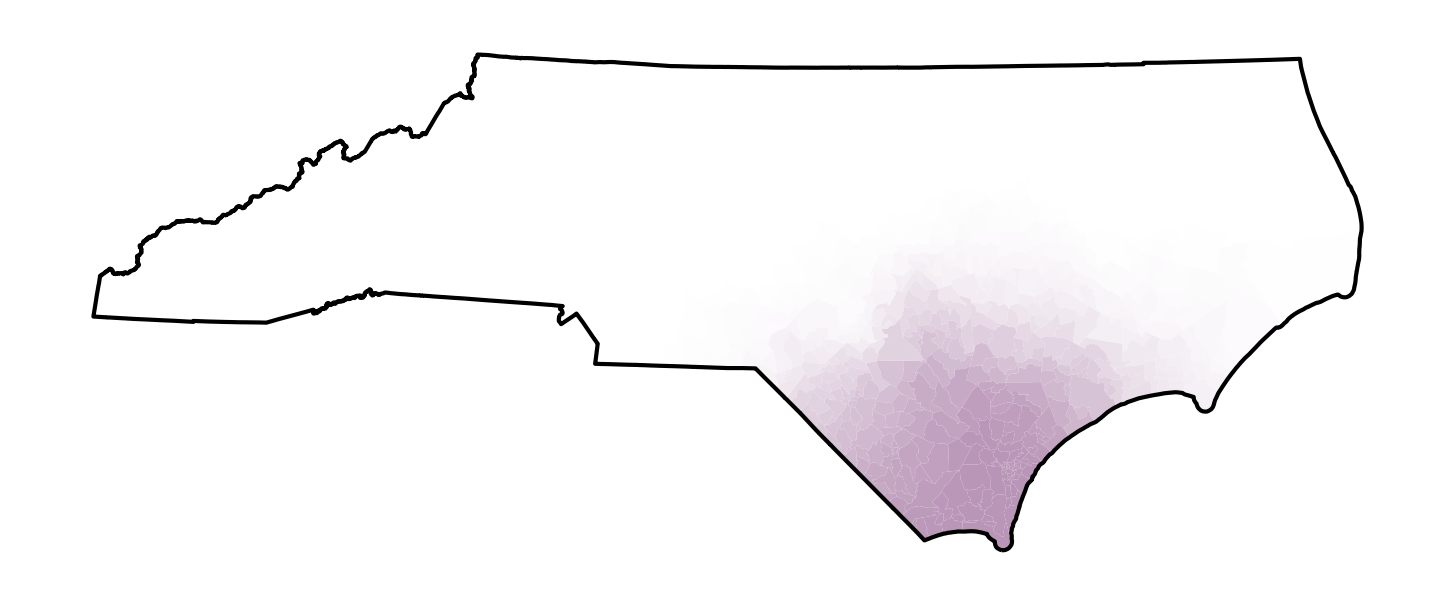}};
    \node at (-6,-2) {\includegraphics[width=0.25\textwidth]{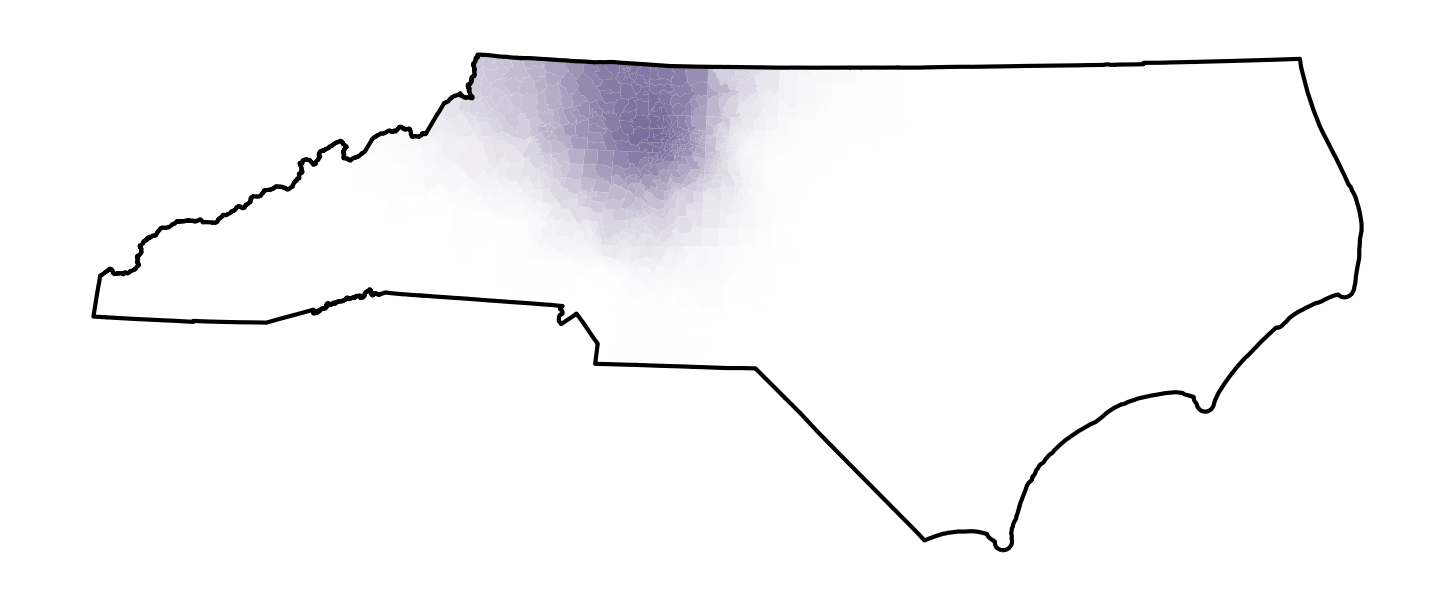}};
    \node at (-2.5,-2) {\includegraphics[width=0.25\textwidth]{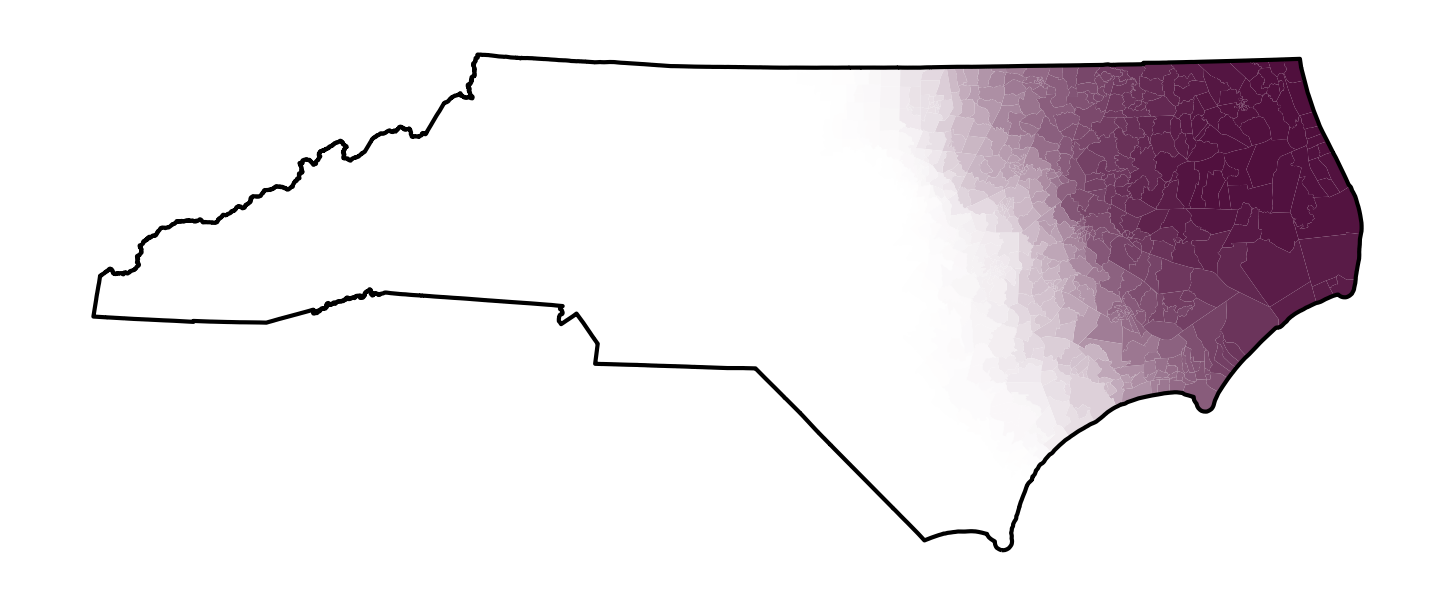}};
    \node at (-6,-3.5) {\includegraphics[width=0.25\textwidth]{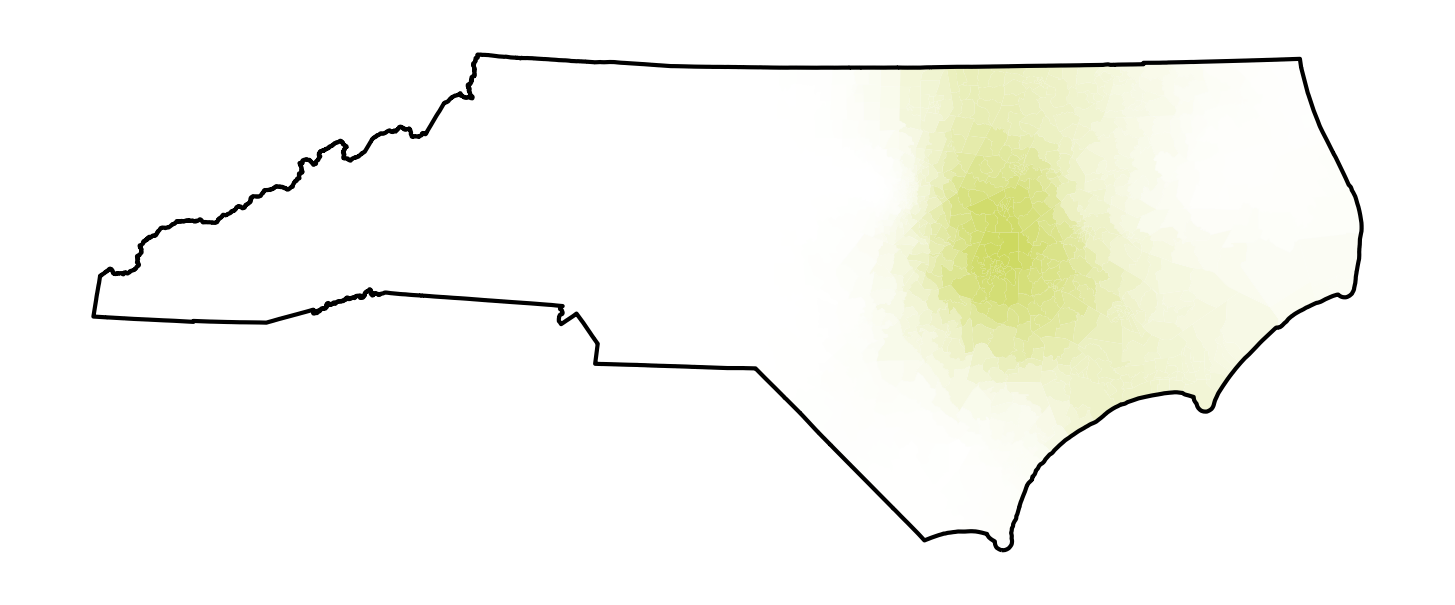}};
    \node at (-2.5,-3.5) {\includegraphics[width=0.25\textwidth]{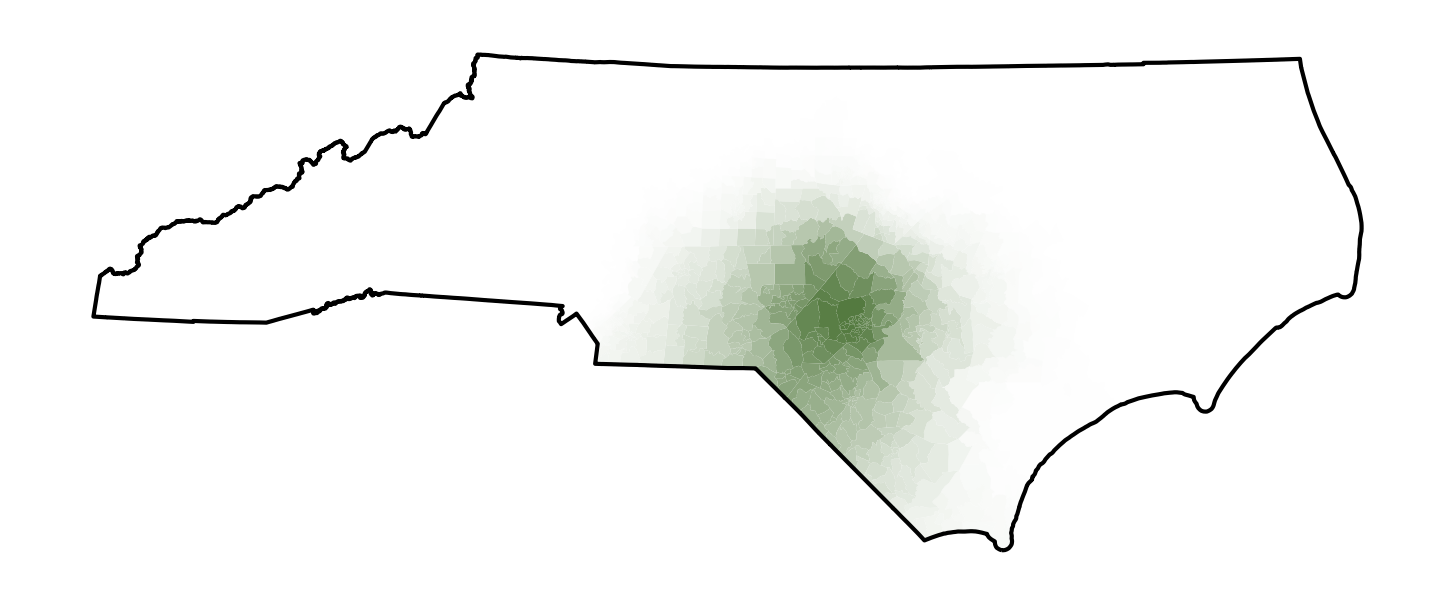}};
    \node at (-6,-5) {\includegraphics[width=0.25\textwidth]{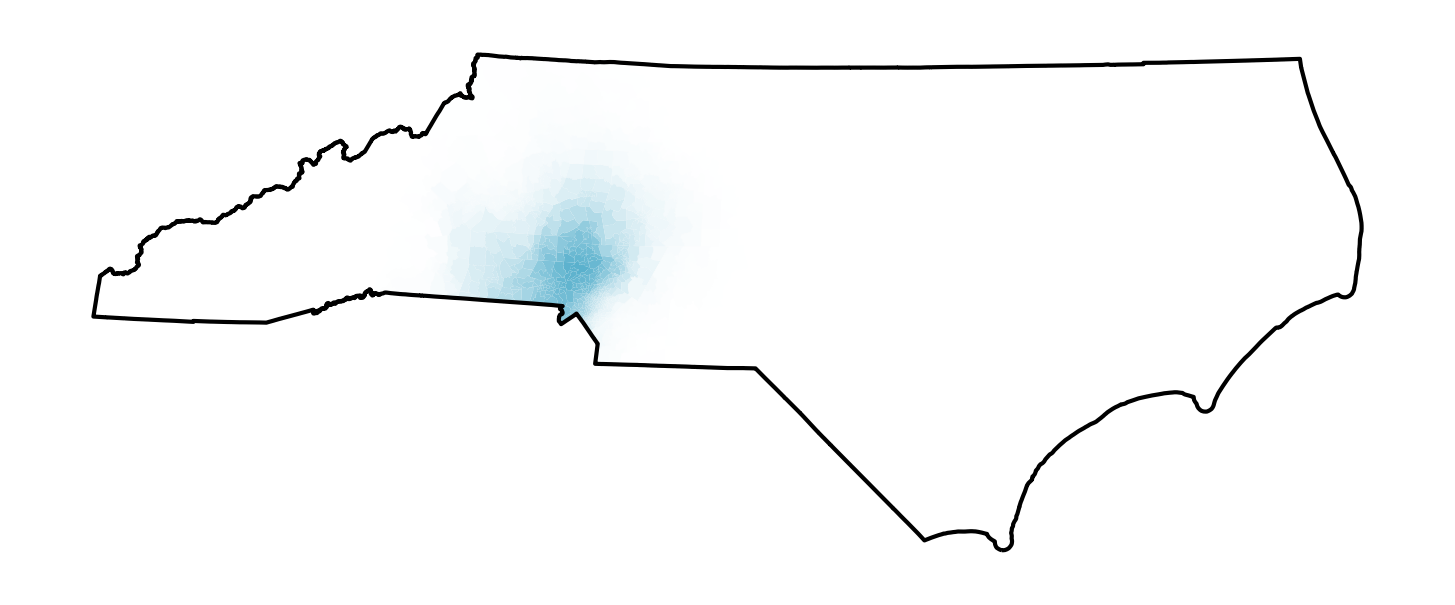}};
    \node at (-2.5,-5) {\includegraphics[width=0.25\textwidth]{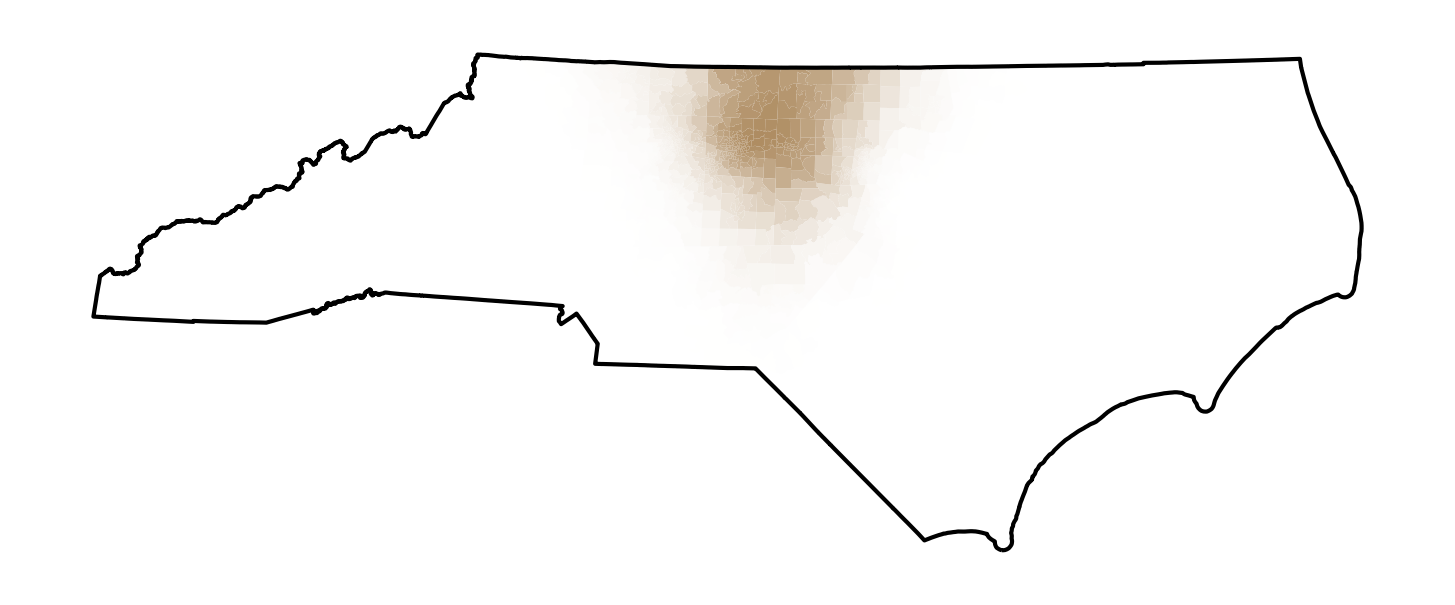}};
    \node at (-6,-6.5) {\includegraphics[width=0.25\textwidth]{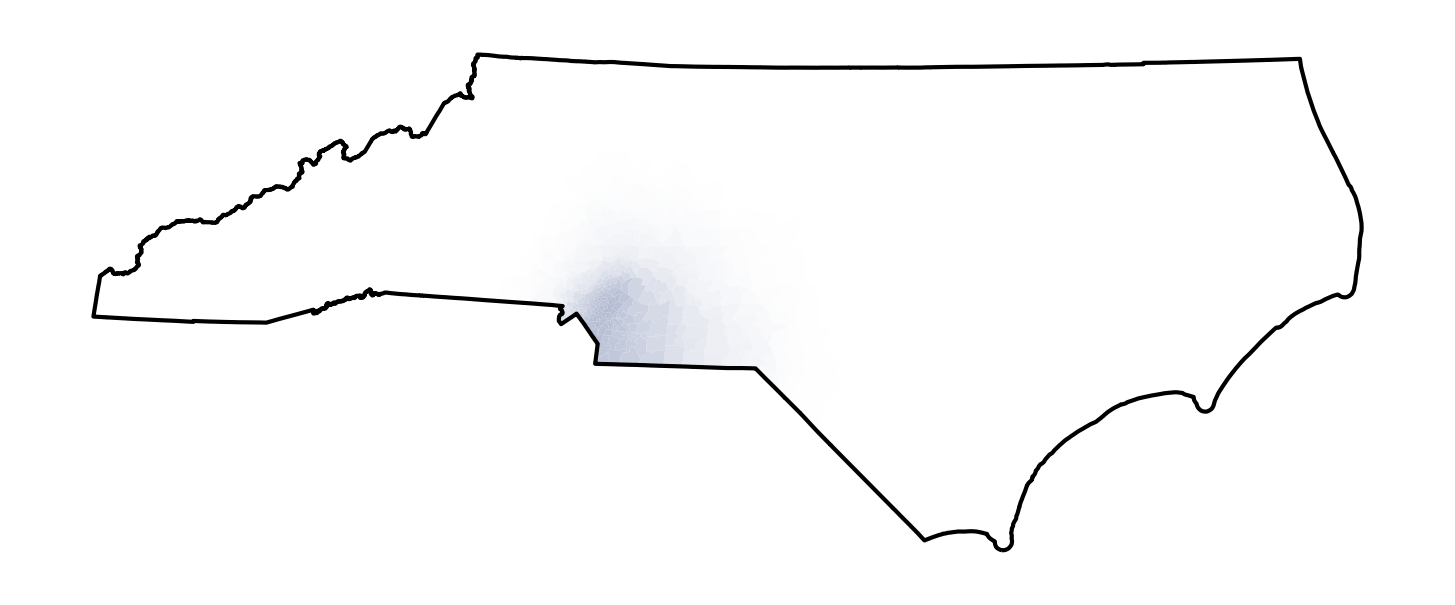}};
    \node at (-2.5,-6.5) {\includegraphics[width=0.25\textwidth]{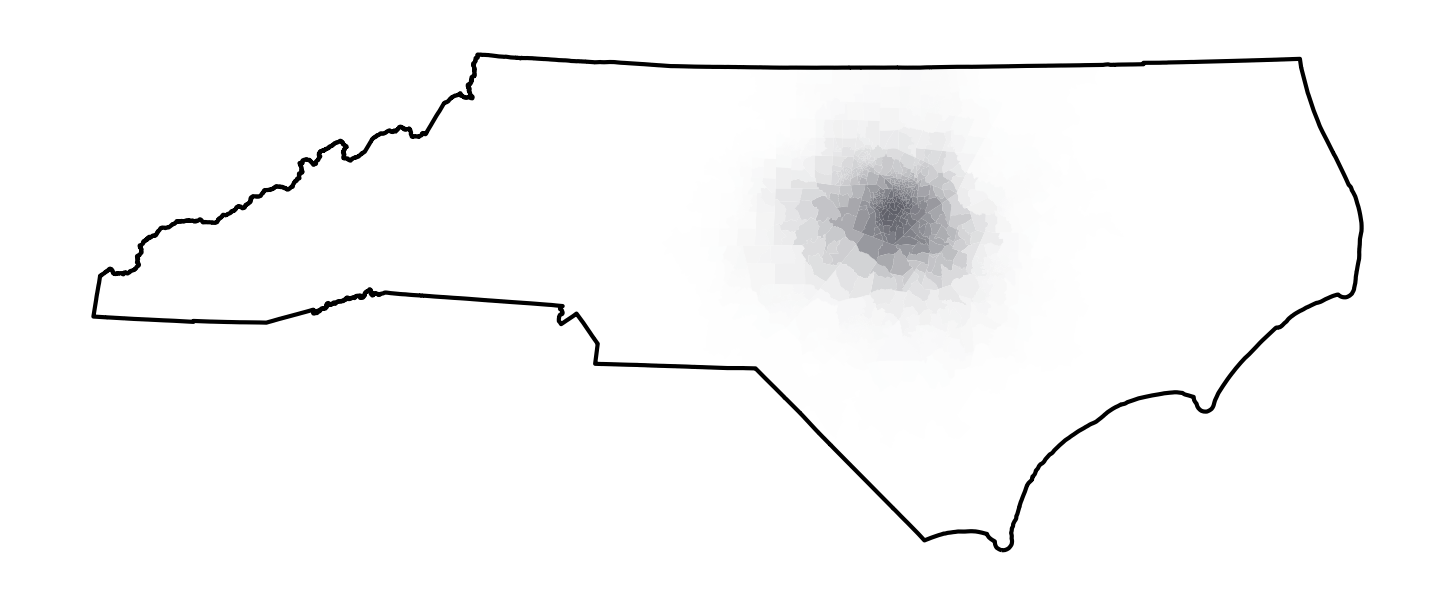}};
    \node at (-6,-8) {\includegraphics[width=0.25\textwidth]{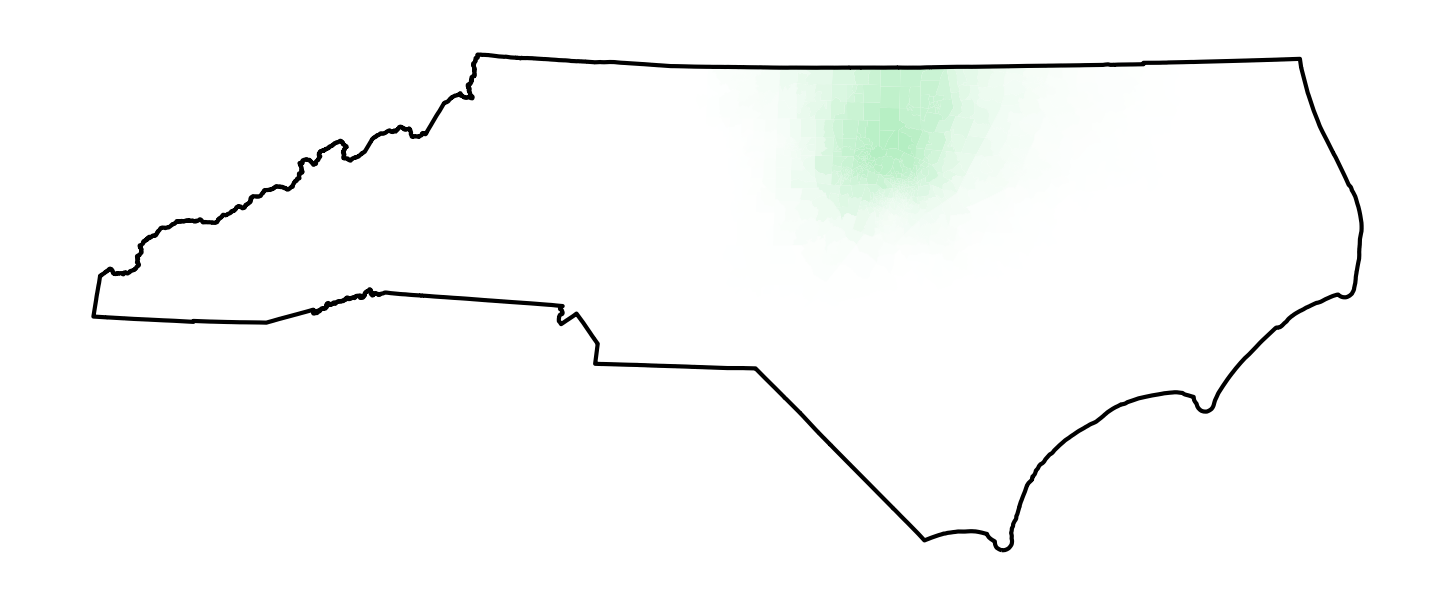}};
    \end{scope}
    
    \begin{scope}[yshift=-12cm]
    \node at (0.5,3.2) {Presidential 2012 (left) vs Presidential 2016 (right)};
    \node at (0.5,0) {\includegraphics[width=0.8\textwidth]{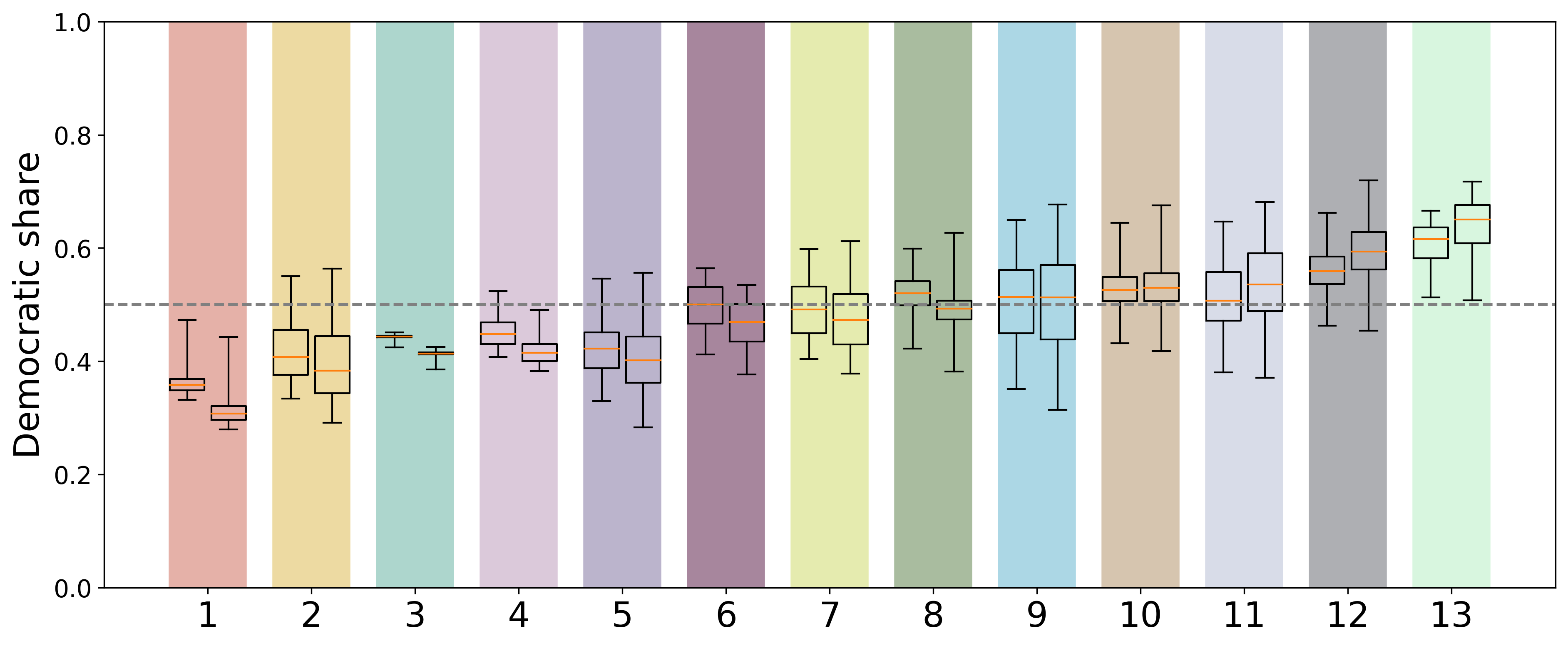}};
    \end{scope}
    \end{tikzpicture}
    }
    \caption{Barycenter for one ensemble of Congressional plans for North Carolina, with area-weighted representations used for the districts, approximated by $40$-point samples. The heat maps show the location of districts matched to each component of the barycenter. The boxplots show vote shares for the ensemble using two consecutive Presidential elections: 2012 on the left and 2016 on the right.}
    \label{fig:neutralarea}
\end{figure}

\begin{figure}
    \centering
    \begin{tikzpicture}
    \begin{scope}
    \node at (0,0) {\includegraphics[width=0.28\textwidth]{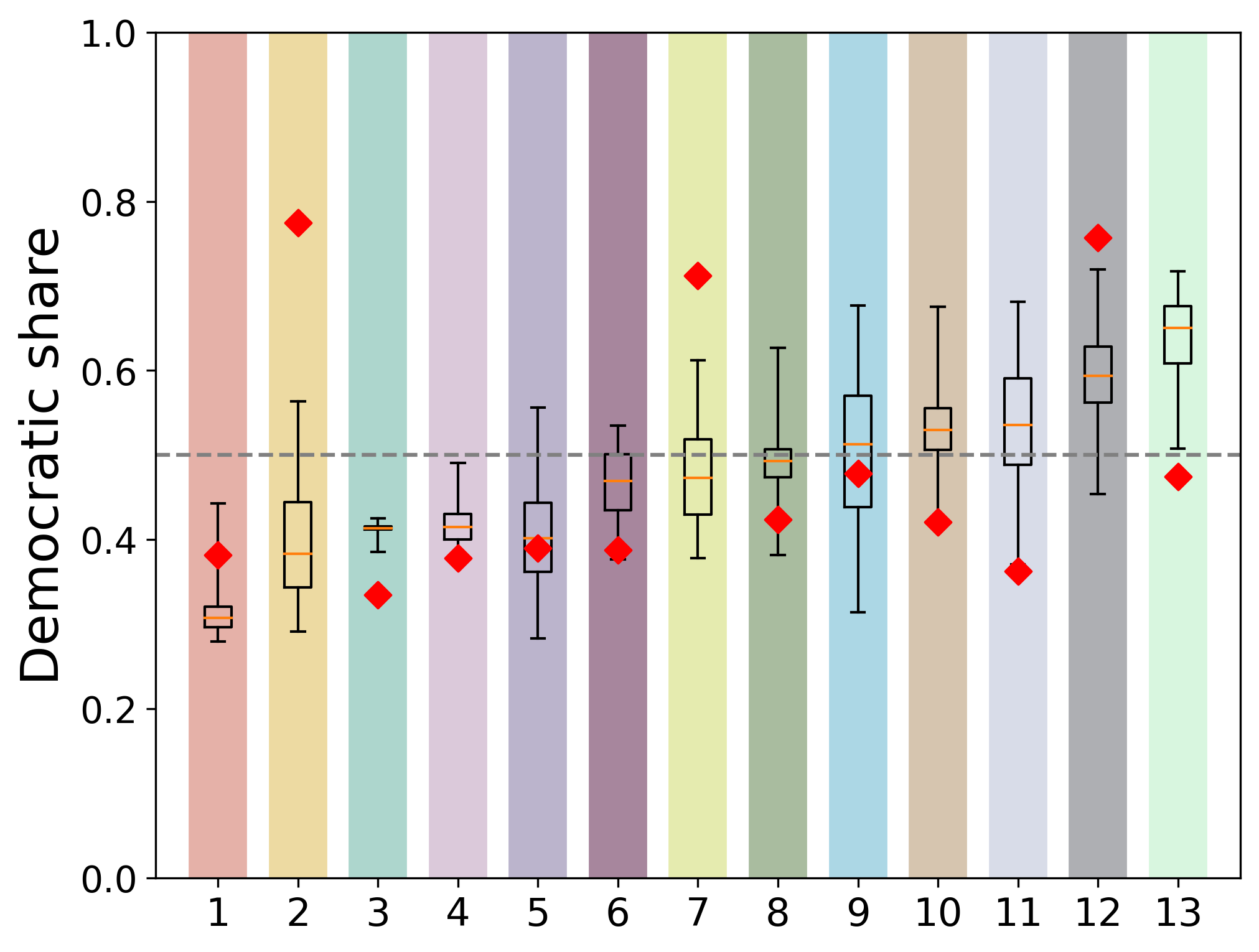}};
    \node at (3.6, 0.8) {2012};
    \node at (3.6,0) {\includegraphics[width=0.2\textwidth]{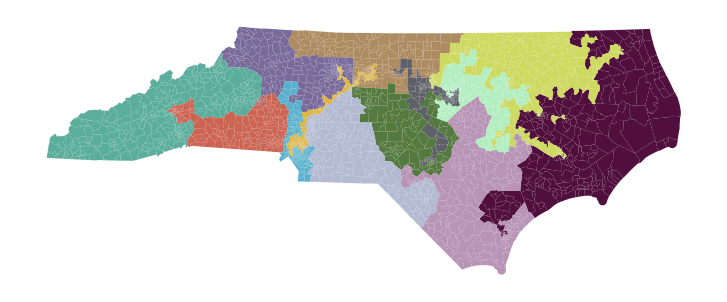}};
    \end{scope}
    
    \begin{scope}[xshift=7.4cm]
    \node at (0,0) {\includegraphics[width=0.28\textwidth]{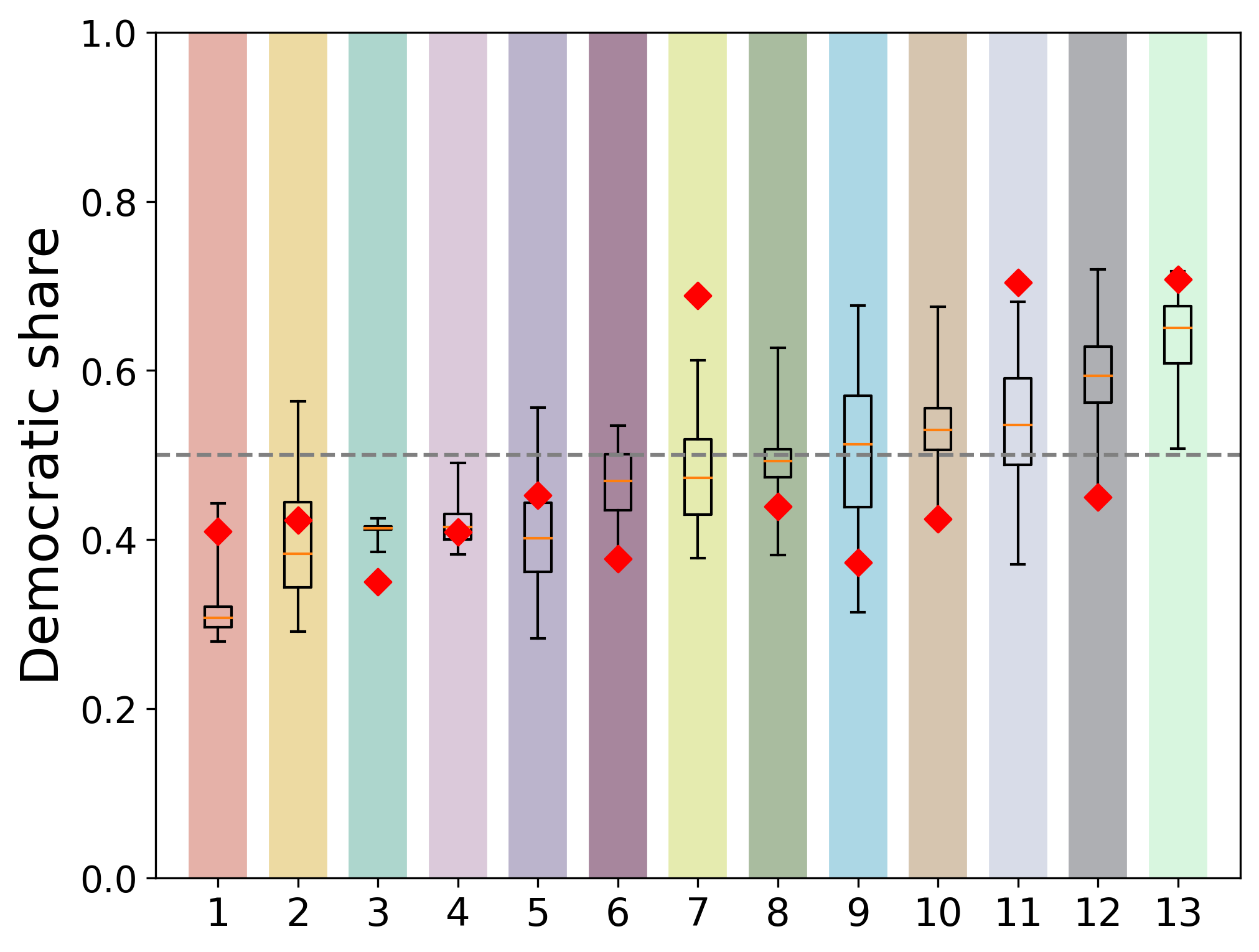}};
    \node at (3.6, 0.8) {2016};
    \node at (3.6,0) {\includegraphics[width=0.2\textwidth]{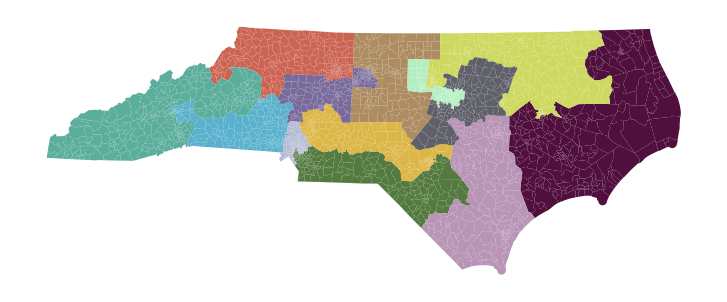}};
    \end{scope}
    
    \begin{scope}[yshift=-3.8cm]
    \node at (0,0) {\includegraphics[width=0.28\textwidth]{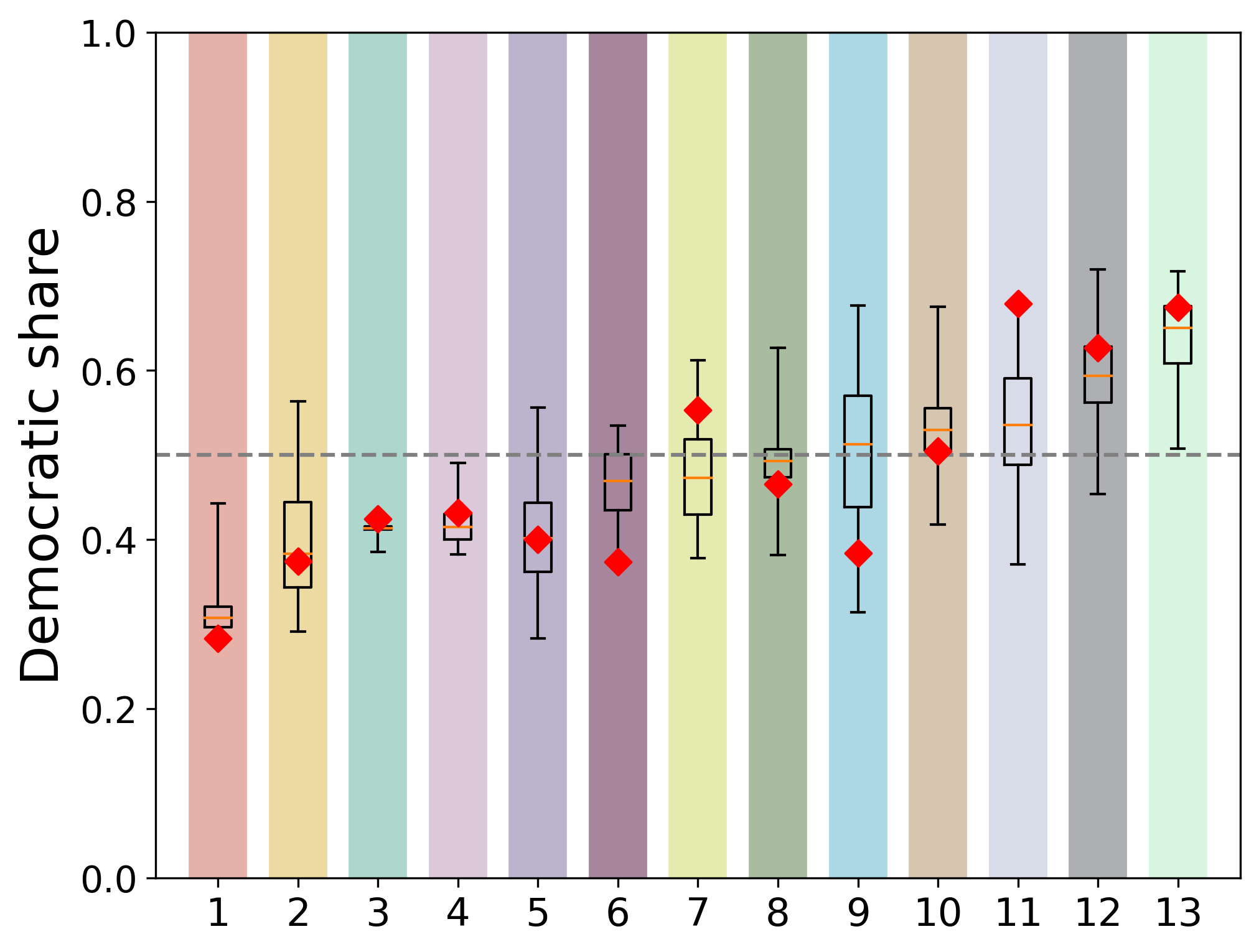}};
    \node at (3.6, 0.8) {Judges};
    \node at (3.6,0) {\includegraphics[width=0.2\textwidth]{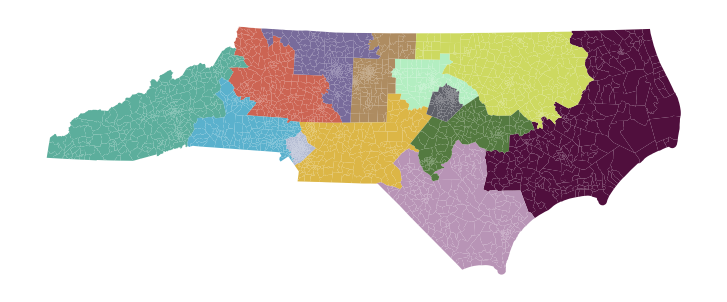}};
    \end{scope}
    
    \begin{scope}[yshift=-3.8cm, xshift=7.4cm]
    \node at (0,0) {\includegraphics[width=0.28\textwidth]{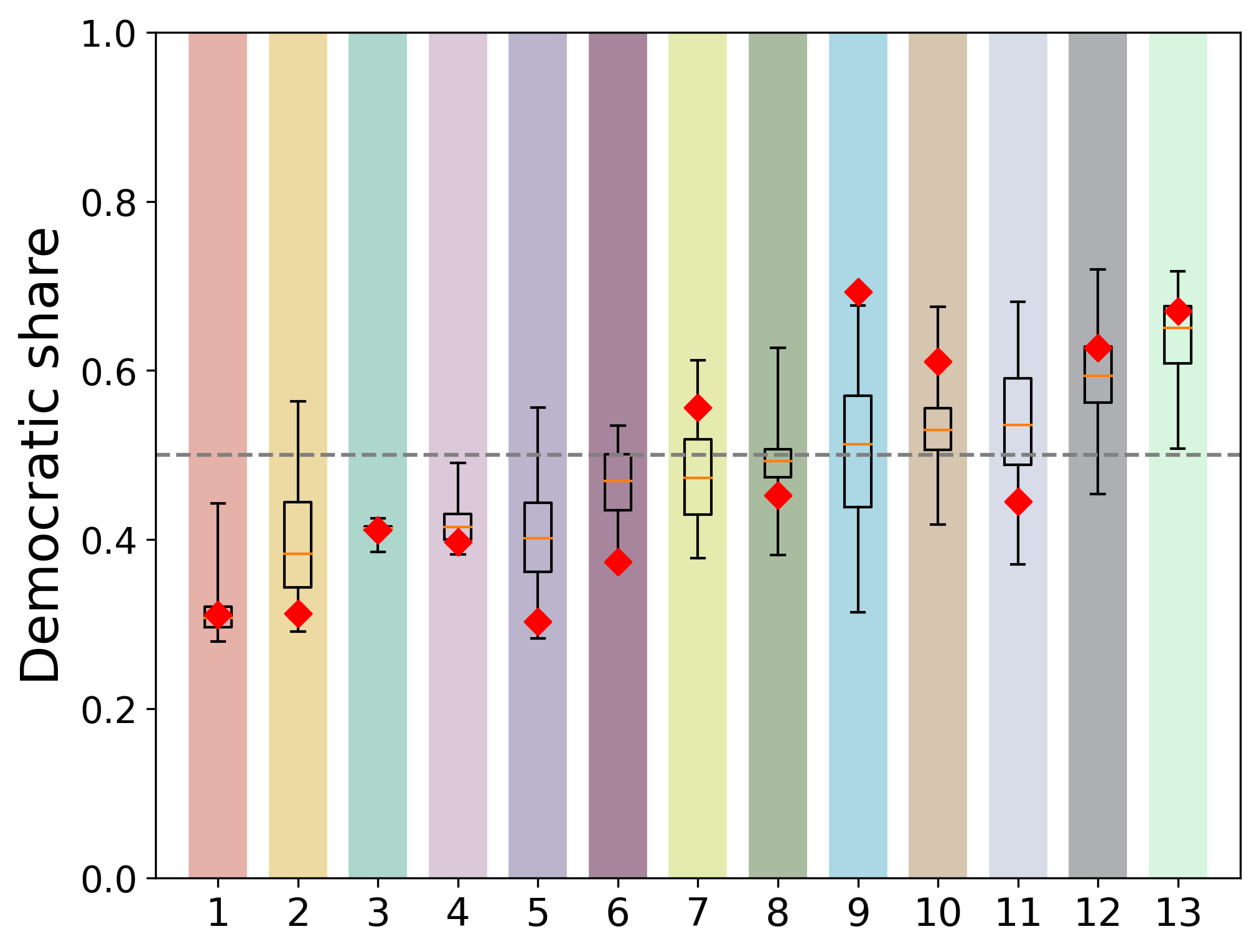}};
    \node at (3.6, 0.8) {2020};
    \node at (3.6,0) {\includegraphics[width=0.2\textwidth]{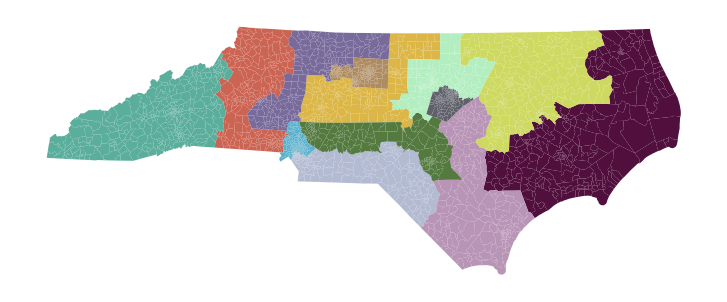}};
    \end{scope}
    
    \end{tikzpicture}
    \caption{Comparison of four enacted or proposed Congressional plans for North Carolina using votes shares from the Presidential 2016 race. Area-weighting was used. The red diamonds in each boxplot indicate the vote shares of the plan being evaluated for the district matched to that component of the barycenter. The maps on the right show the plan in question colored by a best matching to the barycenter.}\label{fig:enactedbyarea}
\end{figure}

\begin{figure}
    \centering
    \foreach \s in {AZ,CO,GA,IA,LA,MA,MD,MI,MN,NC,NE,NM,OH,OK,OR,PA,UT,VA,WI}{
        \begin{subfigure}{0.195\textwidth}
            \s 
            
            \includegraphics[width=\textwidth]{otherstates/\s_barycenter.png}
        \end{subfigure}
    }
    \caption{Area-weighted ensemble barycenters for a selection of states using precinct data from \cite{mggg_states}. Congressional apportionment from the 2020 Census was used.}
    \label{fig:otherstates_area}
\end{figure}

\begin{figure}
    \centering
    \foreach \s in {AZ,CO,GA,IA,LA,MA,MD,MI,MN,NC,NE,NM,OH,OK,OR,PA,UT,VA,WI}{
        \begin{subfigure}{0.195\textwidth}
            \s 
            
            \includegraphics[width=\textwidth]{otherstates/\s_barycenterbypop.png}
        \end{subfigure}
    }
    \caption{Population-weighted ensemble barycenters for a selection of states using precinct data from \cite{mggg_states}. Congressional apportionment from the 2020 Census was used.}
    \label{fig:otherstates_pop}
\end{figure}

\end{document}